\documentclass[11pt]{article}

\setcounter{tocdepth}{3}
\usepackage{graphicx}
\usepackage{amsmath, amssymb, color,fullpage,amsthm}
\usepackage{amsfonts}
\usepackage{amssymb}
\usepackage{mathrsfs}
\usepackage{pifont}
\usepackage{enumerate}
\usepackage{color}

\newtheorem{theorem}{Theorem}
\newtheorem{lemma}[theorem]{Lemma}
\newtheorem{claim}[theorem]{Claim}

\newtheorem{observation}{Observation}
\newtheorem{assumption}{Assumption}

\newcommand*{\be}{\begin{equation}}
\newcommand*{\ee}{\end{equation}}
\newcommand*{\bea}{\begin{eqnarray}}
\newcommand*{\eea}{\end{eqnarray}}

\def\bfm#1{\mbox{\boldmath$#1$}}

\DeclareMathOperator{\opt}{opt}

\newcounter{my}

\newcounter{my2}

\newcounter{my3}

 \pagestyle{plain}
\begin{document}

\title{The Price of Anarchy for Selfish Ring Routing is Two}
\author{Xujin Chen\thanks{Institute of Applied Mathematics, AMSS, Chinese Academy of Sciences, Beijing, China.
\texttt{\{xchen,xdhu,mawd\}@amss.ac.cn}}
\and Benjamin Doerr\thanks{Max Planck Institute for Informatics, Saarbr{\"u}cken, Germany. \texttt{\{doerr,vanstee,winzen\}@mpi-inf.mpg.de}} \and Xiaodong Hu\footnotemark[1] \and Weidong Ma\footnotemark[1]  \and Rob van Stee\footnotemark[2]  \and Carola Winzen\footnotemark[2] }
\maketitle

\begin{abstract}
We analyze the network congestion game with atomic players, asymmetric strategies, and the maximum latency among all players as social cost.
This important social cost function is much less understood than the average latency.
We show that the price of anarchy is at most two, when the network is a ring and the link latencies are linear.
Our bound is tight. This is the first sharp
 bound for the maximum latency objective.
\end{abstract}

\section{Introduction}
\label{sec:introduction}

Selfish routing is a fundamental problem in algorithmic game theory, and was one
of the first problems which were intensively studied in this field~\cite{KP99,MS01,RT02,Czu04}.
A main question in this field concerns the cost of selfishness: how much performance is
lost because agents behave selfishly, without regard for the other agents or for any global
objective function?

The established measure for this performance loss is
the price of anarchy (PoA)~\cite{KP99}. This is the worst-case ratio
between the value of a Nash equilibrium, where no player can deviate unilaterally to
improve, and the value of the optimal routing.

 Of particular interest to computer science are network congestion games, where agents choose routing paths and experience delays (latencies) depending on how much other players also use the edges on their paths.
Such games are guaranteed to admit at least one Nash equilibrium \cite{R73}.
Generally, the price of anarchy for a selfish routing problem
may depend on the network topology, the number of
players (including the \emph{non-atomic} case where an infinite number of players each controls a negligible fraction of the solution), the type of latency functions on the links, and the objective functions
of the players and of the system (the latter is often called the \emph{social cost function}).

Most of the existing research has focused on the price of anarchy for
minimizing the \emph{total} latency of all the players~\cite{rough02,monien06}.
Indeed, this measure is so standard that it is often not even mentioned
in titles or abstracts. In most cases, a symmetric setting was considered
where all players have the same source node and the same destination node,
and hence the same strategy set. \cite{ck05} and   \cite{aae05}  independently proved that the PoA of the atomic congestion game (symmetric or asymmetric) with linear latency is at most 2.5. This bound is tight. The bound grows to 2.618 for weighted demands \cite{aae05}, which is again a tight bound.
In non-atomic congestion games with linear latencies,
the PoA is at most 4/3~\cite{RT02}. This is witnessed already by two parallel links.
The same paper also extended this result to polynomial latencies.

In this work, we regard as social cost function the \emph{maximum} latency a player experiences. While this cost function was suggested already in~\cite{KP99}, it seems much less understood. For general topologies, the maximum PoA of atomic congestion games with linear latency is 2.5 in single-commodity networks (symmetric case, all player choose paths between the same pair of nodes),
but it grows to $\Theta(\sqrt k)$ in $k$-commodity networks 
(asymmetric case, $k$ players have different nodes to connect via a path) \cite{ck05}. The PoA further increases with additional restrictions to the strategy sets. \cite{glmm06} showed that when the graph consists of $n$ parallel links and each player's choice can be restricted to a particular subset of these links,  the maximum PoA lies in the interval $[n-1,n)$.

For non-atomic  selfish routing, \cite{lrtw11} showed that the PoA of symmetric games on  $n$-node networks with arbitrary continuous and non-decreasing latency functions is $n-1$, and exhibited an infinite family of asymmetric games whose PoA grows exponentially with the network
size.

\paragraph{Our setting:}
In this work, we analyze the price of anarchy of a maximum latency network congestion game for a concrete and useful network topology, namely rings.
Rings are frequently encountered in communication networks.
Seven self-healing rings form the  EuroRings network, the largest, fastest, best-connected high-speed network in Europe, spanning 25,000 km and connecting 60 cities in 18 countries. As its name suggests, the Global Ring Network for Advanced Applications Development (GLORIAD) 
\cite{gloriad} is an advanced science internet network constructed as an optical ring around the Northern Hemisphere. The global ring topology of the network provides scientists, educators and students with advanced networking tools, and enables active, daily collaboration on common problems. 
It is therefore worthwhile to study this
topology in particular. Indeed, considerable research has already
gone into studying rings, in particular in the context of designing
approximation algorithms for combinatorial optimization
problems~\cite{AnsZha08,BlKaKl01,Cheng04,ScSeWi98,Wang05}.

As in most previous work, we assume that traffic may not be split, because
this causes the problem of packet reassembly at the receiver and is therefore
generally avoided. Furthermore,
we assume that the edges (``links'') have linear latency functions.
That is, each link $e$ has a latency function
$\ell_e(x)=a_e x+b_e$, where $x$ is the number of players using link
$e$ and $a_e$ and $b_e$ are nonnegative constants.

For the problem of minimizing the maximum latency, even assuming a central authority,
the question of how to route communication requests optimally is nontrivial;
it is not known whether this problem is in $P$.
It is known for general (directed or undirected) network topologies that already the price of stability (PoS), which is the ratio of the value of
the \emph{best} Nash equilibrium to that of the optimal solution~\cite{adktwr},
is unbounded for this goal function
even for linear latency functions~\cite{ChChHu10,ChChHH11}.
However, this is not
the case for 
rings. It has been shown that for any instance on a ring, either its PoS equals 1, or its PoA is at most 6.83, giving a universal upper bound 6.83 on PoS for the selfish ring routing \cite{ChChHu10}. The same paper also gave a lower bound  of 2  on the PoA. Recently, an upper bound of $16$ on the PoA was obtained \cite{ChChHH11}.

\paragraph{Our results:}
In this paper, we show that the PoA for minimizing the maximum latency 
on rings is exactly 2. This improves  upon the previous best known upper bounds
on both the PoA and the PoS \cite{ChChHH11,ChChHu10}.
Achieving the tight bound required us to upper bound a high-dimensional nonlinear optimization problem.
Our result implies that the performance loss due to selfishness is relatively low for this problem.
Thus, for ring routing, simply allowing each agent to choose its
own path will always result in
reasonable performance. The lower bound example (see Figure \ref{fg:instance})
can be modified to give a lower bound of $2^d$ for
latency functions that are polynomials of degree at most $d$. 

\paragraph{Proof overview:}
Our proof consists of two main parts: first, we analyze
for Nash equilibria the maximum ratio
of the latency of any player to the latency of the entire ring, and
then we analyze the ratio of
the latency of the entire ring in a Nash equilibrium to the
maximum player latency in an optimal routing.
In the first part we show that this ratio
is at most roughly
$2/3$; the precise value depends on whether or not
every link of the ring is used by at least one player in the
Nash equilibrium.

For the second ratio, we begin by showing the very helpful fact
that it is sufficient to consider only instances where no player uses the same path in the Nash
routing as in the optimal routing.
For such instances,
we need to distinguish two cases. The first case deals with instances for which there exists a link 
that in the Nash equilibrium is not used by
any player. For such instances we use a structural analysis
to bound the second ratio from above by $2+2/k$, where $k$ is the number of agents in the system.

For the main case in which the paths of the players in the Nash equilibrium
cover the ring, we show that the second ratio is at most 3. We begin by using the standard technique of adding up the Nash inequalities
which state that no player can improve by deviating to its
alternative path. This gives us a constraint which must be
satisfied for any Nash equilibrium, but this does not immediately
give us an upper bound for the second ratio.
Instead, we end up with a nonlinear optimization problem: maximize the ratio
under consideration subject to the Nash constraint.
The analysis of this problem was the main technical challenge of this paper.
We use a series of modifications to reach an optimization problem with only
five variables, which, however, is still nonlinear.
It can be solved by Maple, but we also provide a formal solution.

\section{The Selfish Ring Routing Model}
\label{sec:models}

Let $\mathcal I=(R, \ell,(s_i,t_i)_{i \in [k]})$ be a selfish ring routing (SRR) instance, where $R=(V,E)$ is a ring and where for each agent $i \in [k]$ the pair $(s_i,t_i)$ denotes the source and the destination nodes of agent $i$.
We sometimes refer to the agents as \emph{players}.
For every \emph{link} $e\in E$ we denote the \emph{latency function} by
$\ell_e(x)=a_e x+b_e$, where $a_e$ and $b_e$ are nonnegative constants; without loss of generality we assume that $a_e$, $b_e$ are nonnegative integers. This is feasible since real-valued inputs can be approximated arbitrarily well by integers by scaling the input appropriately.

For any subgraph $P$ of $R$ (written as $P\subseteq R$), we slightly abuse the notation and identify $P$ with its link set $E(P)$.
If $Q$ is a path on $R$ {with end nodes $s$ and $t$}, we use $P\backslash Q$ to denote the graph  obtained from $P$ by removing {all nodes in $V(P)\cap V(Q)\setminus\{s,t\}$} (all internal nodes of $Q$ which are contained in $P$), and all links in $P \cap  Q $ (all links of $Q$ which are contained in $P$).

For any feasible routing $\pi=\{P_1, \ldots, P_k\}$, where $P_i$ is a path on $R$ between $s_i$ and $t_i$, $i=1,\ldots,k$, we
denote by $M(\pi):=\max_{i \in [k]}{\ell(P_i,\pi)}$ the maximum latency of any of the $k$ agents.
Here we abbreviate by
$\ell(P, \pi)$ the latency $$\ell(P, \pi):= \sum_{e \in P}\left({a_e |\{i \in [k] \mid e \in P_i \}| + b_e}\right)$$ of a subgraph $P \subseteq R$ in $\pi$.
We say that $\pi$ is a {\em Nash equilibrium} ({\em routing}) if no agent $i\in [k]$ can reduce its latency $\ell(P_i,\pi)$ by switching $P_i$ to the alternative path $R\backslash P_i$, provided other agents do not change their paths.

Sometimes we are only interested in the latency caused by one additional agent and we write 
$||P||_a:=\sum_{e \in P}{a_e}$.
Similarly we abbreviate $||P||_b:=\sum_{e \in P}{b_e}$.

Let $\pi^N=\{N_1,\ldots,N_k\}$ be some fixed worst Nash routing (i.e., a Nash equilibrium with maximum system latency $M(\pi^N)$), and let $\Pi^*$ be the set of optimal routings of $\mathcal{I}$.

For any $\pi =\{Q_1,\ldots, Q_k\}\in\Pi^*$,
let $$h(\pi):=|\{i\in[k]:N_i\not= Q_i\}|.$$ I.e., $h(\pi)$ is the number of agents for which
their Nash routings are not the same as their optimal routings.
We choose $\pi^*=\{Q_1,\ldots,Q_k\}\in\Pi^*$
to be an optimal routing that minimizes {$h=h(\pi^*)$}.
Without loss of generality, we assume that $\{i\in[k]:N_i\not= Q_i\}=[h]:=\{1, \ldots, h\}$. We call the agents $1, \ldots, h$ \emph{switching} agents and we refer to the agents in $[k]\backslash [h]$ as \emph{non-switching} ones.

For brevity, we write $\ell^*(P):=\ell(P, \pi^*)$ and $\ell^N(P):=\ell(P, \pi^N)$.
Abusing notation, for any link $e \in R$, we set
$$\pi^*(e):=|\{i\in [h] \mid e \in Q_i\}|,$$
the number of \emph{switching} (!) players whose optimal paths traverse $e$.
Analogously, $\pi^N(e):=|\{i\in [h] \mid e \in N_i\}|$.

\section{Main Result and Outline of the Proof}
\label{sec:outline}

The purpose of this paper is the proof of the following statement.
\begin{theorem}
\label{poa}
The price of anarchy for selfish ring routing with linear latencies is~$2$.
\end{theorem}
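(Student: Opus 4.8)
The plan is to establish the bound by combining two ratios, exactly as outlined in the proof overview. Writing $L^N := \ell^N(R)$ for the total latency of the ring in the worst Nash routing $\pi^N$, and $M^* := M(\pi^*)$ for the optimal maximum latency, I would prove (i) $M(\pi^N) \le \tfrac23 L^N$ (with a slightly better constant when some link is unused in $\pi^N$), and (ii) $L^N \le 3 M^*$. Multiplying gives $M(\pi^N) \le 2 M^*$, which is the theorem; tightness comes from the lower-bound instance in Figure~\ref{fg:instance}. For part (i), fix a player $i$ with $\ell^N(N_i) = M(\pi^N)$. The Nash condition says $i$ cannot gain by switching to $R \setminus N_i$, i.e. $\ell^N(N_i) \le \ell^N(R \setminus N_i) + \|R \setminus N_i\|_a$ (the extra $\|R\setminus N_i\|_a$ accounts for the congestion player $i$ itself would add on the complementary arc). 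Since $N_i$ and $R\setminus N_i$ partition the links of the ring, adding $\ell^N(N_i)$ to both sides and using $\|R\setminus N_i\|_a \le \ell^N(R\setminus N_i) \le \ell^N(N_i)$ in the usual way yields a bound of the form $\ell^N(N_i) \le \tfrac23 L^N$; one has to be a little careful about the $b_e$ terms and about whether the congested arc carries at least one other player, which is where the improvement to roughly $2/3$ (versus a weaker constant) and the case distinction on fully-covered rings enter.

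For part (ii) I would first invoke the reduction already announced in the overview: it suffices to treat instances in which no player routes the same path in $\pi^N$ as in $\pi^*$, i.e. $h = k$. The reason is that a common-path player contributes identically to both routings and can be "peeled off" without decreasing the ratio $L^N / M^*$ — formally one argues that from a worst instance with $h < k$ one obtains an instance with strictly smaller $h$ and no smaller ratio, contradicting the minimality of $h = h(\pi^*)$. Then split into two cases. If some link $e_0$ is used by no player in $\pi^N$, then $\pi^N$ effectively routes all $k$ players on the path $R \setminus e_0$, which is a line; a direct structural argument (each switching player's Nash path and optimal path together almost wind around the ring) bounds $L^N$ by $(2 + 2/k) M^* \le 3 M^*$ for $k \ge 2$ (and $k=1$ is trivial).

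The main case, and the main obstacle, is when the Nash paths $N_1, \dots, N_k$ cover every link of the ring. Here I would add up the $k$ Nash inequalities $\ell^N(N_i) \le \ell^N(R \setminus N_i) + \|R \setminus N_i\|_a$ for $i \in [k]$. Using that the ring is covered, each link $e$ is counted in the left side with multiplicity $\pi^N(e)$ and the right side is controlled by the complementary multiplicities; after collecting terms this produces a single inequality — the "Nash constraint" — relating the quantities $\sum_e a_e \pi^N(e)$, $\sum_e a_e \pi^N(e)^2$, $\|R\|_a$, $\|R\|_b$, and $\ell^N(R)$. The goal $L^N \le 3 M^*$ does not follow from this constraint directly, so the task becomes: maximize $L^N / M^*$ subject to the Nash constraint and the combinatorial constraints linking $\pi^N$, $\pi^*$, and the ring structure. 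I would then perform the sequence of reductions the authors promise — normalizing $b_e$, merging links with equal congestion pattern, arguing that in the worst case each player's optimal path is the complement of its Nash path, and so on — until the problem has only five real variables (roughly: the $a$- and $b$-weights on the few distinct "arc types" of the ring, plus the common congestion level). That residual five-variable nonlinear program I would bound by hand: take partial derivatives, show the maximum lies on the boundary of the feasible region, and check the finitely many boundary faces, arriving at the value $3$. The delicate points will be (a) handling the integrality/combinatorial side constraints correctly when passing to a continuous relaxation, and (b) the boundary case analysis of the final nonlinear program, which is exactly why the authors remark that although Maple solves it they also supply a formal solution.
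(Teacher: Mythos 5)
Your overall skeleton (the two-ratio decomposition, the peeling reduction to $h=k$, the covering/non-covering split, and the reduction of the covering case to a low-dimensional nonlinear program obtained by summing the Nash inequalities) is the same as the paper's, and for covering equilibria your plan matches Lemma~\ref{lem:23} and Section~\ref{sec:covering} step for step. However, two of your steps, as stated, have genuine gaps. The first concerns the claim $M(\pi^N)\le\frac{2}{3}\ell^N(R)$ ``with a slightly better constant when some link is unused in $\pi^N$.'' This is backwards: the $2/3$ bound needs the covering hypothesis, because the key step $||R\backslash N_i||_a\le\ell^N(R\backslash N_i)$ is exactly what breaks on links carrying no Nash player, and on non-covering instances the ratio $M(\pi^N)/\ell^N(R)$ can be close to $1$ (take the complement of the maximum-latency player's path to consist mostly of unused links with large $a_e$). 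The paper compensates differently for non-covering instances: Lemma~\ref{lem:uncoveredalpha} uses that Nash-unused links satisfy $\pi^*(e)\ge h$ to get $M(\pi^N)\le\frac13\bigl(2\alpha+\frac1h\bigr)M(\pi^*)$, which together with $\alpha\le 2+\frac2h$ (Lemma~\ref{lem:ringnotcovered}) gives $\frac43+\frac5{3h}$, and this is at most $2$ only for $h\ge3$; the cases $h=1$ and $h=2$ (including $\ell^N(R)/M(\pi^*)\le3$ for covering equilibria with $h\le2$) require the separate structural arguments of Section~\ref{sec:h=12}. Your ``multiply $2/3$ by $3$'' arithmetic therefore only closes the covering case, and your treatment of the non-covering case ($L^N\le(2+2/k)M^*$ together with the invalid $2/3$ step, plus ``$k=1$ is trivial'') leaves $h\in\{1,2\}$ open.

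The second gap is the justification of the reduction to $h=k$. Appealing to the minimality of $h(\pi^*)$ cannot work: that minimality is over the choice of an optimal routing for the \emph{same} instance and cannot remove a player, and simply deleting a common-path player changes the congestion term $a_e$ it imposes on everyone sharing its links, so neither the Nash property nor the values $\ell^N(R)$ and $M(\pi^*)$ are preserved and the ratio can move either way. The paper's Lemma~\ref{lem:singular} deletes the player but adds $a_e$ to the constant term of every link on its path, which keeps all experienced latencies, hence the Nash property and $\ell^N(R)$, unchanged while not increasing the optimum; crucially this preserves only the ratio $\ell^N(R)/M(\pi^*)$, not $M(\pi^N)/M(\pi^*)$, which is why the singular case (one remaining non-switching player attaining the maximum, Assumption~\ref{ass1}) must be tracked and why the second ratio is the right quantity to reduce. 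Finally, be aware that the formal solution of the residual nonlinear program is not a routine boundary check: the paper needs a centering argument (Lemma~\ref{adj}) to force at most two consecutive nonzero $D_i$, a KKT analysis for $h\ge7$, explicit tables for $h\in\{3,4,6\}$, and a separate structural argument for $h=5$ (Section~\ref{adx5}), so your plan would have to supply comparable case analyses.
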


As mentioned in the introduction, a simple example for which the price of anarchy is two has been given already in~\cite{ChChHu10}. This is the example given in Figure~\ref{fg:instance}. As is easy to verify, $M(\pi^*)=1$ and $M(\pi^N)=~2$.
\begin{figure}[htpb]
\centerline{\includegraphics[scale=0.64]{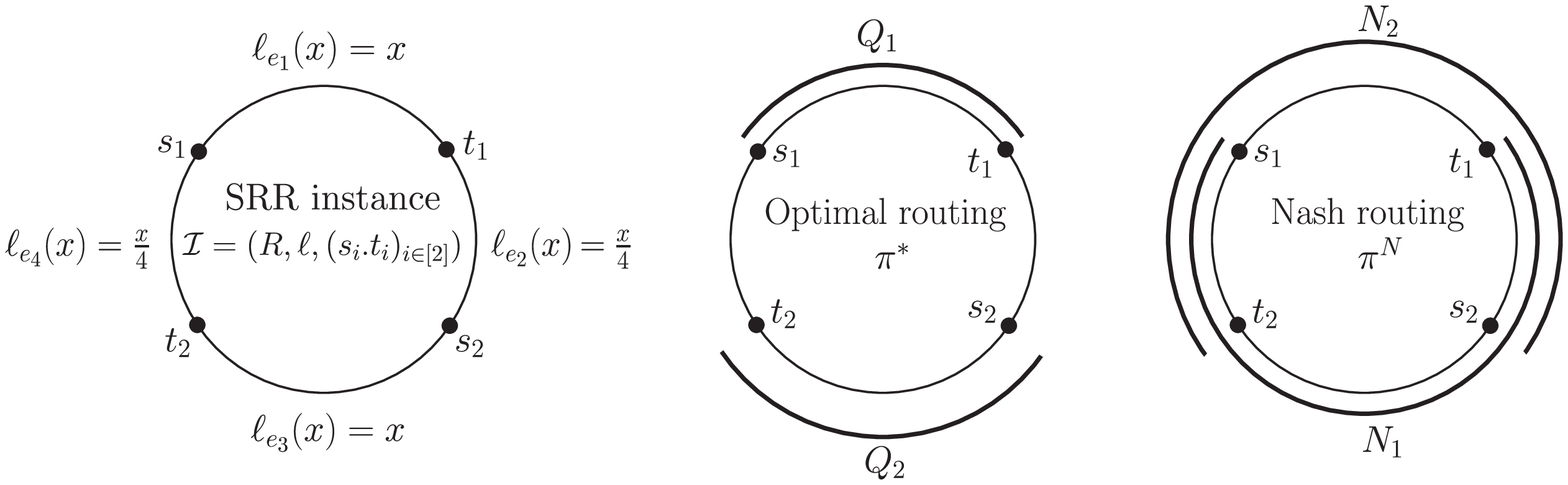}}
\caption{ \label{fg:instance} A 2-player  SRR instance with PoA = 2. }
\end{figure}

Hence, our result is tight. We can resort to proving the upper bound in Theorem~\ref{poa}. That is, we need to show that for all SRR instances $\mathcal{I}$ the ratio $M(\pi^N)/M(\pi^*)$ is at most two. The main steps are as follows.
\begin{enumerate}
	\item We begin by restricting the set of Nash routings we need to consider.
We show that we can assume without loss of generality that in $\pi^N$
there is at most \emph{one} player that uses the same path as in  $\pi^*$, i.e., $h\ge k-1$ (Section~\ref{sec:relationkh}).
We call the case where there is such a player the \emph{singular} case; if there is no such a player,
we are in the \emph{nonsingular} case.
	\item We say that the Nash equilibrium $\pi^N$ is a \emph{covering} equilibrium if the Nash paths of the switching agents $1, \ldots, h$ cover the ring, i.e., if $\cup_{i \in [h]}N_i = R$.
For any {non-covering equilibrium}, we use a structural analysis of $\pi^N$ to
show (Section \ref{sec:uncovered})
that the PoA is less than two for $h \geq 3$. 
	\item We proceed by showing (Lemma~\ref{lem:23}) that for every {covering equilibrium},
the ratio $M(\pi^N)/\ell^N(R)$ is at most  $2/3$.
	\item Finally, in the remainder of Section~\ref{sec:covering}, we show that $\ell^N(R)/M(\pi^*,I)\le3$ for any covering equilibrium $\pi^N$.
This is the main part of the proof. 
Combining this with the third statement concludes the proof of Theorem~\ref{poa} for covering equilibria.
\end{enumerate}

Some specific cases with small values of $h$ need to be handled separately. 
Our proof needs the following technical lemma which is true for both covering and  non-covering equilibria.
It shows that any two Nash paths of {agents} that use different paths
in $\pi^N$ and in $\pi^*$ share at least one common link. 

\begin{lemma}
\label{lem:twopaths}
For all $i,j \in [h]$, $N_i$ and $N_j$ are not  link-disjoint.
\end{lemma}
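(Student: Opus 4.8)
My plan is to derive a contradiction with the defining property of $\pi^*$ (optimal, and minimizing $h$ among optimal routings). The case $i=j$ is trivial: a switching agent has $s_i\neq t_i$, so $N_i$ is a non-empty arc and is certainly not link-disjoint from itself. So assume $i\neq j$ and, for contradiction, that $N_i$ and $N_j$ share no link. The starting point is that on a ring there are exactly two $s_i$--$t_i$ paths, namely $N_i$ and $R\backslash N_i$; since agent $i$ is switching we have $Q_i=R\backslash N_i$, and likewise $Q_j=R\backslash N_j$. Link-disjointness of $N_i,N_j$ then gives the inclusions $N_i\subseteq Q_j$ and $N_j\subseteq Q_i$, as well as $Q_i\cup Q_j=R$, so that $Q_i\cap Q_j=E\backslash(N_i\cup N_j)$.

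Next I would build the routing $\tilde\pi$ from $\pi^*$ by rerouting agent $i$ onto $N_i$ and agent $j$ onto $N_j$, keeping all other agents fixed, and compare link loads with $\pi^*$. A link of $N_i$ lies in $Q_j$ but in neither $Q_i$ nor $N_j$, so among $\{i,j\}$ it carries one unit of load both in $\pi^*$ (from $j$) and in $\tilde\pi$ (from $i$): its load is unchanged. By the symmetric argument, loads are unchanged on every link of $N_j$. A link of $Q_i\cap Q_j$ carries load from both $i$ and $j$ in $\pi^*$ and from neither in $\tilde\pi$, so its load drops by $2$ and its latency by $2a_e\ge 0$. No other link or agent is affected, so every link latency in $\tilde\pi$ is at most its value in $\pi^*$.

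From this I would read off $M(\tilde\pi)\le M(\pi^*)$. For $m\notin\{i,j\}$ the path is unchanged and the latencies along it did not increase, so $\ell(Q_m,\tilde\pi)\le\ell^*(Q_m)\le M(\pi^*)$. For agent $i$: loads on $N_i$ are the same in $\tilde\pi$ as in $\pi^*$, hence $\ell(N_i,\tilde\pi)=\ell^*(N_i)$, and since $N_i\subseteq Q_j$ and all latencies are nonnegative, $\ell^*(N_i)\le\ell^*(Q_j)\le M(\pi^*)$; agent $j$ is handled symmetrically using $N_j\subseteq Q_i$. Therefore $\tilde\pi$ is also optimal. But agents $i$ and $j$ now use their Nash paths, while no other switching agent was touched, so $h(\tilde\pi)=h-2<h$, contradicting the minimality of $h$.

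The step I expect to be the crux is the inequality $\ell(N_i,\tilde\pi)\le M(\pi^*)$: the arc $N_i$ need not be anyone's path in $\pi^*$, so $M(\pi^*)$ does not bound $\ell^*(N_i)$ on its own --- it is exactly the inclusion $N_i\subseteq Q_j$, forced by the assumed link-disjointness, that lets us charge $\ell^*(N_i)$ to agent $j$'s optimal latency. The rest is bookkeeping of link loads; note in particular that this argument never invokes the Nash property of $\pi^N$, only the optimality and $h$-minimality of $\pi^*$.
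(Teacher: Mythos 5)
Your proof is correct and follows essentially the same argument as the paper: reroute agents $i$ and $j$ onto their Nash paths within $\pi^*$, observe that no link load increases (loads drop by $2$ on $Q_i\cap Q_j$ and stay equal elsewhere), conclude the new routing is still optimal, and contradict the minimality of $h$. Your explicit justification that the rerouted agents' latencies stay below $M(\pi^*)$ via $N_i\subseteq Q_j$ and $N_j\subseteq Q_i$ is a welcome elaboration of a step the paper covers only with the terse remark that $a_e\ge 0$ yields $M(\pi')\le M(\pi^*)$.
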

\begin{proof}
Assume there exist two agents $i, j \in [h]$ such that $N_i$ and $ N_j$ have no link
 in common. Hence their complements, the optimal paths $Q_i$ and $Q_j$
jointly cover the entire ring, that is, $ Q_i \cup  Q_j =R$.

Consider the routing $\pi'$ which is exactly the same as $\pi^*$, except for
these two agents who use their Nash paths $N_i, N_j$ instead.
For any link $e\in  Q_i\cap  Q_j $ we have $\pi'(e)=\pi^*(e)-2$, and for every link $e\in ( Q_i\backslash  Q_j)\cup ( Q_j\backslash  Q_i)$ the number of agents on this link does not change, i.e.,  $\pi'(e)=\pi^*(e)$.
Since $a_e\geq 0$ for all $e \in E$, this yields $M(\pi')\le M(\pi^*)$.
Hence, $\pi'\in \Pi^*$.
But we also have $h(\pi') < h(\pi^*)$, contradicting the choice of $\pi^*$ given in Section \ref{sec:models}.
\end{proof}

\subsection{Reduction to Singular and Nonsingular Instances}
\label{sec:relationkh}

\begin{lemma}
\label{lem:singular}
Consider any selfish ring routing instance $\mathcal I=(R, \ell,(s_i,t_i)_{i \in [k]})$ with linear latencies. Let $\pi^*$ be an optimal routing and let $\pi^N$ be a Nash routing.
Suppose there is an agent $q \in [k]$ that
uses the same path in $\pi^N$ as in $\pi^*$.
Then there exists a selfish routing instance $\mathcal I'=(R, \ell',(s_i,t_i)_{i \in [k]\backslash\{q\}})$
with linear latency functions $\ell_e'(x)$
such that
\begin{itemize}
	\item the non-switching agent $q$ is removed from $\mathcal{I}$ to get $\mathcal I'$,
	\item the routing $\pi^N$ restricted to the remaining agents, denoted as ${\pi^N}'$, 
is a Nash equilibrium for $\mathcal{I'}$,
	\item the total ring latencies satisfy ${\ell'}^N(R):= \ell'(R,{\pi^N}')=\ell^N(R)$, 
and
	\item we have $M'(\opt')\leq M(\pi^*)$ for the maximum latencies of individual agents. Here, $\opt'$
denotes an optimal routing for $\mathcal{I'}$ and
$M'(\cdot)$ denotes the maximum latency of a routing in $\mathcal{I'}$.
\end{itemize}
\end{lemma}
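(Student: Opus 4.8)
The plan is to delete agent $q$ and to compensate for its disappearance by absorbing, link by link, the single unit of flow it contributed into the constant term of that link's latency function. Since by hypothesis $q$ uses the same path in $\pi^N$ and in $\pi^*$, write $P_q:=N_q=Q_q$. Define $\mathcal I'=(R,\ell',(s_i,t_i)_{i\in[k]\setminus\{q\}})$ by keeping the slopes, $a'_e:=a_e$ for every $e\in E$, and setting $b'_e:=b_e+a_e$ for $e\in P_q$ and $b'_e:=b_e$ for $e\notin P_q$. The new coefficients are again nonnegative integers, so $\mathcal I'$ is a legitimate SRR instance, and it differs from $\mathcal I$ only in that agent $q$ has been removed; this gives the first bullet. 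The one idea driving the whole argument is that once $q$ is gone every remaining agent sees exactly one fewer unit of flow on each link of $P_q$ and the same flow everywhere else, so adding $a_e$ to $b_e$ precisely on the links of $P_q$ makes every link carry, in $\mathcal I'$, the same latency it carried in $\mathcal I$ --- and, because $q$ routes along $P_q$ in \emph{both} $\pi^N$ and $\pi^*$, this single modification is the right one for both routings at once.

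For the second and third bullets I would carry out the following bookkeeping. Fix a link $e$ and suppose $x$ agents other than $q$ use $e$ in ${\pi^N}'$; then $x+[e\in P_q]$ agents use $e$ in $\pi^N$ (here $[\cdot]$ is $1$ if true and $0$ otherwise), and the latency of $e$ in $\mathcal I'$ is $a_e x+b'_e=a_e x+b_e+a_e[e\in P_q]=a_e(x+[e\in P_q])+b_e$, exactly its latency in $\mathcal I$. Summing over $e\in R$ yields ${\ell'}^N(R)=\ell^N(R)$, which is the third bullet. Applied path by path, the same identity gives $\ell'(N_i,{\pi^N}')=\ell^N(N_i)$ for every $i\neq q$; moreover, whether $e\in P_q$ does not depend on which path agent $i$ chooses (only on the fixed path of $q$), so the same identity also shows that the latency agent $i$ would experience by unilaterally switching from $N_i$ to $R\setminus N_i$ is the same in $\mathcal I'$ as in $\mathcal I$. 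As $\pi^N$ is a Nash equilibrium of $\mathcal I$, no such switch is profitable in $\mathcal I$, hence none is profitable in $\mathcal I'$, so ${\pi^N}'$ is a Nash equilibrium of $\mathcal I'$, giving the second bullet.

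For the last bullet I would take as a feasible routing for $\mathcal I'$ the restriction ${\pi^*}'$ of $\pi^*$ to $[k]\setminus\{q\}$. The very same link bookkeeping --- now using $P_q=Q_q$, so that the links whose constant term is bumped are exactly those of $Q_q$ --- gives $\ell'(Q_i,{\pi^*}')=\ell^*(Q_i)\le M(\pi^*)$ for every $i\neq q$, whence $M'(\opt')\le M'({\pi^*}')=\max_{i\neq q}\ell'(Q_i,{\pi^*}')\le M(\pi^*)$. I do not expect a genuine obstacle: the lemma is a bookkeeping reduction, and the only points requiring care are to track the flow counts correctly and to invoke $N_q=Q_q$ exactly where it is needed --- namely to ensure that the single modification $\ell\mapsto\ell'$ simultaneously preserves the structure of the Nash routing and bounds the cost of the optimal routing.
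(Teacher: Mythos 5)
Your proposal is correct and follows essentially the same route as the paper: delete agent $q$ and absorb its unit of flow by replacing $b_e$ with $b_e+a_e$ on the links of its (common) path, after which every link latency in any induced routing is unchanged, yielding all four bullets and $M'(\opt')\le M'({\pi^*}')\le M(\pi^*)$. Your write-up merely makes the paper's one-line bookkeeping claim explicit, including the (correct) observation that deviation latencies are preserved as well.
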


\begin{proof}
By definition, player $q$ uses path $Q_q$ in both $\pi^N=\{N_i:i\in[k]\}$ and $\pi^* =\{Q_i:i\in[k]\}$.
Remove player $q$ from $\mathcal{I}$. For every link $e\in Q_q$ set
$\ell_e'(x):=\ell_e(x)+a_e=a_e x  + b_e + a_e$. The latency functions of all other links are  unchanged. 
Denote the resulting instance $(R, \ell',(s_i,t_i)_{i \in [k]\backslash\{q\}})$ by $\mathcal{I}'$.

Every routing $\pi$ for $\mathcal{I}$ induces a routing $\pi'$ for $\mathcal{I}'$  in the natural way, by omitting the routing for player $q$. From the modified latency defined in the proof, we see that the latency of every edge in an induced routing is the same as the original latency in $\mathcal{I}$. It follows immediately that
\begin{itemize}
\item
a routing which is a Nash equilibrium in $\mathcal{I}$ induces a Nash equilibrium routing in $\mathcal{I}'$,
\item
the latency of the entire ring of an induced routing is also the same as the ring latency of the original routing in $\mathcal{I}$, and
\item the maximum latency of the induced routing ${\pi^*}'$ of the optimal routing $\pi^*$ is not larger than
the maximum latency of the optimal routing itself, i.e., $M'({\pi^*}')\leq M(\pi^*)$.
\end{itemize}

By definition, the \emph{optimal} routing $\opt'$ for instance $\mathcal{I}'$ cannot be worse than the feasible routing ${\pi^*}'$, and we conclude $M'(\opt') \le M'({\pi^*}') \le  M(\pi^*)$.
\end{proof}

We call the Nash routing $\pi^N$ {\em singular} if $M(\pi^N) > \max_{i \in [h]}\ell^N(N_i)$, i.e., if the maximum latency in $\pi^N$ is obtained only by an agent which uses the same routing in $\pi^N$ as it uses in $\pi^*$.
We call $\pi^N$ {\em nonsingular} otherwise.
That is, $\pi^N$ is nonsingular if $M(\pi^N)=\max_{i \in [h]}\ell^N(N_i)$.
Since we are interested in upper bounding the ratio $M(\pi^N)/M(\pi^*)$,
{applying Lemma \ref{lem:singular} repeatedly enables us to make the following assumption.}

\begin{assumption}
\label{ass1}
$h\le k\le h+1$ and $h=k+1$ if and only if $\pi^N$ is singular.
\end{assumption}

{Under Assumption \ref{ass1}, for any singular case $(\pi^N,\mathcal I)$, Lemma~\ref{lem:singular} produces a nonsingular case $(\pi^{N'},\mathcal I')$ with $\ell'^N(R,\mathcal{I'})/M'(opt',\mathcal{I'})\ge \ell^N(R,\mathcal{I})/M(\pi^*,\mathcal{I})$. Therefore we can upper bound the price of anarchy for {the SRR} problem as follows:}
\begin{itemize}
 \item analyze the ratio $\ell^N(R,\mathcal{I})/M(\pi^*,\mathcal{I})$
only for nonsingular instances $\mathcal{I}$ where no player uses the same path in $\pi^N$ and $\pi^*$,
and
\item analyze the ratio $M(\pi^N,\mathcal{I})/\ell^N(R,\mathcal{I})$ for general instances $\mathcal I$.
\end{itemize}

This is what we will do in the remainder of the paper.

\section{Non-Covering Equilibria}
\label{sec:uncovered}

\begin{theorem}
\label{uncover}
The ratio
$M(\pi^N)/M(\pi^*)$ is at most $\frac{4}{3}+\frac{5}{3h}$ for instances
for which $\cup_{i \in [h]} N_i \neq R$.
\end{theorem}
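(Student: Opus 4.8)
The plan is to exploit the uncovered link: cutting the ring there turns the ring into a path, on which the Nash paths of the switching agents form a family of pairwise overlapping subpaths that all lie on the same side of the cut.

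\textbf{Set-up.} Fix a link $e^*\in R\setminus\bigcup_{i\in[h]}N_i$ and cut the ring at $e^*$; the result is a path $\mathcal L$. Each switching agent $i\in[h]$ routes along a subpath $N_i\subseteq\mathcal L$ in $\pi^N$, so in $\pi^*$ it uses the complementary path $Q_i=R\setminus N_i$, which contains $e^*$. By Lemma~\ref{lem:twopaths} the paths $N_1,\dots,N_h$ pairwise share a link; being subpaths of the path $\mathcal L$, they therefore have a common link $\hat e$ (a standard interval argument: the rightmost left-endpoint of the $N_i$ lies strictly to the left of the leftmost right-endpoint). Hence $\mathcal L':=\bigcup_{i\in[h]}N_i$ is itself a single subpath of $\mathcal L$, and the complementary arc $D:=R\setminus\mathcal L'$ (which contains $e^*$) is traversed by no switching agent in $\pi^N$. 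Under Assumption~\ref{ass1} we are either in the nonsingular case $h=k$, where then no agent at all uses any link of $D$ in $\pi^N$, so that $\ell^N(e)=b_e$ for $e\in D$; or in the singular case $k=h+1$, where a single additional non-switching agent may also cross $D$, which we handle by repeating the estimates below with the obvious $+1$ adjustments (and invoking Lemma~\ref{lem:singular}). We describe the nonsingular case.

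\textbf{Lower bounds on $M(\pi^*)$.} Since $D\subseteq Q_i$ for every $i\in[h]$, in $\pi^*$ every link of $D$ carries all $h$ switching agents, so $\ell^*(e)\ge a_e h+b_e$ on $D$, whence $M(\pi^*)\ge\ell^*(Q_i)\ge h\|D\|_a+\|D\|_b+\ell^*(Q_i\cap\mathcal L')$ for every $i$, and each link of $Q_i\cap\mathcal L'$ contributes at least $a_e+b_e$. Moreover, along $\mathcal L'$ the optimal loads are dual to the Nash loads: a link of $\mathcal L'$ omitted by $c$ of the Nash paths is used by exactly $h-c$ of the optimal paths. Thus links of $\mathcal L'$ that are lightly loaded in $\pi^N$ (in particular links cheap for a deviating agent) are heavily loaded in $\pi^*$; summing the inequalities $M(\pi^*)\ge\ell^*(Q_i)$ with suitable weights converts this duality into the sharper lower bound on $M(\pi^*)$ that is needed at the end.

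\textbf{Upper bound on $M(\pi^N)$.} Let agent $j$ attain $M(\pi^N)=\ell^N(N_j)$, so $N_j\subseteq\mathcal L'$. We bound $\ell^N(N_j)$ in two complementary ways and average them. First, the Nash inequality for $j$ (deviating onto $Q_j$ adds one unit to every link of $Q_j$; on $D$ the Nash latency is just $b_e$, and on $\mathcal L'\setminus N_j$ the Nash load is at most $h-1$) gives $\ell^N(N_j)\le\|D\|_a+\|D\|_b+h\|\mathcal L'\setminus N_j\|_a+\|\mathcal L'\setminus N_j\|_b$, which is strong when $N_j$ is long. Second, since the Nash load on $N_j$ never exceeds $h$ (and equals $h$ only on the core $C:=\bigcap_{i\in[h]}N_i$), we have $\ell^N(N_j)\le h\|N_j\|_a+\|N_j\|_b$, which is strong when $N_j$ is short. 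Adding the two bounds makes the term $\|\mathcal L'\setminus N_j\|_a$ cancel; using in addition $h\|C\|_a\le\ell^N(C)\le M(\pi^N)$ to absorb the load on the core, we obtain $M(\pi^N)$ bounded by a linear form in $\|D\|_a$, $\|D\|_b$ and the link weights along $\mathcal L'\setminus C$ only.

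\textbf{Combine and optimize; the obstacle.} Dividing the resulting upper bound on $M(\pi^N)$ by the lower bounds on $M(\pi^*)$ from the second step gives a quotient of linear forms in $\|D\|_a$, $\|D\|_b$, the weights along $\mathcal L'$, and $h$; maximizing this small optimization, whose binding constraints are the Nash inequality, the Nash/optimal load duality on $\mathcal L'$, and $h\|C\|_a\le M(\pi^N)$, yields $\tfrac43+\tfrac5{3h}$. The main obstacle is precisely to make this optimization give the right answer: using only agent $j$'s Nash inequality together with the crude facts that the Nash load is at most $h-1$ on $\mathcal L'$ and $\ell^N\equiv b$ on $D$ only proves the useless bound $M(\pi^N)/M(\pi^*)\le h$. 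The point that rescues the argument is the structural fact that the links onto which $j$ could cheaply deviate in $\pi^N$ are exactly the links carrying $\Theta(h)$ of the rerouted optimal paths in $\pi^*$, so that a large $M(\pi^N)$ forces an equally large $M(\pi^*)$. Small values of $h$ (where the claimed bound is at least $2$ and hence vacuous) and the singular case require the separate bookkeeping indicated above.
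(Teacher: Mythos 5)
You set up useful machinery — cutting at the uncovered link, the Helly property giving a common link $\hat e$, the observation $D\subseteq Q_i$ for all switching $i$, and the complementary pair of upper bounds on $\ell^N(N_j)$ — but the proof never actually arrives at $\tfrac43+\tfrac5{3h}$. Adding your two upper bounds gives $2M(\pi^N)\le\|D\|_a+\|D\|_b+h\|\mathcal L'\|_a+\|\mathcal L'\|_b$, while averaging your lower bounds gives $M(\pi^*)\ge h\|D\|_a+\|D\|_b+\tfrac1h\sum_{e\in\mathcal L'}\left(a_e\pi^*(e)^2+b_e\pi^*(e)\right)$; even after subtracting the core via $h\|C\|_a\le M(\pi^N)$, these two estimates do not obviously bound the ratio below $2$, let alone below $\tfrac43+\tfrac5{3h}$. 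You yourself flag this: you concede that the crude facts only yield the ``useless bound $M(\pi^N)/M(\pi^*)\le h$'' and then gesture at ``summing the inequalities $M(\pi^*)\ge\ell^*(Q_i)$ with suitable weights'' and a ``structural fact'' about cheap-deviation links carrying many optimal paths — but neither the weights nor the structural argument is ever produced, so the optimization that is supposed to yield the final constant is asserted rather than done. (Incidentally, the duality statement has a slip: a link of $\mathcal L'$ omitted by $c$ of the Nash paths is used by exactly $c$, not $h-c$, of the optimal paths, since $\pi^N(e)+\pi^*(e)=h$.)

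What the paper does, and what your sketch is missing, is a two-step decoupling. Its Lemma~\ref{lem:ringnotcovered} first controls $\ell^N(R)/M(\pi^*)\le 2+\tfrac2h$ by choosing a maximal subpath $P$ whose end links carry Nash load $>h/2$ and observing that every link outside $P$ (but inside $N_p\cup N_q$) has Nash load $\le h/2$; this ``$h/2$ threshold'' is precisely the quantitative handle that your ``$\le h-1$'' load estimate lacks. Its Lemma~\ref{lem:uncoveredalpha} then converts a ring bound $\ell^N(R)\le\alpha M(\pi^*)$ into the desired player bound via the averaging identity $3\ell^N(N_i)\le(2\alpha+\tfrac1h)M(\pi^*)$, obtained by adding the Nash inequality $\ell^N(N_i)\le 2\ell^N(C_i)+\tfrac1h M(\pi^*)$ to twice the ring identity $\ell^N(N_i)+\ell^N(C_i)=\ell^N(R)$. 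Your outline bypasses $\ell^N(R)$ entirely and tries to couple $M(\pi^N)$ to $M(\pi^*)$ directly through the basic weights, but without the $h/2$ structural step and without an explicit version of the weighted-sum argument, there is no route in your write-up to the numerical bound claimed. The singular case is also dispatched with ``the obvious $+1$ adjustments,'' which is not an argument. As written this is a sketch with a genuine hole at exactly the step you identify as ``the main obstacle.''
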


The proof of Theorem~\ref{uncover} consists of the following two steps.
First we show that the ratio $\ell^N(R)/M(\pi^*)$ is at most $2+\frac{2}{h}$. This is Lemma~\ref{lem:ringnotcovered}.
Next we show (Lemma~\ref{lem:uncoveredalpha}) that for any uncovered instance, if $\ell^N(R)/M(\pi^*) \leq \alpha$ for some constant $\alpha$, then $M(\pi^N)/M(\pi^*)$ is at most $(2\alpha+\frac{1}{h})/3$.
This proves Theorem~\ref{uncover}, which itself proves Theorem~\ref{poa} for the non-covered case with $h \geq 3$. The remaining case of non-covering equilibria with $h=2$ is handled in Section \ref{sec:h=12},
where we show $M(\pi^N)/M(\pi^*)\le2$ directly by utilizing  the structural properties of rings.

\begin{lemma}
\label{lem:ringnotcovered}
Let $\mathcal{I}$ be an SRR instance with $\cup_{i \in [h]}N_i\ne R$.
Then $\ell^N(R)/M(\pi^*)\le 2+\frac{2}{h}$.
\end{lemma}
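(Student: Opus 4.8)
The plan is to exploit the single uncovered link to convert the ring into a linear structure and then add up Nash inequalities. Since $\cup_{i\in[h]}N_i\ne R$, there is a link $e_0$ not traversed by any switching agent in $\pi^N$; in fact no agent at all uses $e_0$ in $\pi^N$ among the switching ones, so every $N_i$ with $i\in[h]$ is a path in the ``line'' $R\setminus\{e_0\}$. Consequently the complement $Q_i=R\setminus N_i$ of each switching agent \emph{does} contain $e_0$, and all the optimal paths $Q_1,\dots,Q_h$ share the link $e_0$. The first step is to record the Nash condition: for each $i\in[h]$, deviating from $N_i$ to $Q_i=R\setminus N_i$ does not help, i.e.
\[
\ell^N(N_i)\ \le\ \ell^N(R\setminus N_i)+\|R\setminus N_i\|_a\ =\ \ell^N(Q_i)+\|Q_i\|_a .
\]
Adding $\ell^N(N_i)$ to both sides and using $\ell^N(N_i)+\ell^N(Q_i)\ge \ell^N(R)$ (every link of the ring lies in $N_i$ or in $Q_i$, and link latencies are nonnegative; links common to nothing contribute $b_e$ once), we get $\ell^N(R)\le 2\ell^N(N_i)+\|Q_i\|_a$ for each $i$. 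The point of choosing the uncovered link is that $\|Q_i\|_a\le \|R\setminus(\text{something})\|_a$ can be bounded by the latency of the optimal routing.

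The second step is to bound $\|Q_i\|_a$ in terms of $M(\pi^*)$. Here I would use that the optimal path $Q_i$ carries agent $i$ in $\pi^*$, so $\ell^*(Q_i)\ge \|Q_i\|_a\cdot 1 + \|Q_i\|_b$ already accounts for agent $i$ itself, hence $M(\pi^*)\ge \ell^*(Q_i)\ge \|Q_i\|_a$. Then $\ell^N(R)\le 2\ell^N(N_i)+M(\pi^*)$. To finish I need $\ell^N(N_i)$, summed or maximized appropriately, to be comparable with $M(\pi^*)$. The natural move is to sum the inequality $\ell^N(R)\le 2\ell^N(N_i)+M(\pi^*)$ over $i\in[h]$, giving $h\,\ell^N(R)\le 2\sum_{i\in[h]}\ell^N(N_i)+h\,M(\pi^*)$, and then to argue $\sum_{i\in[h]}\ell^N(N_i)\le \ell^N(R)+\text{(something)}\le \ell^N(R)+M(\pi^*)$, which would yield $h\,\ell^N(R)\le 2\ell^N(R)+2M(\pi^*)+h\,M(\pi^*)$, i.e. $(h-2)\ell^N(R)\le (h+2)M(\pi^*)$. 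That gives $\ell^N(R)\le \frac{h+2}{h-2}M(\pi^*)$, which is not yet $2+\tfrac2h$, so the summation step must be done more carefully—presumably by using that the paths $N_i$ all live on the line $R\setminus\{e_0\}$ so that each link of the ring is counted at most $\pi^N(e)$ times across the sum, and $\sum_e \pi^N(e)\,(\cdots)$ telescopes into $\ell^N(R)$ plus a controlled correction. Alternatively one picks the agent $i^*$ achieving $\max_i \ell^N(N_i)$ and bounds that maximum directly by $M(\pi^*)+\|Q_{i^*}\|_a/h$-type terms using that the switching agents' optimal paths overlap heavily on $e_0$.

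The main obstacle I anticipate is exactly this last bookkeeping: getting the constant down to the sharp $2+\tfrac2h$ rather than a weaker $O(1)$ bound. The key structural fact to leverage is that, since all $Q_i$ ($i\in[h]$) contain the uncovered link $e_0$ and the $N_i$ are nested intervals on a line, the ``double counting'' of $\sum_i \ell^N(N_i)$ versus $\ell^N(R)$ can be controlled by how many switching agents traverse each link, and the worst case is balanced—each of the $h$ agents essentially paying $M(\pi^*)/h$ extra. I would therefore set up the final inequality as a weighted sum of the $h$ Nash conditions with weights $1/h$, track the link multiplicities $\pi^N(e)$ explicitly, and use $\ell^*(R)\le$ (sum over agents of their optimal latencies)$\le k\,M(\pi^*)$ together with $\pi^N(e)\le h$ to close the estimate. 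The linearity of the latency functions and nonnegativity of all $a_e,b_e$ are used throughout to drop or merge terms in the right direction.
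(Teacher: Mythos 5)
Your plan has a genuine gap, and in fact two of its concrete steps are flawed. First, the inequality you derive from the Nash condition is backwards: from $\ell^N(N_i)\le \ell^N(Q_i)+\|Q_i\|_a$ and $\ell^N(R)=\ell^N(N_i)+\ell^N(Q_i)$ one gets $\ell^N(R)\le 2\ell^N(Q_i)+\|Q_i\|_a$, not $\ell^N(R)\le 2\ell^N(N_i)+\|Q_i\|_a$; so the quantity you must control is $\ell^N(Q_i)$ (the Nash latency of the \emph{optimal} path), and bounding $\|Q_i\|_a$ by $M(\pi^*)$ alone is far too lossy to reach $2+\frac2h$. Second, your structural claim that the Nash paths are ``nested intervals'' on the line $R\setminus\{e_0\}$ is false; what is true (and what the paper needs) is that they pairwise intersect, by Lemma~\ref{lem:twopaths}, which by the Helly property of intervals on a line forces \emph{all} of $N_1,\dots,N_h$ to share a common link. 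You never use this, and your fallback — summing all $h$ Nash inequalities — you yourself concede only yields $\frac{h+2}{h-2}$, which is too weak for $h=3,4$ (where the resulting bound in Lemma~\ref{lem:uncoveredalpha} would exceed $2$), and the proposed repair (``telescoping'' multiplicities, $\ell^*(R)\le k\,M(\pi^*)$) is not worked out and does not obviously close.

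The missing idea is a complementarity-plus-selection argument using a \emph{single} well-chosen player. In the nonsingular case $k=h$, every player's optimal path is the complement of its Nash path, so $\pi^*(e)=h-\pi^N(e)$ on every link. The paper picks the maximal segment $P$ whose end links carry Nash load $>h/2$ (such a segment exists because all Nash paths share a link), deduces that some player $j$ has $P\subseteq N_j$, and hence every link of $Q_j$ has Nash load at most $h/2$ — so its optimal load is at least $h/2$, and links unused in $\pi^N$ have optimal load exactly $h$. Splitting $Q_j$ accordingly into $Y$ (Nash-used) and $Z$ (Nash-unused) gives $\ell^N(Q_j)\le \frac h2\|Y\|_a+\|Y\|_b+\|Z\|_b$ and $M(\pi^*)\ge\ell^*(Q_j)\ge \frac h2\|Y\|_a+\|Y\|_b+h\|Z\|_a+\|Z\|_b$, and applying the Nash inequality only for player $j$ yields $\ell^N(R)\le (h+1)\|Y\|_a+2\|Y\|_b+\|Z\|_a+2\|Z\|_b$, whose ratio to the lower bound is at most $(h+1)/(h/2)=2+\frac2h$. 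Without the common-link selection of $j$ and the load complementarity $\pi^*(e)=h-\pi^N(e)$, your bookkeeping cannot produce the sharp constant.
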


\begin{proof}
By Lemma~\ref{lem:singular} it suffices to consider the nonsingular case. That is, we assume without loss of generality that $k=h$, i.e., we assume that all agents change their paths.
There exist two agents $p,q\in [h]$ such that
$\cup_{i \in [h]}N_i \subseteq N_p \cup  N_q \varsubsetneq R$, and all $h$ paths in $N_1,N_2,\ldots,N_h $ share a common link in $N_p\cap N_q$.
This holds because if there were three agents that do not all share a same link, then two of them
would not share a link at all. This is due to the assumption $\cup_{i \in [h]}N_i \neq R$.
However, this contradicts Lemma~\ref{lem:twopaths}.
Therefore we can take $P$ to be the longest path in $N_p\cup N_q$ with end link $g_1$ and $g_2$ (possibly $\{g_1\}=\{g_2\}= P$) such that $\pi^N(g_i)>h/2$ for $i=1,2$ and
 \be
 \pi^N(g)\le h/2\text{ for any link }g\in  N_p\cup N_q\setminus P.\label{any1}
 \ee
\begin{figure}
\centerline{
\scalebox{1}{
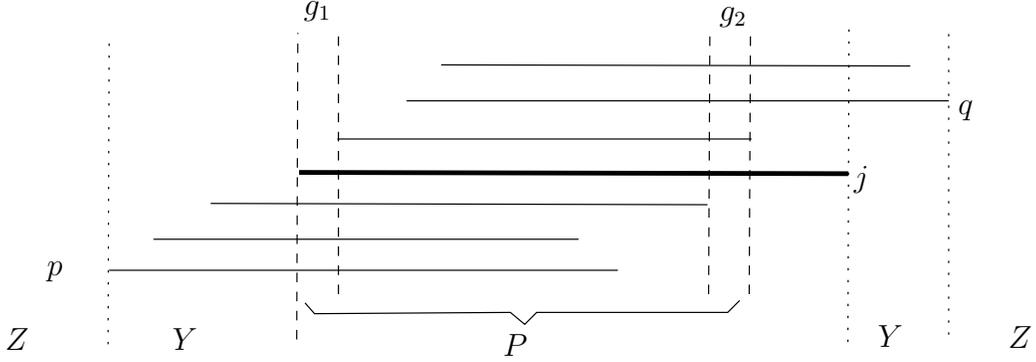}
}
\caption{\label{fig:noncovering}Proof for non-covering equilibria.
For this figure, we have mapped the ring to the real line.}
\end{figure}
See Figure \ref{fig:noncovering}.
Since we have $g_1=g_2$ or $\pi^N(g_1)+\pi^N(g_2)>h$, there exists an agent $j\in [h]$ such that $\{g_1,g_2\}\subseteq  N_j$ and thus $P\subseteq N_j$. Let $Y\subseteq  Q_j$ consist of links $e$ with $\pi^N(e)\geq1$ and $Z= Q_j \backslash Y$.
It can be seen from   {(\ref{any1}) that
$\ell^N(Q_j)\le \frac{h}2 ||Y||_a+||Y||_b+||Z||_b$
and therefore
\begin{align}
\nonumber
\ell^N(R)&=\ell^N(Q_j)+\ell^N(N_j)\le 2\ell^N(Q_j)+||Y||_a+||Z||_a\\
\label{eq:ublrfn}
& \le  (h+1)||Y||_a + 2||Y||_b + ||Z||_a + 2||Z||_b.
\end{align}}
Since
\begin{equation}
\label{eq:lb-lqjf*}
\ell^*(Q_j)\geq \frac{h}2 ||Y||_a +||Y||_b + h||Z||_a + ||Z||_b,
\end{equation}
the ratio of the upper bound (\ref{eq:ublrfn}) for $\ell^N(R)$ 
to the lower bound (\ref{eq:lb-lqjf*}) for $\ell^*(Q_j)$ is maximized for
$||Z||_a=||Z||_b=||Y||_b=0$ and is 
 $(h+1)/(h/2)=2+2/h$.~
\end{proof}

To conclude the proof of Theorem~\ref{uncover}, we finally show the following.
\begin{lemma}
\label{lem:uncoveredalpha}
The ratio $M(\pi^N)/M(\pi^*)$ is at most $(2\alpha+\frac{1}{h})/3$
for instances for which $\cup_{i \in [h]}N_i\ne R$ and $\ell^N(R)/M(\pi^*)\le \alpha$.
\end{lemma}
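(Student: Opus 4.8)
The plan is to reduce the whole statement to the single inequality
\[
3\,M(\pi^N)\ \le\ 2\,\ell^N(R)+\tfrac1h\,M(\pi^*),
\]
after which the claimed bound is immediate: divide by $M(\pi^*)$ and use the hypothesis $\ell^N(R)/M(\pi^*)\le\alpha$. We may assume $h\ge 1$, since for $h=0$ we have $\pi^N=\pi^*$ and the ratio equals $1$. To get the displayed inequality I would start from the Nash condition for a bottleneck player: pick an agent $i\in[k]$ with $\ell^N(N_i)=M(\pi^N)$ and let $A:=R\setminus N_i$ be its unique alternative path (on a ring the two $s_i$-$t_i$ paths are edge-complementary). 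Rerouting agent $i$ onto $A$ adds one unit of flow to every link of $A$, so the Nash inequality reads $\ell^N(N_i)\le \ell^N(A)+||A||_a$. Combining this with $\ell^N(N_i)+\ell^N(A)=\ell^N(R)$ (the link sets $N_i$ and $A$ partition $R$) rearranges to
$2\,M(\pi^N)\le \ell^N(R)+||A||_a$, so everything hinges on bounding $||A||_a$.

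The key idea is to split $A$ into the set $A_1$ of links carrying flow from at least one agent in $\pi^N$ and the set $A_0$ of links carrying no flow in $\pi^N$. For $e\in A_1$ the link latency in $\pi^N$ is at least $a_e$, hence $||A_1||_a\le\ell^N(A)=\ell^N(R)-M(\pi^N)$. For $A_0$ the non-covering hypothesis enters through the optimal routing: a link $e\in A_0$ is used by no agent in $\pi^N$, in particular by none of the switching agents, so $e\notin N_j$ and therefore $e\in R\setminus N_j=Q_j$ for every $j\in[h]$; thus all $h$ switching agents route through $e$ in $\pi^*$, i.e.\ $\pi^*(e)=h$. Since moreover $A_0\subseteq R\setminus N_1=Q_1$, the latency of agent $1$ in $\pi^*$ is at least $\sum_{e\in A_0}h\,a_e=h\,||A_0||_a$, and as this latency is at most $M(\pi^*)$ we get $||A_0||_a\le M(\pi^*)/h$. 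Substituting $||A||_a=||A_0||_a+||A_1||_a\le M(\pi^*)/h+\ell^N(R)-M(\pi^N)$ into $2\,M(\pi^N)\le \ell^N(R)+||A||_a$ yields the displayed inequality.

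I do not expect a genuine obstacle here. The one point that needs care is the treatment of the links in $A_0$: for those one cannot argue through $\ell^N$, and must instead extract the estimate $a_e\le M(\pi^*)/h$ from the optimal routing, using precisely that every link left empty by $\pi^N$ is forced onto the optimal paths of all $h$ switching agents. Note also that splitting $A$ according to actual usage in $\pi^N$ (rather than according to the switching agents alone) makes the whole argument valid for arbitrary instances, so there is no need to first invoke Lemma~\ref{lem:singular} or Assumption~\ref{ass1}; only $h\ge1$ is used.
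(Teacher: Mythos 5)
Your proof is correct and follows essentially the same route as the paper: the Nash inequality for the complement path $A=R\backslash N_i$, a split of $A$ into links used in $\pi^N$ (bounded by $\ell^N(A)$) and unused links (bounded by $M(\pi^*)/h$, since they lie on all $h$ switching agents' optimal paths), combined with $\ell^N(R)\le\alpha M(\pi^*)$. The only cosmetic differences are that the paper bounds $\ell^N(N_i)$ for an arbitrary agent $i$ rather than just the bottleneck agent, and defines the split via usage by switching agents rather than by all agents; neither affects the argument.
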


\begin{proof}
It suffices to show that for any agent $i \in [k]$ the inequality
$\ell^N(N_i)\leq \frac13(2\alpha+\frac{1}{h}) M(\pi^*)$ holds.
Consider an arbitrary agent $i\in [k]$.
Let $C_i:=R\backslash N_i$, the complement of player $i$'s path $N_i$.
We partition the link set of  $ C_i $ into the set of links
$Y:=\{ e \in  C_i  \mid \pi^N(e) \geq 1\}$ which, in routing $\pi^N$, have at least one agent on it and the set of links $Z:= C_i\backslash Y$ with {no players} on it in routing $\pi^N$.

Since $h$ is the number of players whose paths in $\pi^N$ deviate from the one in $\pi^*$, the   {links} $e$ in $Z$ satisfy $\pi^*(e) \geq h$, that is, there are at least $h$ {players} using these links in the routing $\pi^*$.
Hence $M(\pi^*)\geq h||Z||_a$.
In the routing $\pi^N$,
if player $i$ would switch from path $N_i$ to $C_i$, it would have a latency of at most
$\ell^N(C_i)+||Y||_a+||Z||_a$.
Since $\pi^N$ is a Nash equilibrium, we have
\begin{equation}
\label{eq:nibound}
\ell^N(N_i)\le \ell^N(C_i)+||Y||_a+||Z||_a
\le 2\ell^N(C_i)+\frac{1}{h}M(\pi^*).
\end{equation}
By assumption we also have $\ell^N(N_i)+\ell^N(C_i)=\ell^N(R)\leq\alpha M(\pi^*)$.
Adding twice this inequality to (\ref{eq:nibound}) gives
 $3\ell^N(N_i)\le (2\alpha+\frac{1}{h}) M(\pi^*)$, as required.
\end{proof}

\section{Covering Equilibria}
\label{sec:covering}
For covering equilibria, we show that the price of anarchy is at most $2$. This is again a two-step approach. First, the covering property implies an upper bound $2/3$ on $M(\pi^N)/\ell^N(R)$ as follows. 

\begin{lemma}\label{lem:23}
If $\cup_{i \in [h]}N_i= R$, then
$M(\pi^N)/\ell^N(R)\le 2/3$.
\end{lemma}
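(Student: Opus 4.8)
The plan is to fix a switching agent $i \in [h]$ achieving the maximum Nash latency (in the nonsingular case such an agent exists, since $M(\pi^N)=\max_{i\in[h]}\ell^N(N_i)$) and to compare $\ell^N(N_i)$ with $\ell^N(R)=\ell^N(N_i)+\ell^N(C_i)$, where $C_i:=R\setminus N_i$ is the complementary path. Rearranging, $M(\pi^N)/\ell^N(R)\le 2/3$ is equivalent to $\ell^N(N_i)\le 2\,\ell^N(C_i)$. So the whole statement reduces to showing that the Nash path of the heaviest switching agent carries at most twice the latency of its complement.

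To get this, I would first write down the Nash inequality for agent $i$: switching to $C_i$ must not help, so $\ell^N(N_i)\le \ell^N(C_i)+\|C_i\|_a$, because moving onto $C_i$ raises the load on each link of $C_i$ by exactly one, contributing an extra $\sum_{e\in C_i}a_e=\|C_i\|_a$. Thus it suffices to prove $\|C_i\|_a\le \ell^N(C_i)$, i.e. that in the Nash routing every link $e$ of $C_i$ already carries at least one player (then $\ell^N(e)=a_e\pi^N\!\!\restriction_e+b_e\ge a_e$, and summing gives the claim). This is exactly where the \emph{covering} hypothesis enters: since $\bigcup_{j\in[h]}N_j=R$, every link of $R$—in particular every link of $C_i$—lies on $N_j$ for some switching agent $j$, hence is used by at least one player in $\pi^N$. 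Combining, $\ell^N(N_i)\le \ell^N(C_i)+\|C_i\|_a\le 2\,\ell^N(C_i)$, which rearranges to $\ell^N(N_i)\le \tfrac23\big(\ell^N(N_i)+\ell^N(C_i)\big)=\tfrac23\ell^N(R)$, as desired.

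One wrinkle to handle cleanly is the singular case, where the maximum latency $M(\pi^N)$ may be attained by a non-switching agent $q$ with $N_q=Q_q$, so $\ell^N(N_q)$ need not obey a useful Nash inequality (agent $q$ is at equilibrium but we have less leverage). Here I would invoke Lemma~\ref{lem:singular}: repeatedly removing such agents $q$ (folding their contribution $a_e$ into the link constants on $Q_q$) preserves $\ell^N(R)$ and does not increase $M(\pi^*)$, and—crucially for the present lemma—preserves the covering property of the switching agents' Nash paths, while reducing to the nonsingular situation $k=h$ in which the argument above applies verbatim. Alternatively, and perhaps more directly in the write-up: even in the singular case $M(\pi^N)$ equals $\max_{i\in[h]}\ell^N(N_i)$ unless it is strictly larger, and Assumption~\ref{ass1} together with Lemma~\ref{lem:singular} lets us assume nonsingularity when bounding the ratio, so the covering argument suffices as stated.

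I do not expect a genuine obstacle here; the only point requiring care is making sure the Nash inequality is applied to an agent whose Nash path is the heaviest \emph{and} for which the deviation bound $\ell^N(N_i)\le\ell^N(C_i)+\|C_i\|_a$ is valid — i.e. a switching agent — and then checking that the covering hypothesis is precisely strong enough to upgrade $\|C_i\|_a$ to a lower bound on $\ell^N(C_i)$. Everything else is one line of algebra.
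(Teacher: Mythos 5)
Your core argument---the Nash inequality for the agent achieving $M(\pi^N)$, combined with the covering hypothesis to lower-bound $||C_i||_a$ by $\ell^N(C_i)$, and one line of algebra---is exactly the paper's proof. The difference lies entirely in your treatment of the singular case, where you introduce a misconception and then a patch that does not work.

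The misconception: you assert that for a non-switching agent $q$, ``$\ell^N(N_q)$ need not obey a useful Nash inequality.'' This is false. The inequality $\ell^N(N_i)\le\ell^N(R\backslash N_i)+||R\backslash N_i||_a$ is an immediate consequence of $\pi^N$ being a Nash equilibrium and holds for \emph{every} agent $i\in[k]$, switching or not. Nothing in the argument requires the agent achieving the maximum to be switching; the covering hypothesis $\cup_{i\in[h]}N_i=R$ concerns the switching agents' Nash paths, and that is all that is needed to get $||C_i||_a\le\ell^N(C_i)$ for \emph{any} $i$. The paper simply takes $Q\in\pi^N$ with $\ell^N(Q)=M(\pi^N)$, with no restriction on which agent, and the ``wrinkle'' you worry about does not exist. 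The patch fails: invoking Lemma~\ref{lem:singular} does \emph{not} reduce this lemma to the nonsingular case, because removing the non-switching agent $q$ that achieves the maximum Nash latency produces a reduced instance in which $M'({\pi^N}')$ can be strictly less than $M(\pi^N)$ (the very quantity you want to bound is gone) while $\ell'^N(R)=\ell^N(R)$; establishing the $2/3$ bound in the reduced instance then says nothing about the original ratio $M(\pi^N)/\ell^N(R)$. The paper is explicit about this asymmetry at the end of Section~\ref{sec:relationkh}: $\ell^N(R)/M(\pi^*)$ is analyzed only for nonsingular instances (there the reduction helps), whereas $M(\pi^N)/\ell^N(R)$ must be analyzed ``for general instances $\mathcal I$.'' Your ``alternatively'' sentence, claiming Assumption~\ref{ass1} and Lemma~\ref{lem:singular} let you assume nonsingularity when bounding $M(\pi^N)/\ell^N(R)$, contradicts the paper's own bookkeeping. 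Drop the singular-case discussion entirely and apply the Nash inequality to whichever agent achieves the maximum; the rest of your proof then coincides with the paper's.
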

\begin{proof}
Take $Q\in \pi^N$ with $\ell^N(Q)=M(\pi^N)$. Then $\ell^N(Q)\le \ell^N(R\backslash Q)+||R\backslash Q||_a$ as $\pi^N$ is covering. From  $  \ell^N(R)=\ell^N(Q)+\ell^N(R\backslash Q)
\ge 2\ell^N(Q)-||R\backslash Q||_a\ge2\ell^N(Q)-
\ell^N(R\backslash Q)=3\ell^N(Q)- \ell^N(R)$, we deduce that $M(\pi^N)=\ell^N(Q)\le\frac{2}{3}\ell^N(R)$.
\end{proof}

Second, we prove $\ell^N(R)/M(\pi^*)\le 3$ by distinguishing between the case $h\le2$ and $h>2$.
\begin{theorem}\label{bound}
If $\cup_{i \in [h]}N_i= R$, then $\ell^N(R)/M(\pi^*)\le 3$.
\end{theorem}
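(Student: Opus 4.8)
The plan is to handle the two regimes $h\le 2$ and $h\ge 3$ separately, since the small cases have enough structural freedom that the aggregate Nash argument below is wasteful. For $h\le 2$ I would argue directly: by Assumption~\ref{ass1} and Lemma~\ref{lem:singular} it suffices to treat the nonsingular case $k=h$, so there are at most two switching players; using Lemma~\ref{lem:twopaths} the two Nash paths $N_1,N_2$ overlap, and since they cover the ring one can write $R$ as the union of $N_1$ and $N_2$ with a nonempty common sub-path, then bound $\ell^N(R)$ against the latency $M(\pi^*)$ of a single optimal path by the same kind of $\|\cdot\|_a,\|\cdot\|_b$ accounting used in Lemma~\ref{lem:ringnotcovered}. (This small-$h$ bookkeeping is essentially the content of the section the excerpt refers to as Section~\ref{sec:h=12}, and I would simply invoke it.)

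The substantive case is $h\ge 3$. The approach is the ``sum up the Nash inequalities'' technique flagged in the proof overview. For each switching agent $i\in[h]$, the Nash condition says $\ell^N(N_i)\le \ell^N(R\setminus N_i)+\|R\setminus N_i\|_a$, i.e. agent $i$ does not gain by moving to its complementary (optimal-side) path $Q_i=R\setminus N_i$, where the $\|Q_i\|_a$ term accounts for the one extra unit of congestion $i$ itself would add on each link of $Q_i$. Writing everything in terms of the link multiplicities $\pi^N(e)$ and the coefficients $a_e,b_e$, and summing over $i\in[h]$, each link $e$ gets counted on the left with weight $\pi^N_{\text{full}}(e)$ restricted appropriately and on the right with weight depending on how many of the $Q_i$ contain $e$; since the $N_i$ cover the ring and the $Q_i$ are the complements, $|\{i: e\in Q_i\}| = h - \pi^N(e)$ (counting only switching agents). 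This yields a single global inequality — the ``Nash constraint'' — relating the quantities $\ell^N(R)$, $\sum_e a_e\,\pi^N(e)^2$-type terms, and $M(\pi^*)$. The optimal routing enters because every link carries at least $\pi^*(e)$ players in $\pi^*$ and every optimal path has latency at most $M(\pi^*)$; in particular links uncovered by switching Nash agents cannot occur here (that is the covering hypothesis), which is exactly what makes the bound come out finite.

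From that constraint I would set up the optimization problem: maximize $\ell^N(R)/M(\pi^*)$ subject to the Nash constraint and the feasibility constraints coming from $\pi^*$. I expect the main obstacle to be precisely this step — it is a high-dimensional nonlinear program and does not reduce to an LP, because the Nash terms are quadratic in the link multiplicities. The plan mirrors what the authors announce: perform a sequence of structure-preserving reductions (merging links with equal multiplicity profile, zeroing out $b_e$ on links that are ``slack'', normalizing $M(\pi^*)$) to collapse the instance down to a constant number of free parameters — the overview says five — and then bound that reduced program explicitly, either by a Lagrangian/KKT analysis or by exhibiting the maximizing configuration and checking it attains exactly $3$. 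I would expect the extremal instance to be a degenerate one where $b_e\equiv 0$, a single ``bottleneck'' arc of the ring carries nearly all $h$ players in $\pi^N$, and the complementary arc is what the optimal routing uses — matching the flavor of the Figure~\ref{fg:instance} lower bound scaled up. The delicate part is ensuring the reductions are genuinely without loss of generality (they must not decrease the target ratio nor violate the Nash constraint), and then not losing constants in the final five-variable estimate; this is where a careful case split on which variables are at their boundary is unavoidable, and where, as the authors note, a computer-algebra check (Maple) is a reasonable backstop for the hand computation.
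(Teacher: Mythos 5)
Your plan correctly reproduces the high-level strategy that the paper itself announces in Section~\ref{sec:outline} and at the start of Section~\ref{sec:covering}: dispatch $h\le 2$ structurally, and for $h\ge 3$ sum the per-player Nash inequalities $\ell^N(N_i)\le\ell^N(Q_i)+\|Q_i\|_a$ to obtain a global constraint, then maximize $\ell^N(R)/M(\pi^*)$ subject to that constraint. Your bookkeeping observation that a link $e$ appears $h-\pi^N(e)$ times on the right-hand side is the right one, and your normalization ideas are in the right spirit.

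However, as a proof this has two genuine gaps, both concentrated in exactly the place you flag as delicate. First, the reduction to finitely many variables is not the one you describe. ``Merging links with equal multiplicity profile'' is free (it is just grouping into the counts $A_i,B_i$), but that still leaves $h$ degrees of freedom. The nontrivial step, Lemma~\ref{lem:di}, is an \emph{exchange/centering argument}: assuming two non-adjacent indices $i_1<i_2-1$ have $D_{i_1},D_{i_2}>0$, one shifts mass $m_1$ from $D_{i_1}$ to $D_{i_1+1}$ and $m_2$ from $D_{i_2}$ to $D_{i_2-1}$, and one must verify simultaneously that (a) the ratio $m_1/m_2$ can be chosen so the Nash constraint~(\ref{eq:optim1a}) stays satisfied, and (b) the same choice strictly decreases the objective in~(\ref{eq:minimize2}). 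This requires comparing the two thresholds $\alpha_{i_1 i_2}$ and $\beta_{i_1 i_2}$ of~(\ref{eq:m1m2}) and~(\ref{eq:b12}), and it is not at all automatic that a feasible window exists. Your plan neither identifies this as the key reduction nor indicates how you would certify it. Second, solving the resulting five-variable nonlinear program is not a routine KKT exercise that can be waved at. The paper has to split into $h\in\{3,4,6\}$ (case tables), $h\ge 7$ (a KKT analysis with further sub-cases on which of $x,y,z$ hit their bounds, supported by Lemmas~\ref{lema} and~\ref{lemb}), and, separately, $h=5$ in Section~\ref{adx5} where the optimization approach is abandoned entirely in favor of a direct structural argument about where the Nash paths start and end on the ring. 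Your proposal does not anticipate that the uniform optimization approach fails for one specific value of $h$, and deferring the remaining computation to Maple does not constitute a proof of the missing cases. Since the paper itself identifies this analysis as ``the main technical challenge of this paper,'' these omissions are the substance of the theorem, not peripheral details.

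One smaller inaccuracy: you write that the covering hypothesis ``is exactly what makes the bound come out finite.'' That is not the role it plays here. The Nash-summation inequality would be valid and would give a finite bound even without covering; indeed Lemma~\ref{lem:ringnotcovered} gives the \emph{stronger} bound $2+2/h$ in the non-covering case precisely because the structure is simpler. Covering is the hard case, not the enabling assumption.
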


The former case $h\le2$ is proved in Section \ref{sec:h=12}, 
which along with Lemma \ref{lem:firstratio3} in this section establishes   Theorem~\ref{bound}.

By Lemma \ref{lem:singular}, we only need to bound ratio $\ell^N(R)/M(\pi^*)$ for nonsingular case where $h=k$.
In this section we consider the $k=h\ge3$ switching players. For each switching player $i\in[h]
$, we can formulate an inequality
$\ell^N(N_i) \leq \ell^N(Q_i)+ ||Q_i||_a$
saying
that its Nash path may not have a longer latency than its alternative path, if one unit load  is
added on every   {link} of the latter. We obtain a constraint by adding up all of these
inequalities.

We can assume that every {link} has a latency function of $x$ or $1$.
This can be achieved by replacing   {a link $e$}
with latency function $a_ex+b_e$ by $a_e$   {links} with latency function $x$
followed by $b_e$ links with latency function $1$.
Now there are only two types of links left, the ones with latency function $x$ and the ones with latency $1$. We introduce variables which count the number of links of both types which are used by a certain number of players, and write the constraint
that we constructed above in terms of these variables.
We then give an upper bound for $\ell^N(R)/M(\pi^*)$ in terms of these variables as well.

We end up with a nonlinear optimization problem: maximize the ratio
under consideration subject to the Nash constraint.
For this problem, we first show that, for the links with latency function  1, only the \emph{total} number
of players on all these links affects the upper bound. 
For any fixed number of players $h$ that do not use the same path in the Nash routing as in the optimal routing, this still leaves us with $h+3$ variables,
since we have one variable for each possible number of players 
on the links with latency function $x$. We now use a centering
argument to show that only at most two of these $h$ variables are nonzero
in an optimal solution of this optimization problem.

Using normalization, this finally gives us an optimization problem with five variables. This problem unfortunately is still not linear. It can be solved by Maple, but we also provide a
formal solution. To do this, we fix $h$ and another variable, and
solve the remaining problem; we then determine the optimal overall values
of the fixed $h$ and that variable.

\paragraph{Summing the Nash inequalities}
For a given path $P\subseteq R$, let $P^a$ be the subset of   {links}
with   {latency function $x$  and let $P^b$ be the subset of links with latency function 1}.

Consider   {a link} $e \in R^a$ ({resp.} $R^b$).
By definition and our assumption that $k=h$, this link occurs in   {$\pi^N(e)$} Nash paths.
That is, this   {link} occurs $\pi^N(e)$ times on the left-hand side of the $h$
Nash inequalities given above---each time with coefficient $\pi^N(e)$ ({resp. 1}). On the other hand,
it occurs $h-\pi^N(e)$ times on the right-hand side of the inequalities, each time
with coefficient $\pi^N(e)+1$ ({resp. 1}).

Formally, we have for $i=1,\dots,h$
\begin{align*}
 \sum_{e \in N_i^a}\pi^N(e)+\sum_{e\in N_i^b} 1 = \ell^N(N_i) &\leq \ell^N(Q_i) + ||Q_i||_a
= \sum_{e \in Q_i^a}(\pi^N(e)+1)+\sum_{e\in Q_i^b} 1
\end{align*}
and, by summation,
\begin{align*}
\sum_{e \in R^a}(\pi^N(e))^2+\sum_{e\in R^b}\pi^N(e)
\leq \sum_{e \in R^a}(h-\pi^N(e))(\pi^N(e)+1)
+\sum_{e\in R^b} (h-\pi^N(e))\,,
\end{align*}
or 
$\displaystyle\sum_{e\in R^a}{\left(2(\pi^N(e))^2-h\right)} + \sum_{e\in R^b}2\pi^N(e)
\leq \sum_{e\in R^a}(h-1)\pi^N(e)+\sum_{e\in R^b} h.$

Writing $A_i$ (resp. $B_i$) as the number of   links with $i$ players on it and a latency function of $x$ (resp.    $1$),
we can group   {links}
with the same numbers of players and write the above as
\begin{align}
\label{eq:eibound}
\sum_{i=1}^{h} ((2i^2-h)A_i +2iB_i)
&\leq \sum_{i=1}^{h} ((h-1)iA_i+hB_i) 
\\
\label{eq:eibound3}
\Rightarrow\sum_{i=1}^{h} \left(\left(\frac{2i}{h}-\frac{1}{i}\right) C_i
+\frac{2i}{h^2}B_i\right)
& \leq \sum_{i=1}^{h} \left(\frac{h-1}{h}C_i+\frac{1}{h}B_i\right)
\end{align}
where we have written $C_i = \frac{i}hA_i$ and divided by $h^2$.

\paragraph{Bounding the optimal latency}
For the optimal routing we also have, by definition and the fact that we are in the nonsingular case,
$h$ inequalities of the form $M(\pi^*)\geq \ell^*(Q_i)$, $i\in[h]$.
 Summing all the inequalities and dividing
by $h$ implies a lower bound on $M(\pi^*)$, namely
\begin{align*}
M(\pi^*)
 \geq \frac{1}{h}\sum_{i=1}^h{\ell^*(Q_i)}
=\frac{1}{h} \sum_{i=1}^{h}  \left((h-i)^2 A_i +(h-i)B_i\right).\end{align*}
Thus we have
\begin{equation}
 \label{eq:lnrmf}
\frac{\ell^N(R)}{M(\pi^*)}\leq
\frac{
\sum_{i=1}^{h}  \left(i A_i+B_i\right)}
{\sum_{i=1}^{h}  \left(\frac{(h-i)^2}{h} A_i +\frac{h-i}{h}B_i\right)}
=
\frac{ \sum_{i=1}^{h} \left(C_i+\frac1hB_i\right)}
     { \sum_{i=1}^{h} \left(\frac{(h-i)^2}{ih} C_i +\frac{h-i}{h^2}B_i\right)}
\end{equation}
and we want to find an upper bound for this expression under the restriction~(\ref{eq:eibound3}).

\begin{lemma}
If $\sum_{i=1}^{h} C_i = 0$, then $\ell^N(R)/M(\pi^*)\le2$.
\end{lemma}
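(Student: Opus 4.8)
The plan is to exploit the sign of the $C_i$: since $C_i=\frac ih A_i$ with $A_i\ge 0$ and $i\ge 1$, the hypothesis $\sum_{i=1}^h C_i=0$ forces $C_i=0$, hence $A_i=0$, for every $i\in[h]$. Because we are in the nonsingular case $k=h$, this means that no link with latency function $x$ carries any player, and (the equilibrium being covering, so that every link lies on some $N_i$ and is therefore used) there are in fact no links of type $x$ at all; every link has latency function $1$. We may also assume $\sum_{i=1}^h B_i>0$, since otherwise all latencies vanish and the ratio is trivially bounded.

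With this reduction the claim follows by combining the two inequalities already in hand, specialized to $A_i\equiv 0$. On one hand, the summed Nash inequality~(\ref{eq:eibound}) collapses to
$$\sum_{i=1}^h 2i\,B_i \;\le\; \sum_{i=1}^h h\,B_i ,$$
i.e.\ $2\sum_{i=1}^h i B_i\le h\sum_{i=1}^h B_i$. On the other hand, putting $C_i\equiv 0$ in the objective bound~(\ref{eq:lnrmf}) gives
$$\frac{\ell^N(R)}{M(\pi^*)} \;\le\; \frac{\frac1h\sum_{i=1}^h B_i}{\sum_{i=1}^h \frac{h-i}{h^2} B_i} \;=\; \frac{h\sum_{i=1}^h B_i}{\sum_{i=1}^h (h-i) B_i}.$$

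It remains to put the pieces together. Writing $S=\sum_{i=1}^h B_i>0$ and $T=\sum_{i=1}^h iB_i$, the Nash constraint reads $T\le hS/2$, so the denominator above satisfies $\sum_{i=1}^h(h-i)B_i=hS-T\ge hS/2>0$, whence $\ell^N(R)/M(\pi^*)\le hS/(hS/2)=2$. I do not anticipate a genuine obstacle in this lemma: the only point needing a word is the degenerate subcase $S=0$, and the substance is simply the observation that, once $A_i\equiv 0$, the summed Nash inequality is exactly the estimate $2\sum_{i=1}^h iB_i\le h\sum_{i=1}^h B_i$ required to control the denominator appearing in~(\ref{eq:lnrmf}).
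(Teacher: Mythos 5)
Your proof is correct and follows essentially the same route as the paper's: set $C_i=0$, deduce from the summed Nash inequality that $2\sum iB_i\le h\sum B_i$, and plug into the bound from (\ref{eq:lnrmf}) to get the ratio at most $2$. The only cosmetic differences are that you use (\ref{eq:eibound}) where the paper uses the rescaled (\ref{eq:eibound3}) (equivalent after dividing by $h^2$), and you explicitly flag the degenerate case $\sum B_i=0$, which the paper leaves implicit.
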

\begin{proof}
Since $C_i \geq 0$ by definition, we have $C_i=0$ for all $i\in[h]$.
Condition (\ref{eq:eibound3}) implies that $\sum_{i=1}^{h} \frac{i}{h}B_i \le
\frac{1}{2}\sum_{i=1}^h B_i$.
Therefore, by (\ref{eq:lnrmf}), the ratio $\ell^N(R)/M(\pi^*)$ is
at most
$(\sum_{i=1}^h B_i)/(\sum_{i=1}^h{B_i} - \sum_{i=1}^h{\frac{i}{h}B_i})\leq(\sum_{i=1}^h B_i)/(\frac12 \sum_{i=1}^h B_i)=~2$.
\end{proof}

\paragraph{Rewriting the problem} Henceforth we assume $\sum_{i=1}^{h} C_i > 0$. Using $\frac{(h-i)^2}{ih}=\frac{h}{i}+\frac{i}{h}-2$, from (\ref{eq:lnrmf}) we
arrive at the following inequality after dividing numerator and denominator by
$\sum_{j=1}^{h} C_j>0$.
\begin{align*}
\frac{\ell^N(R)}{M(\pi^*)}  & \leq
\frac{1 + \sum_{i=1}^{h} \frac{B_i}{h\sum_{j=1}^h C_j}}
{ \sum_{i=1}^{h} \left(\left(\frac{h}{i}+\frac{i}{h}\right) \frac{C_i}{\sum_{j=1}^h C_j} +\frac{h-i}{h^2}\frac{B_i}{\sum_{j=1}^h C_j}\right)-2}\\
& \leq \frac{ 1+\beta}
     { \sum_{i=1}^{h} \left(\frac{h}{i}+\frac{i}h\right) D _i -2+\beta-{z} }
\end{align*}
where $\beta:=\frac{\sum_{i=1}^{h} B_i}{h\sum_{j=1}^h C_j}\geq0$,
{${z} :=\sum_{i=1}^{h} \frac{iB_i}{h^2\sum_{j=1}^h C_j}\in[\frac\beta{h},\beta]$},
and $D_i:=\frac{C_i}{\sum_{j=1}^h{C_j}}$ for every $i\in[h]$.
Notice that $\sum_{i=1}^h{D_i}=1$.
We divide both sides of (\ref{eq:eibound3}) by $\sum_{j=1}^{h} C_j $ and obtain the constraint $\sum_{i=1}^{h} \left(\frac{2i}h-\frac{1}i\right) D _i +2{z} \leq \frac{h-1}{h}+\beta$.
Our problem now looks as follows.
\begin{align}
\label{eq:optim0}
 \frac{\ell^N(R)}{M(\pi^*)}\le\max &
\frac{ 1+\beta}
     { \sum_{i=1}^{h} \left(\frac{h}{i}+\frac{i}h\right) D_i -2+\beta-{z}}\\
\label{eq:optim1a}
\text{s.t.} & \sum_{i=1}^{h} \left(\frac{2i}h-\frac{1}i\right) D_i +2{z}
	 \leq \frac{h-1}{h}+\beta\\
\label{eq:optim2a}
&  \sum_{i=1}^{h} D_i=1,\qquad D_i\geq0\ \forall i\in[h]\\
\label{eq:optim3a}
&\beta\ge {z}\ge  \beta/{h }
\end{align}
To bound the ratio $\ell^N(R)/M(\pi^*)$ from above
we will solve the general problem (\ref{eq:optim0})-(\ref{eq:optim3a}), where  we ignore our definitions of $\beta$ and $z$ above and thus allow $\beta$ and
$z$ to take any nonnegative real values (subject to (\ref{eq:optim3a})).

Since $\frac{h}i+\frac{i}h \ge2$ for all $i\ge1$ and $h\ge1$,
we see that for any $\beta\ge0$ and $h\ge1$,
the denominator in (\ref{eq:optim0}) is
positive for every feasible solution $(\{D_i\}_{i=1}^h,{z})$ of (\ref{eq:optim1a})--(\ref{eq:optim3a}).
We can therefore also consider the following equivalent \emph{minimization} problem:
\begin{equation}
\label{eq:minimize2}
 \min\left\{\left.
\sum_{i=1}^{h} \left(\frac{h}{i}+\frac{i}h\right) D_i - {z}
\right|
(\ref{eq:optim1a})\mbox{--}(\ref{eq:optim3a}) 
\right\}\,.
\end{equation}

In what follows, we solve (\ref{eq:minimize2}) for any fixed $h$ and $\beta$, and then determine
which values of $h$ and $\beta$ give the highest overall value for (\ref{eq:optim0}).
For fixed $h$ and $\beta$,
any solution $(\{D_i\}_{i=1}^h,{z})$ of (\ref{eq:optim1a}) -- (\ref{eq:optim3a}) is either an optimal solution to
both problem (\ref{eq:optim0})--(\ref{eq:optim3a}) and problem (\ref{eq:minimize2}) or to neither of them.
The next lemma helps to simplify our problem (\ref{eq:minimize2}), and hence problem (\ref{eq:optim0})--(\ref{eq:optim3a}).

\begin{lemma}
\label{lem:di}\label{adj}
There is an optimal solution $(\{D_i\}_{i=1}^h,{z})$
of (\ref{eq:minimize2}), which is also an optimal solution of  (\ref{eq:optim0}) -- (\ref{eq:optim3a}), such that $D_i>0$ for at most two values of $i$. If there are two such values, they are consecutive.
\end{lemma}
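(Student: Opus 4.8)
The plan is to fix $h$ and $\beta$ and analyze the minimization problem~(\ref{eq:minimize2}) as a linear program in the variables $D_1,\dots,D_h$ and $z$. Observe that for fixed $h,\beta$ the objective $\sum_{i=1}^h\bigl(\tfrac{h}{i}+\tfrac{i}{h}\bigr)D_i - z$ is linear, the normalization $\sum_i D_i=1$ is an equality, the Nash constraint~(\ref{eq:optim1a}) is a single linear inequality, and the remaining constraints are the box constraints $D_i\ge 0$ and $\beta/h\le z\le\beta$. A bounded linear program attains its optimum at a vertex of the feasible polytope, so it suffices to bound the number of nonzero $D_i$ at a vertex. The polytope lives in $\mathbb{R}^{h+1}$ (coordinates $D_1,\dots,D_h,z$); at a vertex, $h+1$ linearly independent constraints must be tight. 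The equality $\sum_i D_i=1$ always accounts for one. Among the rest, at most one can come from~(\ref{eq:optim1a}), and at most one more from the pair $z\ge\beta/h$, $z\le\beta$ (these two cannot be simultaneously tight unless $\beta=0$, a degenerate case to note separately). That leaves at least $h+1-1-1-1=h-2$ tight constraints that must come from the nonnegativity bounds $D_i\ge 0$, forcing at least $h-2$ of the $D_i$ to vanish, hence at most two are positive. The first sentence of the lemma follows.

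Next I would argue the consecutivity claim. Suppose an optimal vertex has $D_p,D_q>0$ with $p<q-1$, i.e. not consecutive. Consider replacing the pair $(D_p,D_q)$ by a pair $(D_{p'},D_{q'})$ supported on indices "closer together" while keeping $\sum D_i=1$ and not increasing the left-hand side of~(\ref{eq:optim1a}) and not increasing the objective. The coefficient $f(i):=\tfrac{h}{i}+\tfrac{i}{h}$ in the objective is convex in $i$, and the coefficient $g(i):=\tfrac{2i}{h}-\tfrac1i$ in the Nash constraint is increasing and concave in $i$; the natural move is to shift mass from the outer indices toward an interior index. Concretely, I would take a small amount of mass $\varepsilon$ off $D_p$ and off $D_q$ in proportions that keep the constraint-value $g(p)D_p+g(q)D_q$ fixed (using monotonicity of $g$, this redistribution stays feasible and $\sum D_i$ can be rebalanced onto a single intermediate index), and then use convexity of $f$ to show the objective strictly decreases — contradicting optimality, or at worst producing another optimum with a "more consecutive" support, after finitely many steps reaching a consecutive pair. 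Because the problem for fixed $h,\beta$ is equivalent (as noted in the text preceding the lemma) to the maximization~(\ref{eq:optim0})--(\ref{eq:optim3a}), the solution we produce is optimal for that problem too.

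The main obstacle is the second part: making the "shift mass inward" argument airtight when the single inequality~(\ref{eq:optim1a}) is tight at the vertex, since then any perturbation must keep that constraint satisfied, and one must verify that the exchange direction is simultaneously objective-decreasing and constraint-feasible. The key quantitative fact to pin down is the interplay between the convexity of $f(i)=\tfrac hi+\tfrac ih$ and the behavior of $g(i)=\tfrac{2i}{h}-\tfrac1i$: I expect that for any $p<r<q$ one can write the unit vector $e_r$ as a suitable combination that dominates $e_p$ and $e_q$ in the constraint while being smaller in the objective, which is exactly a discrete three-point convexity/concavity estimate. I would also handle the boundary subtleties ($\beta=0$, or the optimal support touching $i=1$ or $i=h$) as small separate cases, and remark that Lemma~\ref{lem:twopaths}-type structural facts are not needed here — this is purely an analysis of the LP/relaxation~(\ref{eq:optim0})--(\ref{eq:optim3a}). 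Once at most two consecutive positive $D_i$ remain, together with the single scalar $z$ and the parameters $\beta,h$, we are left with a five-variable problem, as claimed.
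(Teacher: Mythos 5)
Your argument splits the lemma into two parts and handles them separately. The first part (at most two nonzero $D_i$) via LP vertex theory is a genuinely different and correct route from the paper: once $h$ and $\beta$ are fixed, (\ref{eq:minimize2}) is a bounded linear program over the compact polytope cut out by (\ref{eq:optim1a})--(\ref{eq:optim3a}), an optimum is attained at a vertex, and the count of linearly independent active constraints ($h+1$ needed in $\mathbb{R}^{h+1}$; one from the equality $\sum_i D_i=1$, at most one from (\ref{eq:optim1a}), at most one linearly independent one from $\beta/h\le z\le\beta$) forces at least $h-2$ of the $D_i\ge0$ to be tight. This is clean and self-contained, and it is not how the paper argues: the paper instead proves both claims of the lemma at once with a single perturbation, observing that if two indices $i_1\le i_2-2$ both have $D_{i_j}>0$, a strict improvement exists, which simultaneously kills any support of size $\ge 3$ (such a support always contains a pair at distance $\ge 2$) and any non-consecutive pair.

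The second part (consecutivity) is where you have a genuine gap. Your sketch of a ``shift mass inward'' move is the right spirit, but the load-bearing step is precisely the one you explicitly defer: verifying that perturbation amounts can be chosen to simultaneously preserve feasibility of (\ref{eq:optim1a}) and strictly decrease the objective. In the paper this is done by shifting $m_1$ from $D_{i_1}$ to $D_{i_1+1}$ and $m_2$ from $D_{i_2}$ to $D_{i_2-1}$, deriving a feasibility upper bound $\alpha_{i_1 i_2}$ and an improvement lower bound $\beta_{i_1 i_2}$ on the ratio $m_1/m_2$, and checking by cross-multiplication that $\beta_{i_1 i_2}<\alpha_{i_1 i_2}$ whenever $i_1+2\le i_2\le h$. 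You acknowledge this is ``the main obstacle'' and ``the key quantitative fact to pin down,'' but you do not carry it out, and your proposed variant (collapsing mass onto a \emph{single} interior index $r$, rebalancing $\sum_i D_i$) leads to a three-point determinant inequality in $\bigl(\tfrac{h}{i}+\tfrac{i}{h},\ \tfrac{2i}{h}-\tfrac1i\bigr)$ that is not obviously of the right sign and is messier than the paper's two-endpoint shift. As it stands, the consecutivity claim is asserted rather than proved; you would need to supply the explicit ratio comparison (or an equivalent sign computation) to close it.
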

\begin{proof}
Consider {an optimal} solution $(\{D_i\}_{i=1}^h,{z})$ of (\ref{eq:minimize2}).
Suppose for a contradiction that there exist two indices $i_1\leq i_2-2$ such that
$D_{i_1}>0$ and $D_{i_2}>0$.
We can modify the solution $\{D_i\}_{i=1}^h$ as follows.
Let $m_j$ ($j=1,2$) be real values with $0<m_j\le D_{i_j}$.
Subtract $m_1$ from $D_{i_1}$ and add it to $D_{i_1+1}$.
Subtract $m_2$ from $D_{i_2}$ and add it to $D_{i_2-1}$.
Then we still have  $\sum_{i=1}^h D_i=1$ and $D_i\ge0$ $(i=1,\dots,h)$.

We need to determine $m_1$ and $m_2$ so that constraint (\ref{eq:optim1a}) is still satisfied.
To this end, we investigate by how much the sum on the left-hand side of (\ref{eq:optim1a}) increases, i.e., 
\begin{align*}
\left(-\frac{2i_1}{h}+\frac1{i_1}+\frac{2(i_1+1)}{h}-\frac1{i_1+1}\right)m_1
+\left(\frac{2(i_2-1)}{h}-\frac1{i_2-2}-\frac{2i_2}{h}+\frac1{i_2}\right)m_2\\
= \left(\frac{1}{i_1}+\frac{2}{h}-\frac1{i_1+1}\right)m_1
-\left(\frac{2}{h}+\frac1{i_2-1}-\frac1{i_2}\right)m_2,
\end{align*}
which should be at most 0 in order to maintain a feasible solution. This is equivalent to requiring that $\frac{m_1}{m_2}$ be bounded from above by
\begin{equation}
\label{eq:m1m2}
  \left({\frac1{(i_2-1)i_2}+\frac2h}\right)
\left/\middle(                         {\frac1{(i_1+1)i_1}+\frac2h}\right) =:\alpha_{i_1i_2}\leq1,
\end{equation}
where the last inequality holds since $i_2\geq i_1+2$.
On the other hand, we aim at decreasing the objective function in (\ref{eq:minimize2}) with this procedure. Therefore, we require the increase of the objective value to be negative. From this we get
 \begin{align*}
 0
 & >
\left(-\frac{h}{i_1}-\frac{i_1}{h}
+\frac{h}{i_1+1}+\frac{i_1+1}{h}\right)m_1
+\left(\frac{h}{i_2-1}+\frac{i_2-1}{h}
-\frac{h}{i_2}-\frac{i_2}{h}\right)m_2\\
& = \left(\frac{-h}{i_1(i_1+1)}+\frac{1}{h}\right)m_1
        + \left(\frac{h}{(i_2-1)i_2}-\frac1h\right)m_2\\
\Rightarrow& \left(\frac{h}{(i_2-1)i_2}-\frac1h\right)m_2
<
\left(\frac{h}{i_1(i_1+1)}-\frac{1}{h}\right)m_1.
\end{align*}
Note that the coefficients of $m_1$ and $m_2$ are positive since $i_1+2\leq i_2\leq h$.
Therefore, requiring that the increase of the objective function be negative is equivalent to requiring that $\frac{m_1}{m_2}$ be greater than
\begin{equation}
\label{eq:b12}
\left({\frac{h}{(i_2-1)i_2}-\frac1h}\right)
\left/\middle({\frac{h}{i_1(i_1+1)}-\frac{1}{h}}\right)
=: \beta_{i_1i_2}
\end{equation}
From the definitions in (\ref{eq:m1m2}) and (\ref{eq:b12}),
doing crosswise multiplication, it is easy to check that $\beta_{i_1i_2} < \alpha_{i_1i_2}$ for $i_1+2\leq i_2\leq h$.
This shows that there exist positive values $m_1$ and $m_2$ such that
$\alpha_{i_1i_2} \geq  \frac{m_1}{m_2} > \beta_{i_1i_2}, \text{ and }m_j\leq D_{i_j}\text{ for }j=1,2.$
Thus, as a result of our modification of
the sequence $\{D_i\}_{i=1}^h$, the
objective function value in (\ref{eq:minimize2}) decreases  by a positive amount, and the constraints {(\ref{eq:optim1a})\mbox{--}(\ref{eq:optim3a})} are still satisfied.
This contradicts the optimality of $(\{D_i\}_{i=1}^h,{z})$.
\end{proof}
For an optimal solution $(\{D_i\}_{i=1}^h,{z})$ to   (\ref{eq:optim0})--(\ref{eq:optim3a})
as given in Lemma \ref{adj}, 
let $x\in[h-1]$ be the minimum index such that $D_i=0$ 
for all $i\in[h]\backslash\{{x},{x}+1\}$.
That is, $x$ is the minimum index such that $D_x>0$, or $x=h-1$.
Writing ${y}$ for $D_{{x}+1}$, we have $D_{x}=1-{y}$, and problem (\ref{eq:optim0}) -- (\ref{eq:optim3a}) transforms  to the following relaxation which drops the upper bound $\beta$ on ${z}$ in (\ref{eq:optim3a}).

\begin{align}
\label{eq:optimx}
 \frac{\ell^N(R)}{M(\pi^*)}\le\max &
\frac{ 1+\beta}
     { \frac{h}{x}+\frac{x}{h}-\left(\frac{h}{x(x+1)}-\frac{1}{h}\right) y  -2+\beta-z}\\
\label{eq:optimx2}
\text{s.t.} &  \frac{2x}h-\frac{1}{x}+\left(\frac{2}{h}+\frac{1}{x(x+1)}\right) y  +2z
	 \leq \frac{h-1}{h}+\beta\\
\label{eq:optimx2d}
& 1\le x \le h-1, x\in\mathbb N\\
\label{eq:optimx3}
& 0\le y \le1\\
\label{eq:optimx3b}
& \beta/{h}\le z
\end{align}

For convenience, we restate the problem (\ref{eq:optimx}) --
(\ref{eq:optimx3b}) as follows \be \label{eq:optimx1}
\frac{\ell^N(R)}{M(\pi^*)}\leq  \max\left\{ \frac{1 +
\beta}{f({x},{y},{z})}\, \left|\,g({x},{y},{z})  \le  0,\;  {x}\in
\mathbb N,\;{y}  \in  [0,1],\;{z}  \geq   \frac{\beta}h
\right.\right\}, \ee where we have relaxed the constraint on $x$ and
\begin{align*}
 f({x},{y},{z})&=\frac{h}{{x}}+\frac{{x}}{h}
-\left(\frac{h}{{x}({x}+1)}-\frac{1}{h}\right) {y}
-{z} - 2 + \beta,\\
g({x},{y},{z})&=
\frac{2x}h-\frac{1}{{x}}
+\left(\frac{2}{h}+\frac{1}{{x}({x}+1)} \right) {y} +2{z}-\frac{h-1}h-\beta.\end{align*}
We turn to
consider the corresponding minimization of $ f({x},{y},{z})$ under the same constraints. It is clear that
 the optimal value of the minimization is attained at $g({x},{y},{z}) =0$ (if $g({x},{y},{z}) <0$, we can increase $z$ and decrease the objective function). So we only need to consider the minimization problem, as well as its relaxation, with this equality constraint:
\begin{align}
\Omega_1 &:= \min\left\{   f({x},{y},{z}) \,
\left|\,g({x},{y},{z})=0, \; {x}\in  \mathbb N,\;{y}  \in  [0,1],\;{z}  \geq \beta/h \right.\right\}\label{o1}\\
\Omega_2 &:= \min\left\{   f({x},{y},{z}) \,
\left|\,g({x},{y},{z}) = 0,\;  {x}\ge1,\;{y}  \in  [0,1],\;{z} \geq \beta/h\right.\right\}\label{o2}\end{align}
\begin{observation}\label{om}
For $i=1$ or $2$, if $\Omega_i\ge\frac{1+\beta}3$, then $ \frac{\ell^N(R)}{M(\pi^*)}\leq3$.
\end{observation}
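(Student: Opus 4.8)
The plan is to trace the chain of reductions backwards to the ratio itself. First I would record the elementary fact that $1+\beta>0$ (since $\beta\ge 0$) and recall that, in the derivation leading to (\ref{eq:optimx1}), the denominator $f(x,y,z)$ was shown to be strictly positive on the whole feasible region — it inherits positivity from the denominator of (\ref{eq:optim0}), which is bounded below using $\frac{h}{i}+\frac{i}{h}\ge 2$. Hence maximizing $\frac{1+\beta}{f(x,y,z)}$ over the feasible set of (\ref{eq:optimx1}) is equivalent to minimizing $f(x,y,z)$ over that set and taking the reciprocal.

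Next I would identify that minimum with $\Omega_1$. By the observation already made before (\ref{o1}) — $f$ is strictly decreasing in $z$ while $g$ is strictly increasing in $z$, so any feasible point with $g(x,y,z)<0$ can be replaced, by raising $z$, with a feasible point having $g(x,y,z)=0$ and a smaller value of $f$ — the minimum of $f$ over the region $\{g\le 0\}$ is attained on $\{g=0\}$, and the latter problem is exactly the one defining $\Omega_1$. Therefore
\[
\frac{\ell^N(R)}{M(\pi^*)}\;\le\;\frac{1+\beta}{\Omega_1}.
\]

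Then I would observe that $\Omega_2$ is the relaxation of $\Omega_1$ obtained by replacing the integrality constraint $x\in\mathbb N$ with $x\ge 1$; enlarging the feasible region of a minimization problem can only lower its optimal value, so $\Omega_2\le\Omega_1$. Consequently, whichever of $i=1,2$ the hypothesis refers to, it forces $\Omega_1\ge\Omega_i\ge\frac{1+\beta}{3}>0$. Substituting into the displayed bound yields $\frac{\ell^N(R)}{M(\pi^*)}\le\frac{1+\beta}{(1+\beta)/3}=3$, as claimed.

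There is no genuine obstacle here: the argument is pure bookkeeping about reciprocals, relaxations, and the monotonicity in $z$ that was already exploited above. The only point needing a moment's care is the strict positivity of $f$ on the feasible region, which legitimizes turning ``maximize $\frac{1+\beta}{f}$'' into ``$\frac{1+\beta}{\min f}$''; note also that the implication is vacuously true unless $\Omega_i$, and hence $\Omega_1$, is positive, so no degenerate sign issues can arise.
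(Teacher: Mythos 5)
Your proposal is correct and follows essentially the same route the paper leaves implicit: combine (\ref{eq:optimx1}) with the fact that the minimum of $f$ over $\{g\le 0\}$ coincides with the $g=0$ problem defining $\Omega_1$ (raise $z$), use $\Omega_2\le\Omega_1$ because (\ref{o2}) relaxes $x\in\mathbb N$ to $x\ge1$, and take reciprocals. The only nuance is that the paper's positivity of the denominator was established under the constraint $z\le\beta$, which (\ref{eq:optimx1}) drops, so it does not simply ``inherit'' to the whole relaxed region; your closing remark already repairs this, since under the hypothesis $\Omega_1\ge\frac{1+\beta}{3}>0$, hence $f>0$ at every feasible point, which is exactly what legitimizes the reciprocal step.
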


\paragraph{Main ideas} In view of Observation \ref{om},
  we will prove    $\Omega_1\ge\frac{1+\beta}3$  for $h\in\{3,4,6\}$ in Section \ref{sec346}, and $\Omega_2\ge\frac{1+\beta}3$ for $h\geq7$ in Section \ref{large}. The proof for $\Omega_1$ utilizes a case analysis, which is simplified by the fact that every optimal solution of (\ref{o1}) when $h\in\{3,4,6\}$ has its $y$ or $z$ touch the boundary.
  The key idea for lower bounding $\Omega_2$ is using the fact that the optimal solution of (\ref{o2}) must
be a KKT point (a solution satisfying the Karush-Kuhn-Tucker (KKT) conditions).  We will bound the values of objective function $f$ at all KKT points of (\ref{o2}) from below by $\frac{1+\beta}3$.

To lower bound $\Omega_1$ and $\Omega_2$, we need to consider the derivatives of the objective and constraint functions.
 Using ${x}\ge1$ and ${y}\in[0,1]$, we obtain
\begin{align*}
\frac{\partial f}{\partial {x}}=&\frac{1}{h}-\frac{h(1-{y})}{{x}^{2}}-\frac{h{y} }{({x}+1)^{2}},
& \frac{\partial f}{\partial {y} }=&\frac{1}{h}-\frac{h}{{x}({x}+1)}<0,
& \frac{\partial f}{\partial {z}}=&-1,\\
\frac{\partial g}{\partial {x}}=&\frac{2}{h}+\frac{1-{y} }{{x}^{2}}+\frac{{y} }{({x}+1)^{2}}>0,
& \frac{\partial g}{\partial {y} }=&\frac{2}{h}+\frac{1}{{x}({x}+1)}>0,
& \frac{\partial g}{\partial {z}}=&2.
\end{align*}
For brevity we define
$ \chi:=( \sqrt{2h^{2}-h+1}-1)/{2}\text{ and }\nu:= \sqrt{2h^{2}-h}\,/{2}.$
It is straightforward to verify the following equivalences.
\begin{lemma}
 \label{lem:9}
\begin{enumerate}[(i)]
\item
 $\frac{\partial f/\partial {x} }{\partial g/\partial {x} }
= \frac{\partial f/\partial {y}}{\partial g/\partial {y}}\Leftrightarrow{y} = \frac{{x}+1}{2{x}+1}
\Leftrightarrow\frac{\partial f}{\partial x}=\frac{\partial f}{\partial y}
\Leftrightarrow\frac{\partial g}{\partial x}=\frac{\partial g}{\partial y}$.
\item
 $\frac{\partial f/\partial {y} }{\partial g/\partial {y} }\ge\frac{\partial f/\partial{{z}}}{\partial g/\partial{{z}}}\Leftrightarrow
 x \ge \chi$, and  $\frac{\partial f/\partial {y} }{\partial g/\partial {y} }=\frac{\partial f/\partial{{z}}}{\partial g/\partial{{z}}}\Leftrightarrow
x =\chi$.
\item
 If ${y}\in\{0,1\}$, then 
 $\frac{\partial f/\partial {x}}{\partial g/\partial {x}}=\frac{\partial f/\partial{{z}}}{\partial g/\partial{{z}}}\Leftrightarrow {x}+{y}=\nu$.
\item
 If ${y}=\frac{{x}+1}{2{x}+1}$, then $\frac{\partial f/\partial {x}}{\partial g/\partial {x}}=\frac{\partial f/\partial{{z}}}{\partial g/\partial{{z}}}\Leftrightarrow x=\chi$.
\end{enumerate}
\end{lemma}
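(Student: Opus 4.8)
The plan is to reduce all four parts to elementary algebra. Introduce the abbreviations $u := \frac{1-y}{x^2} + \frac{y}{(x+1)^2}$ and $v := \frac{1}{x(x+1)}$, so that the six partials become $\partial f/\partial x = \frac1h - hu$, $\partial f/\partial y = \frac1h - hv$, $\partial f/\partial z = -1$, $\partial g/\partial x = \frac2h + u$, $\partial g/\partial y = \frac2h + v$, and $\partial g/\partial z = 2$. Since $u,v \ge 0$, all of $\partial g/\partial x$, $\partial g/\partial y$, $\partial g/\partial z$ are strictly positive on the feasible region, so every ratio below is well defined and each cross-multiplication preserves the (in)equality.

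For part~(i), the equivalences $\partial f/\partial x = \partial f/\partial y$ and $\partial g/\partial x = \partial g/\partial y$ are each immediately equivalent to $u = v$. For the ratio equality, cross-multiplying and expanding $(\frac1h - hu)(\frac2h + v) = (\frac1h - hv)(\frac2h + u)$ makes the $\frac{2}{h^2}$ and $-huv$ terms cancel on both sides, leaving $(u-v)(\frac1h + 2) = 0$, i.e.\ $u = v$ again. Finally, multiplying $u = v$ by $x^2(x+1)^2$ and using $(x+1)^2 - x^2 = 2x+1$ reduces it to $x+1 = y(2x+1)$, i.e.\ $y = \frac{x+1}{2x+1}$. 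This shows all four conditions of~(i) are equivalent.

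For part~(ii), because $\partial g/\partial y > 0$, the inequality $\frac{\partial f/\partial y}{\partial g/\partial y} \ge \frac{\partial f/\partial z}{\partial g/\partial z} = -\frac12$ is equivalent to $2(\frac1h - hv) \ge -(\frac2h + v)$, which simplifies to $\frac4h \ge (2h-1)v$, i.e.\ $x^2 + x - \frac{h(2h-1)}{4} \ge 0$; its unique positive root is $\frac{-1 + \sqrt{2h^2 - h + 1}}{2} = \chi$, giving the claim together with the equality case. For part~(iii), the key observation is that when $y \in \{0,1\}$ we have $u = \frac{1}{(x+y)^2}$ (namely $\frac1{x^2}$ if $y=0$ and $\frac1{(x+1)^2}$ if $y=1$); then, exactly as in~(ii), $\frac{\partial f/\partial x}{\partial g/\partial x} = -\frac12$ is equivalent to $\frac4h = (2h-1)u$, i.e.\ $(x+y)^2 = \frac{2h^2-h}{4}$, i.e.\ $x+y = \nu$. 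For part~(iv), if $y = \frac{x+1}{2x+1}$ then by~(i) $\frac{\partial f/\partial x}{\partial g/\partial x} = \frac{\partial f/\partial y}{\partial g/\partial y}$, so the claimed equality is equivalent to $\frac{\partial f/\partial y}{\partial g/\partial y} = \frac{\partial f/\partial z}{\partial g/\partial z}$, which by the equality case of~(ii) holds iff $x = \chi$.

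There is no genuine obstacle: the whole lemma is bookkeeping. The only points requiring care are (a) confirming the positivity of $\partial g/\partial x$ and $\partial g/\partial y$ so that the cross-multiplications are legitimate, and (b) spotting the two clean simplifications, $u = v \Leftrightarrow y = \frac{x+1}{2x+1}$ and $u = \frac{1}{(x+y)^2}$ for $y \in \{0,1\}$, after which everything follows by collecting terms.
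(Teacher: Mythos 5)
Your proof is correct and fills in the verification that the paper omits (it merely asserts the equivalences are "straightforward to verify"). The abbreviations $u = \frac{1-y}{x^2} + \frac{y}{(x+1)^2}$ and $v = \frac{1}{x(x+1)}$ make the bookkeeping transparent, you correctly note the sign conditions on $\partial g/\partial x$ and $\partial g/\partial y$ that justify the cross-multiplications, and the observation $u = \frac{1}{(x+y)^2}$ when $y\in\{0,1\}$ is exactly the right shortcut for part~(iii); the quadratics you solve produce $\chi$ and $\nu$ as claimed.
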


\begin{lemma}\label{eo}
Let $({x}^*,{y}^{*},{z}^*)$ be an optimal solution to   (\ref{o1}) or  (\ref{o2}).
\begin{enumerate}[(i)]
\item If ${x}^*<\chi$, then ${y}^{*}=1$ or ${z}^*=\beta/h$.
\item If ${x}^*>\chi$, then  ${y}^{*}=0$.
\item $x^*\le h-1$.
\end{enumerate}
\end{lemma}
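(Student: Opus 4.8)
The plan is to analyze the optimal solution $({x}^*,{y}^*,{z}^*)$ of the equality-constrained problems (\ref{o1})/(\ref{o2}) via a local-perturbation (exchange) argument, pushing the solution toward the boundary of the feasible box whenever the derivative ratios allow it. The key tool is the substitution rates recorded in Lemma~\ref{lem:9}: since $g({x},{y},{z})=0$ is the active constraint, a feasible perturbation in any pair of variables must keep $g$ at $0$, so the direction of improvement of $f$ is governed entirely by comparing ratios such as $\frac{\partial f/\partial {y}}{\partial g/\partial {y}}$ against $\frac{\partial f/\partial {z}}{\partial g/\partial {z}}$. Recall from the displayed derivative table that $\partial g/\partial x,\partial g/\partial y>0$ and $\partial g/\partial z=2>0$, while $\partial f/\partial y<0$ and $\partial f/\partial z=-1<0$; this sign information is what makes the exchange moves well-defined.

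For part~(ii), suppose $x^*>\chi$ but $y^*>0$. By Lemma~\ref{lem:9}(ii), $x^*>\chi$ gives $\frac{\partial f/\partial {y}}{\partial g/\partial {y}} > \frac{\partial f/\partial {z}}{\partial g/\partial {z}}$ (strict, since $x^*\neq\chi$). I would then decrease $y$ by a small $\varepsilon>0$ (possible since $y^*>0$) and simultaneously adjust $z$ to restore $g=0$; because $\partial g/\partial y>0$ and $\partial g/\partial z>0$, restoring $g$ requires increasing $z$, which is always feasible since the only constraint on $z$ is the lower bound $z\ge\beta/h$. The net change in $f$ is, to first order, $\varepsilon$ times $\bigl(\frac{\partial f/\partial {y}}{\partial g/\partial {y}} - \frac{\partial f/\partial {z}}{\partial g/\partial {z}}\bigr)$ up to the positive factor $\partial g/\partial y$, hence strictly negative — contradicting optimality. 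So $y^*=0$. For part~(i), the dual situation: if $x^*<\chi$ then by Lemma~\ref{lem:9}(ii) the inequality reverses, so now \emph{increasing} $y$ (if $y^*<1$) while decreasing $z$ (if $z^*>\beta/h$) strictly decreases $f$; the only way this move is blocked is if $y^*=1$ or $z^*=\beta/h$, which is exactly the claimed dichotomy. One must be slightly careful that the perturbation stays in the box — the ranges $y\in[0,1]$ and $z\ge\beta/h$ are precisely the constraints that can obstruct the move, and obstruction is what yields the conclusion.

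For part~(iii), $x^*\le h-1$: here I would argue that if $x^*\ge h$ (relevant only for $\Omega_2$, where $x$ is relaxed to $x\ge1$), then $x^*$ is strictly beyond the range where the original variable lived, and I can perturb $x$ downward. Using Lemma~\ref{lem:9}(ii)/(iii) together with the fact that for large $x$ the term $\frac{h}{x}$ in $f$ is decreasing while $\frac{x}{h}$ is increasing, one checks that at $x=h$ the marginal trade-off $\frac{\partial f/\partial x}{\partial g/\partial x}$ versus $\frac{\partial f/\partial z}{\partial g/\partial z}$ favors decreasing $x$ (compensating via $z$); alternatively, one can directly note that $f$ evaluated along $g=0$ with $y$ fixed at its boundary value is eventually increasing in $x$, using $\partial f/\partial x = \frac1h - \frac{h(1-y)}{x^2} - \frac{hy}{(x+1)^2}$, which for $x\ge h$ and any $y\in[0,1]$ is nonnegative, combined with the coupling through $z$. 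Since decreasing $x$ decreases $f$, optimality forces $x^*\le h-1$.

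The main obstacle I anticipate is not any single inequality but the bookkeeping of which perturbation direction is feasible: each of the three cases requires simultaneously checking (a) that the chosen variable can move in the stated direction without leaving its box, (b) that the compensating variable ($z$, or in part (iii) possibly $x$) can absorb the change in $g$, and (c) that the first-order change in $f$ has the right sign via the ratio comparison from Lemma~\ref{lem:9}. The cleanest way to organize this is to treat $g=0$ as implicitly defining $z$ as a smooth function of $(x,y)$ on the relevant domain (legitimate since $\partial g/\partial z = 2 \neq 0$), reducing $f$ to a function $\tilde f(x,y)$ of two variables whose partials are $\frac{\partial f}{\partial y} - \frac12\frac{\partial g}{\partial y}\cdot(-1)\cdot\frac{-1}{\ } $ — more transparently, $\frac{\partial \tilde f}{\partial y} = \frac{\partial f}{\partial y} - \frac{\partial f/\partial z}{\partial g/\partial z}\,\frac{\partial g}{\partial y}$, and similarly for $x$ — and then the whole lemma becomes: the sign of $\partial\tilde f/\partial y$ is controlled by Lemma~\ref{lem:9}(ii), so $\tilde f$ is monotone in $y$ on each side of $\chi$, forcing $y^*$ to a box endpoint exactly as claimed, and monotonicity in $x$ beyond $h-1$ forces part~(iii). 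I would present it in this reduced-variable form to avoid repeating the same perturbation computation three times.
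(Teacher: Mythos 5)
Your treatment of parts (i) and (ii) matches the paper's proof: both use the exchange argument based on the ratio comparison $\frac{\partial f/\partial y}{\partial g/\partial y}$ versus $\frac{\partial f/\partial z}{\partial g/\partial z}$ from Lemma~\ref{lem:9}(ii), perturbing $y$ and $z$ in opposite directions along the constraint $g=0$ and reading off from $x^*\lessgtr\chi$ which direction strictly decreases $f$, so that optimality forces the obstructing bound ($y^*=0$, or $y^*=1$ / $z^*=\beta/h$) to be active. This is correct, and your remark about reducing to a two-variable function $\tilde f(x,y)=f(x,y,z(x,y))$ via the implicit function theorem is a clean way to package it.

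For part (iii) there is a gap. You first consider $x^*\ge h$ and, alternatively, note that the unconstrained partial $\partial f/\partial x=\frac1h-\frac{h(1-y)}{x^2}-\frac{hy}{(x+1)^2}$ is nonnegative for $x\ge h$. But the conclusion to be proved is $x^*\le h-1$, and for the continuous problem~(\ref{o2}) the interval $(h-1,h)$ must be excluded as well; there the unconstrained partial can be strictly negative (e.g.\ $y=0$, $x=h-\tfrac12$ gives $\frac1h-\frac{h}{(h-1/2)^2}<0$), so the ``nonnegative for $x\ge h$'' observation does not suffice. Your other suggestion --- checking the trade-off $\frac{\partial f/\partial x}{\partial g/\partial x}$ versus $\frac{\partial f/\partial z}{\partial g/\partial z}$ --- is the right quantity, but you only evaluate it at the single point $x=h$. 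The complete argument should use the \emph{reduced} derivative $\partial\tilde f/\partial x=\partial f/\partial x+\tfrac12\,\partial g/\partial x = \frac2h-(h-\tfrac12)\bigl(\frac{1-y}{x^2}+\frac{y}{(x+1)^2}\bigr)$, which by Lemma~\ref{lem:9}(iii) vanishes at $x+y=\nu$ and is positive for $x+y>\nu$; since $\nu<h-1$ once $h\ge3$, this is positive on all of $x>h-1$ for every $y\in[0,1]$, and decreasing $x$ with compensation via $z$ strictly decreases $f$, which is exactly what is needed. (To be fair, the paper's own justification of (iii) is similarly terse and states the unconstrained-derivative criterion loosely; the point is that the clean version goes through the reduced derivative and the threshold $\nu$, not through $\partial f/\partial x\ge0$ at $x\ge h$.)
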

\begin{proof}
(i)--(ii)
If  ${x}^*<\chi$ (resp. ${x}^*>\chi$), then it follows from Lemma \ref{lem:9}(ii) that $\frac{\partial f/\partial {y} }{\partial g/\partial {y} }<\frac{\partial f/\partial{{z}}}{\partial g/\partial{{z}}}$ (resp. $\frac{\partial f/\partial {y} }{\partial g/\partial {y} }>\frac{\partial f/\partial{{z}}}{\partial g/\partial{{z}}}$).  Since increasing ${y}^{*}$ and decreasing ${z}^*$ (resp. increasing ${z}^*$ and decreasing ${y}^{*}$) cannot give a better solution for the problem, it must be the case that ${y}^{*}=1$ or ${z}^*=\beta/h$  (resp. ${y}^{*}=0$).

(iii) We have $\frac{\partial f}{\partial {x}}=0$ only if $x=h-1$,
and $\frac{\partial f}{\partial {x}}<0$ only if $1\le x<h-1$. Therefore, if $x^*>h-1$, decreasing $x^*$ would give smaller objective value.
\end{proof}

\subsection{Covering Equilibria with  $h\in\{3,4,6\}$}\label{sec346}
In this case we lower bound   $\Omega_1 $ by $(1+\beta)/3$, which
along with Observation \ref{om} implies the upper bound of 3 on
$\ell^N(R)/M(\pi^*)$ for $h\in\{3,4,6\}$.

\begin{lemma}\label{smallh}
For $h\in\{3,4,6\}$, $\Omega_1\ge\frac{1+\beta}3$.
\end{lemma}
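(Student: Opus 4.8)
\textbf{Proof plan for Lemma~\ref{smallh}.}
The goal is to show $\Omega_1=\min f(x,y,z)\ge\frac{1+\beta}{3}$ for each of the three values $h\in\{3,4,6\}$, where the minimum is taken over integer $x$ with $1\le x\le h-1$ (by Lemma~\ref{eo}(iii) we already know $x^*\le h-1$), $y\in[0,1]$, $z\ge\beta/h$, subject to $g(x,y,z)=0$. The plan is to exploit the fact that for these small $h$ there are only finitely many admissible integer values of $x$, so the problem decomposes into a handful of one- or two-variable subproblems that can be solved by hand. First I would use $g(x,y,z)=0$ to eliminate $z$: writing $z=\frac12\bigl(\frac{h-1}{h}+\beta-\frac{2x}{h}+\frac1x-(\frac2h+\frac1{x(x+1)})y\bigr)$, substitute into $f$ to obtain, for each fixed integer $x$, an affine function of $y$ (and an affine function of $\beta$), say $f=\phi_x(y)+c_x\beta$ after simplification; one checks $c_x=\frac12$ for all $x$, so it suffices to prove $\phi_x(y)\ge\frac13+\frac{\beta}{6}\cdot\text{(something)}$ — more precisely, after the substitution the inequality $f\ge\frac{1+\beta}{3}$ becomes a linear inequality in $y$ with coefficients depending only on $h$ and $x$, plus a $\beta$-term that I would track carefully to confirm it is handled uniformly.

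Next, because $f$ is affine in $y$ on the interval $[0,1]$ once $z$ is eliminated (the constraint $z\ge\beta/h$ cuts off part of this interval, giving an endpoint $y=y_{\max}(x,\beta)$ where $z=\beta/h$), the minimum over $y$ is attained at an endpoint: either $y=0$, $y=1$, or $z=\beta/h$. This is exactly the structural simplification flagged in the ``Main ideas'' paragraph and in Lemma~\ref{eo}(i)--(ii): if $x<\chi$ the minimizer has $y=1$ or $z=\beta/h$, and if $x>\chi$ it has $y=0$ (and the borderline $x=\chi$ is not an integer for these $h$, which I would verify by checking $2h^2-h+1$ is not an odd perfect square for $h\in\{3,4,6\}$). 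So for each $h$ I would list the integers $x\in\{1,\dots,h-1\}$, classify each as $x<\chi$ or $x>\chi$ using $\chi=(\sqrt{2h^2-h+1}-1)/2$ (e.g. for $h=3$, $\chi=(\sqrt{16}-1)/2=3/2$, so $x=1$ is below and $x=2$ is above; for $h=4$, $\chi=(\sqrt{29}-1)/2\approx 2.19$; for $h=6$, $\chi=(\sqrt{67}-1)/2\approx 3.59$), and then check the inequality $f\ge\frac{1+\beta}{3}$ only at the relevant endpoint(s) for each $x$.

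Each such check reduces to a single inequality of the form $A_x+B_x\beta\ge\frac{1+\beta}{3}$, i.e. $A_x-\frac13\ge(\frac13-B_x)\beta$, which must hold for all $\beta\ge0$ (recall the general problem allows $\beta$ to be any nonnegative real). Since $\beta$ ranges over an unbounded set, this forces two scalar inequalities per case: $A_x\ge\frac13$ and $B_x\le\frac13$; I would verify both by direct arithmetic with the explicit rational values of $h$ and $x$. In the case $z=\beta/h$ the expression also picks up extra $\beta$-dependence, and I would need the bound $z\le\beta$ (equivalently $y$ still feasible, $y\le1$) to be consistent — here one uses $\beta/h\le z$ together with the eliminated form of $z$ to get a lower bound on $y$ or an upper bound on $\beta$ relative to the other quantities, but since we are minimizing and the $z=\beta/h$ branch is itself an endpoint, I expect it to yield the tightest and hence decisive inequality. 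The main obstacle I anticipate is bookkeeping rather than conceptual: keeping the $\beta$-terms straight through the elimination of $z$ and making sure that in the $z=\beta/h$ branch the resulting constraint on $y$ (namely $0\le y\le 1$) does not force us back to the interior — but Lemma~\ref{eo} already guarantees an optimal solution sits at one of these three boundary configurations, so the finite case check is exhaustive, and the computation for $h\in\{3,4,6\}$ is short enough to carry out explicitly (and can be cross-checked by Maple as the authors note).
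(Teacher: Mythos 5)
Your overall plan — restrict $x^*$ to $\{1,\dots,h-1\}$ by Lemma~\ref{eo}(iii), use Lemma~\ref{eo}(i)--(ii) (with $\chi$ non-integer) to force an optimal solution onto one of the three boundary configurations $y=0$, $y=1$, or $z=\beta/h$, and then do a finite case check — is exactly the paper's plan (its Tables~\ref{table:1} and~\ref{table:2}). The gap is in your verification logic. You claim that after eliminating $z$ each check reduces to $A_x + B_x\beta \ge \frac{1+\beta}{3}$ which ``must hold for all $\beta\ge0$'' and therefore ``forces $A_x\ge\frac13$ and $B_x\le\frac13$.'' That is false, and if you actually carried out the arithmetic several of the cases would appear to fail. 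For example, for $h=3$, $x^*=1$, $y^*=1$, eliminating $z$ via $g=0$ gives $z^*=\frac{\beta}{2}-\frac{1}{12}$ and $f=\frac14+\frac{\beta}{2}$, so $A_x=\frac14<\frac13$ and $B_x=\frac12>\frac13$; your two scalar tests both fail, yet this branch does satisfy $f\ge\frac{1+\beta}{3}$.

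The missing ingredient is that each branch is only a candidate optimum for those $\beta$ making it \emph{feasible}, and feasibility imposes bounds on $\beta$ that you must carry into the inequality. In the $y^*\in\{0,1\}$ branch, the eliminated $z^*$ has the form $\frac{\beta}{2}+\text{const}$, and the constraint $z^*\ge\beta/h$ yields (since $\frac12>\frac1h$) a \emph{lower} bound on $\beta$ when the constant is negative — e.g.\ $\beta\ge\frac12$ in the case above, which is precisely what makes $\frac14+\frac{\beta}{2}\ge\frac{1+\beta}{3}$ hold. The paper records these $\beta$-bounds in the footnotes to Tables~\ref{table:1} and~\ref{table:2}; they are not cosmetic but load-bearing. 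You do mention the constraint $z\ge\beta/h$, but you describe it as yielding ``an upper bound on $\beta$'' and treat it as an afterthought, whereas the lower bounds on $\beta$ it produces are essential for several cases, and without them your proposed verification does not go through. Similarly, in the $z=\beta/h$ branch the requirement $y^*\in[0,1]$ (where $y^*$ now depends linearly on $\beta$) can rule out a given $x^*$ entirely (the ``infeasible'' entries in Table~\ref{table:1}), another point your write-up does not handle.
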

\begin{proof} Let $({x}^*,{y}^*,{z}^*)$ denote an optimal solution to
problem (\ref{o1}). Notice that ${x}^*\in[h-1]$ is an integer, which
cannot be equal to the noninteger $\chi$ for any $h\in\{3,4,6\}$. By
Lemma \ref{eo}(i)--(ii), we have  ${z}^*=\beta/h$ or ${y}^*=1$ if
${x}^*<\chi$ and ${y}^*=0$ otherwise. We calculate and estimate
$\Omega_1=f({x}^*,{y}^*,{z}^*)$ in Tables \ref{table:1} and
\ref{table:2} below by checking all necessary ${x}^*$, using Lemma
\ref{eo}(iii) (see the fourth column of Table  \ref{table:1} and the
third column of Table  \ref{table:2}). Table  \ref{table:1} presents
the cases for ${x}^*<\chi$ and ${z}^*=\beta/h$, where
${y}^*\in[0,1]$ is determined by $g({x}^*,{y}^{*},{z}^*)=0$, and
Table \ref{table:2} presents the cases for ${y}^*\in\{0,1\}$. In all
cases we obtain the claimed lower bound.
\end{proof}
\subsection{Covering Equilibria with  $h\ge7$}\label{large}
In this section, we assume $h\ge7$. Our goal is to prove the
optimal objective value $\Omega_2$ of problem (\ref{o2}) is at most
$ ({1+\beta})/3$ for all $h\ge7$. Throughout Section \ref{large}, we
assume $(x^*,y^*,z^*)$ to be a {\em fixed optimal solution} to
(\ref{o2}) such that $y^*$ is minimum. In particular, since
$f(x,1,z)=f(x+1,0,z)$ and $g(x,1,z)=f(x+1,0,z)$ for all
$x>0,z\in\mathbb R$, we can assume without loss of generality that
$y^*<1$. In the following, we distinguish among three cases:
\begin{enumerate}[1)]
\item ${x}^*=1$ (Claim \ref{case1}),
\item ${x}^*>1$ and ${y}^*=0$  (Claim \ref{case2}), and
\item ${x}^*>1$ and $0<y^*<1$ (Claim \ref{case3}).
\end{enumerate}

\begin{table}[t]
\begin{center}
{
\renewcommand{\arraystretch}{1.5}
\begin{tabular}{c| c  c||  c|| r c| l}
\hline
\mbox{}\ \ $ h$\mbox{}\ \  & \mbox{}\ \ ${z}^*$\mbox{}\ \  &\mbox{}\ \ $\chi$\mbox{}\ \  & \mbox{}\ \ ${x}^* $ \mbox{}\ \   
&\mbox{}\ \  $g({x}^* ,{y}^*  ,{z}^* ) = 0$    &${y}^* $ &\mbox{}\ \ $\mbox{}\ \  \Omega_1=f({x}^*,{y}^*,{z}^*) $ \\  
 \hline

3& $\frac{\beta}3$ &1.5  & 1 & $\frac{7}6{y}^*+2{z}^*-1-\beta=0$& $\frac{6}{7}+\frac{2\beta}{7}$ &\mbox{}\ \ $\frac{1+\beta}{3}$ \\
\hline

& & & 1  & ${y}^*+2{z}^*-\frac54-\beta=0$ &$\not\in[0,1]$    &
\mbox{}\ \  infeasible \\

\raisebox{1ex}{4} & \raisebox{1ex}{$\frac{\beta}4$} &
\raisebox{1.3ex}{2.19}
& 2    &$\frac23{y}^*+2{z}^*-\frac14-\beta=0$ &$\frac38+\frac34\beta$   &\mbox{}\ \ $ \frac{75-18\beta }{32}>\frac{1+\beta}{3}$ \\
\hline

& & & 1    &$\frac{5 }{6}{y}^*+2{z}^*-\frac32-\beta=0$
& $\not\in[0,1]$ &\mbox{}\ \  infeasible\\

 6&$\frac{ \beta}6$ &3.59  & 2    &$\frac{1 }{2}{y}^*+2{z}^* -\frac23-\beta=0$
 & $\not\in[0,1]$    & \mbox{}\ \ infeasible \\

 &  & & 3  &\mbox{}\ \ $\frac{5 }{12}{y}^*+2{z}^*-\frac16-\beta=0$\mbox{}\ \
 &\mbox{}\ \  $\frac25+\frac{8\beta}5$ \mbox{}\ \   &\mbox{}\ \ $\frac{71-21\beta }{30}>\frac{1+\beta}{3}{}$\footnotemark \\  
\hline 
\end{tabular}
}
\vspace{3mm}\caption{\label{table:1} $\Omega_1\ge\frac{1+\beta}3$ when ${x}^*<\chi$ and ${z}^*=\frac{\beta}h$ for $h=3,4,6$. }
\end{center}
\end{table}
\footnotetext{Note that $\beta\le3/8$.}

 \begin{table}[t]
 \begin{center}
{
\renewcommand{\arraystretch}{1.5}
\begin{tabular}{c| c ||  c|| c c| l}
\hline
\mbox{}\ \ $ h$\mbox{}\ \  &\mbox{}\ \  $\chi $\mbox{}\ \  &\mbox{}\ \  $  {x}^* $\mbox{}\ \ 
&\mbox{}\ \  ${y}^*$  \mbox{}\ \    &\mbox{}\ \ ${z}^* $\mbox{}\ \
&\mbox{}\ \
$\Omega_1= f({x}^*,{y}^*,{z}^*)$ \mbox{}\ \ \\

 \hline
3 & 1.5 &  1, 2 &\mbox{}\ \ $2-x^*$&   $\frac{\beta}2-\frac1{12}$\mbox{}\ \  &\mbox{}\ \  {$\frac{9-2\beta }{4}\ge\frac{1+\beta}{3}$\footnotemark}\\
\hline
& & 1 &$1 $ &$  \frac{\beta}2+\frac18\mbox{}\ \   $&\mbox{}\ \  $\frac{19-4\beta }{8}\ge\frac{1+\beta}{3}$\\

 \raisebox{1ex}{4} & \mbox{}\ \ \raisebox{1ex}{2.19}\mbox{}\ \  & 2, 3 &\mbox{}\ \ $3-x^*$ & $ \frac{\beta}2-\frac5{24} $ \mbox{}\ \ & \mbox{}\ \ $ \frac{55-12\beta }{24}>\frac{1+\beta}{3}$\footnotemark\\
\hline

   &   & 1&$ 1$ &$ \frac{\beta}2+\frac13\mbox{}\ \  $&\mbox{}\ \ $\frac{6-\beta }{2}>\frac{1+\beta}{3}$\\

   &   & 2 &$ 1  $ &$ \frac{\beta}2+\frac1{12}\mbox{}\ \  $& \mbox{}\ \ $\frac{29-6\beta }{12}>\frac{1+\beta}{3}$\\

 \raisebox{1ex}{6} & \mbox{}\ \  \raisebox{1ex}{3.59}\mbox{}\ \ & 3, 4  & \mbox{}\ \  $4-x^*$ & $ \frac{\beta}2-\frac18 $\mbox{}\ \ &
 $\mbox{}\ \  \frac{55-12\beta }{24}>\frac{1+\beta}{3}$\footnotemark\\

    &  & 5 &$ 0 $ &$ \frac{\beta}2-\frac{19}{60} $\mbox{}\ \ &\mbox{}\ \  $\frac{47-10\beta }{20}>\frac{1+\beta}{3}$\\
\hline 
\end{tabular}
}
\vspace{3mm}\caption{\label{table:2} $\Omega_1\ge\frac{1+\beta}3$ when ${x}^*<\chi$  and ${y}^*=1$ (resp.  ${x}^*>\chi$ and ${y}^*=0$) for $h=3,4,6$.}
\end{center}
\end{table}

\addtocounter{footnote}{-2}
\footnotetext{Note that $\beta\ge1/2$.}
\stepcounter{footnote}
\footnotetext{Note that $\beta\ge5/6$.}
\stepcounter{footnote}
\footnotetext{Note that $\beta\ge3/8$. }

Our basic tool is the KKT  conditions which $(x^*,y^*,z^*)$ must satisfy. For notational convenience, we also express the  constraints $x\ge1,y\in[0,1],z\ge\beta/h$ as $g_i(x,y,z)\le0$, $i=1,2,3,4$, respectively, where $g_1(x,y,z)=-x+1$, $g_2(x,y,z)=y-1$, $g_3(x,y,z)=-y$ and $g_4(x,y,z)=-z-\beta/h$. By the KKT conditions on the minimization problem (\ref{o2}), there exist constant $\lambda$ and nonnegative constants $\mu_i$ ($1\le i\le 4)$ such that
\begin{align}\label{kt}
\nabla f(x^*,y^*,z^*)+\lambda\nabla g(x^*,y^*,z^*)+\sum_{i=1}^4\mu_i\nabla g_i(x^*,y^*,z^*)&=\bfm0\\
\label{kt2}
\mu_ig_i( x^*,y^*,z^*)&=0\text{ for }i\in[4].
\end{align}

\paragraph{\bf Case 1:} ${x}^*=1$.\quad This case is handled by the following claim.
\begin{claim}\label{case1}
$\min\left\{   f({x},{y},{z}) \,
\left|\,g({x},{y},{z}) = 0,\,{x}=1,\,{y}  \in  [0,1] \right.\right\}>(1+\beta)/3$.
\end{claim}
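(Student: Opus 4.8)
\textbf{Proof plan for Claim~\ref{case1}.}
The plan is to substitute $x=1$ throughout and reduce the claim to a one-variable estimate. Setting $x=1$, the functions become $f(1,y,z)=h+\tfrac1h-\bigl(\tfrac h2-\tfrac1h\bigr)y-z-2+\beta$ and $g(1,y,z)=\tfrac2h-1+\bigl(\tfrac2h+\tfrac12\bigr)y+2z-\tfrac{h-1}h-\beta$. From $g(1,y,z)=0$ I would solve for $z$ in terms of $y$, namely $z=\tfrac12\bigl(\tfrac{h-1}h+\beta-\tfrac2h+1-(\tfrac2h+\tfrac12)y\bigr)$, and plug this into $f$. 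Since $\partial f/\partial z=-1<0$ and $\partial g/\partial z=2>0$, eliminating $z$ via the active constraint is legitimate; the resulting objective $\varphi(y):=f(1,y,z(y))$ is affine in $y$ (both $f$ and the constraint are affine in $(y,z)$ for fixed $x=1$), so $\varphi$ is minimized at one of the endpoints $y=0$ or $y=1$.

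Next I would evaluate $\varphi(0)$ and $\varphi(1)$ explicitly as functions of $h$ and $\beta$, and check in each case that the value strictly exceeds $(1+\beta)/3$. Both endpoint values are of the form (linear in $h$, with a $1/h$ term) $+\,c\beta$ for an explicit constant $c<1/3$, so the inequality $\varphi(\cdot)\ge(1+\beta)/3$ reduces to a clean inequality in $h$ alone after moving the $\beta$ terms to one side; using $h\ge7$ (the regime of Section~\ref{large}, where this claim is invoked) makes the $h$-part dominate comfortably. One should also double-check feasibility: for the minimizing endpoint the solved value $z(y)$ must satisfy $z\ge\beta/h$; if at some endpoint $z(y)<\beta/h$, that endpoint is infeasible and only the other endpoint (or the boundary $z=\beta/h$, which forces a specific $y\in[0,1]$) is relevant, and the bound is checked there instead. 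This mirrors exactly the endpoint/boundary bookkeeping already carried out in Tables~\ref{table:1} and~\ref{table:2}.

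The main obstacle is purely computational bookkeeping rather than conceptual: one must verify the strict inequality $(1+\beta)/3<\varphi(y)$ simultaneously for all $h\ge7$ and all admissible $\beta\ge0$, keeping track of which endpoint is feasible. Since $\varphi$ is affine in $\beta$ with slope less than $1/3$, it suffices to verify the bound at the two extreme relevant values of $\beta$ (the largest $\beta$ being controlled by the constraint $z\ge\beta/h$ together with $y\in[0,1]$, which caps $\beta$), and then interpolate. I expect the margin to be large for $h\ge7$, so the estimates will be loose and the verification routine; the only care needed is to not overlook the coupling between $\beta$ and the feasible range of $z$ when $x=1$.
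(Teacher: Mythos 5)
Your overall plan --- set $x=1$, use $g(1,y,z)=0$ to eliminate $z$, observe the reduced objective $\varphi(y)$ is affine in $y$, and check the endpoints --- is exactly what the paper does (the paper just notes that $\varphi$ is strictly decreasing in $y$ for $h\geq 7$ and evaluates at $y=1$). However, your sketch contains a sign-level error that, taken at face value, would derail the argument. After eliminating $z$, the coefficient of $\beta$ in $\varphi$ is $1/2$, not some $c<1/3$: indeed $f(1,y,z)$ carries $+\beta$ and $z(y)$ carries $+\tfrac{\beta}{2}$, so $-z$ contributes $-\tfrac{\beta}{2}$ and the net coefficient is $\tfrac12$. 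This matters for the logic of the endgame. If $c<1/3$ were correct, $\varphi(y)-\tfrac{1+\beta}{3}$ would be decreasing in $\beta$, and one would need an upper bound on $\beta$ to finish --- but the feasible set in Claim~\ref{case1} has no such bound: it only imposes $g=0$, $x=1$, $y\in[0,1]$, and in particular does \emph{not} impose $z\geq\beta/h$. Your plan to cap $\beta$ via $z\geq\beta/h$ is therefore inapplicable to this claim (that constraint belongs to $\Omega_1,\Omega_2$, not to the relaxation proved here; the claim is a lower bound on the relaxed minimum, which is automatically a lower bound for the constrained ones). With the correct $c=\tfrac12>\tfrac13$, the $\beta$-terms help rather than hurt, the worst case is $\beta=0$, and the $h$-part ($\tfrac{h}{2}+\tfrac{9}{2h}-\tfrac{11}{4}$ at $y=1$) exceeds $\tfrac13$ already at $h=7$. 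So the route is right; just redo the arithmetic on the $\beta$-coefficient and drop the unnecessary $z\geq\beta/h$ bookkeeping.
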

\begin{proof}
If $x=1$, we have
${z} =1+\frac{\beta}2-\frac3{2h}-(\frac1h+\frac14){y} $ from the constraint $g({x} ,{y} ,{z} )=0$. It follows that $f({x} ,{y} ,{z} ) =  {h+\frac5{2h}-(\frac{h}2-\frac14-\frac2h){y} -3+\frac{\beta}2}$,
which together with $h\ge7$, ${y} \le1$ and $\beta\ge0$ implies
$f({x} ,{y} ,{z} ) \ge f(1,1,z)=\frac h2+\frac9{2h}-\frac{11}{4}+\frac\beta2
 >\frac72-\frac{11}4+\frac\beta2>
\frac{1+\beta}3$.
\end{proof}

Before turning to the next two cases, we prove a few technical lemmas.
\begin{claim}\label{rm}
If $h\ge7$ and $1 +\beta-\frac{1+2\beta}{h}>\frac{\sqrt{2h^2-h}}{h}-\frac{2}{\sqrt{2h^2-h}}$, then $\beta>\sqrt2-1$.
\end{claim}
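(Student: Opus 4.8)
The cleanest route is contraposition: assume $\beta\le\sqrt2-1$ and show the displayed hypothesis must fail, i.e.\ that $1+\beta-\frac{1+2\beta}{h}\le\frac{\sqrt{2h^2-h}}{h}-\frac{2}{\sqrt{2h^2-h}}$. First I would simplify both sides. The left side equals $\frac{(h-2)\beta+(h-1)}{h}$, which for $h\ge7$ is strictly increasing in $\beta$; hence it suffices to check the inequality at $\beta=\sqrt2-1$. The right side simplifies, using $\nu=\sqrt{2h^2-h}/2$, to $\frac{\sqrt{2h^2-h}}{h}-\frac{2}{\sqrt{2h^2-h}}=\frac{2h^2-3h}{h\sqrt{2h^2-h}}=\frac{2h-3}{\sqrt{2h^2-h}}$. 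Clearing the common factor $h$, the whole claim thereby reduces to the single-variable inequality
\[
\frac{h(2h-3)}{\sqrt{2h^2-h}}\ \ge\ \sqrt2\,(h-2)+1\qquad\text{for all }h\ge7.\quad(\star)
\]

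To prove $(\star)$, note both sides are positive for $h\ge7$, so $(\star)$ is equivalent to its square $h^2(2h-3)^2\ge\bigl(\sqrt2(h-2)+1\bigr)^2(2h^2-h)$. Expanding $\bigl(\sqrt2(h-2)+1\bigr)^2=(2h^2-8h+9)+(2h-4)\sqrt2$ and $2h^2-h=h(2h-1)$, and cancelling one factor of $h$, this rearranges to $6h^2-17h+9\ge 2\sqrt2\,(2h-1)(h-2)$. Both sides are again positive for $h\ge7$, so dividing by $2(2h-1)(h-2)$ and carrying out the division $\frac{6h^2-17h+9}{2(2h-1)(h-2)}=\frac32-\frac{2h-3}{2(2h-1)(h-2)}$ shows that $(\star)$ is equivalent to
\[
\varphi(h):=\frac{2h-3}{(2h-1)(h-2)}\ \le\ 3-2\sqrt2 .
\]
A short derivative computation gives $\varphi'(h)$ with numerator $-4h^2+12h-11$, whose discriminant $144-176<0$ makes it negative for all real $h$; hence $\varphi$ is decreasing on $[7,\infty)$, so $\varphi(h)\le\varphi(7)=\frac{11}{65}$. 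Finally $\frac{11}{65}\le 3-2\sqrt2$ because this is equivalent to $2\sqrt2\le\frac{184}{65}$, i.e.\ to $\sqrt2\le\frac{92}{65}$, i.e.\ to $8450\le 8464$, which holds. This proves $(\star)$ and thus the claim.

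The one delicate point — and the reason loose estimates must be avoided — is that $(\star)$ is razor-thin: at $h=7$ its two sides are about $8.0718$ and $8.0711$, and the final rational check has a margin of only $14$ out of roughly $8500$. Consequently, replacing $\sqrt{2h^2-h}$ by a simple bound such as $\sqrt2\,(h-\tfrac14)$, or rounding $\sqrt2$ prematurely, would reverse the inequality. The device that keeps everything exact is squaring twice so that all radicals are eliminated except a single linear occurrence of $\sqrt2$, combined with the monotonicity reductions in $\beta$ and in $h$ that collapse the whole statement to the arithmetic fact $8450\le 8464$. The only genuine bookkeeping is verifying the polynomial expansions of $h^2(2h-3)^2$ and $\bigl(\sqrt2(h-2)+1\bigr)^2(2h^2-h)$; everything else is routine.
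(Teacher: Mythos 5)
Your proposal is correct and follows the same core strategy as the paper: exploit the monotonicity of $1+\beta-\frac{1+2\beta}{h}$ in $\beta$ to reduce the claim to the single inequality $\sqrt2-\frac{2\sqrt2-1}{h}\le\frac{\sqrt{2h^2-h}}{h}-\frac{2}{\sqrt{2h^2-h}}$ for $h\ge7$. The paper simply asserts this inequality without proof, whereas you supply a complete and carefully checked derivation — rewriting the radical, squaring to eliminate the outer root, dividing out $2(2h-1)(h-2)$, proving $\varphi(h)=\frac{2h-3}{(2h-1)(h-2)}$ is decreasing, and landing on the exact arithmetic check $8450\le 8464$ — so your version is, if anything, more rigorous than the original.
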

\begin{proof}
{The function $1 +\beta-\frac{1+2\beta}{h}$ increases in $\beta$ for
$h>2$, and
$\sqrt{2}-\frac{2\sqrt{2}-1}{h}\le\frac{\sqrt{2h^2-h}}{h}-\frac{2}{\sqrt{2h^2-h}}$
for $h\ge7$.}
\end{proof}

In the next two lemmas, we write {$\beta_{-1}:=\beta-1$}, $\beta_1:=\beta+1$ and $\beta_2:= 2\beta+1$ for brevity.

\begin{lemma}\label{lema}
For any constant $\beta\in[0,\sqrt2-1)$, if variables $\hbar$ and
$x$ satisfy {$\hbar\ge7$} and
$x=\frac14(\beta_1{\hbar}-\beta_2+\sqrt{(\beta_1{\hbar}-\beta_2)^{2}+8{\hbar}})$,
then $\Psi({\hbar}):=\frac {\hbar} x +\frac x {\hbar} - \frac \beta
{\hbar}\ge\Psi(7)$.
\end{lemma}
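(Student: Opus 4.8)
My plan is to treat $x$ as a function of $\hbar$ — call it $x(\hbar)$, given by the stated formula — substitute it into $\Psi$, and show that $\Psi$ is monotone nondecreasing on $[7,\infty)$, so that the minimum over $\hbar\ge7$ is attained at $\hbar=7$. The first thing I would check is that $x(\hbar)$ is exactly the point where $\partial f/\partial x=\partial f/\partial z$ (equivalently $x+y=\nu$ in the notation of Lemma~\ref{lem:9}(iii)) after eliminating $z$ via $g=0$ and substituting the relevant value of $y$; this is presumably how the formula arose, and it means $x(\hbar)$ is a smooth, explicitly differentiable function of $\hbar$ with $x(\hbar)>0$. A useful preliminary observation is that the quantity under the square root, $(\beta_1\hbar-\beta_2)^2+8\hbar$, is positive and that for $\beta\in[0,\sqrt2-1)$ and $\hbar\ge7$ we have $\beta_1\hbar-\beta_2>0$, so $x(\hbar)$ is well-defined and positive; one can also record the cruder bounds $x(\hbar)=\Theta(\hbar)$ (indeed $x(\hbar)\sim \tfrac{\beta_1}{2}\hbar$ to leading order, but with $\beta$ possibly $0$ it is safer to use $x(\hbar)\le \tfrac12(\beta_1\hbar-\beta_2)+\tfrac12\sqrt{(\beta_1\hbar)^2+8\hbar}$ and a matching lower bound).

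The core step is to compute $\Psi'(\hbar)$ and show it is $\ge0$ for $\hbar\ge7$. Writing $x=x(\hbar)$ and $x'=x'(\hbar)$, we have
\[
\Psi'(\hbar)=\frac1x-\frac{\hbar x'}{x^2}+\frac{x'}{\hbar}-\frac{x}{\hbar^2}+\frac{\beta}{\hbar^2}
=\Bigl(\frac1\hbar-\frac{\hbar}{x^2}\Bigr)x'+\frac1x-\frac{x-\beta}{\hbar^2}.
\]
Here I would use that $x(\hbar)\ge\hbar$ fails in general (since $x\sim\tfrac{\beta_1}{2}\hbar$ and $\beta_1/2<1$ when $\beta<1$), so actually $x<\hbar$ for large $\hbar$, which makes $\frac1\hbar-\frac{\hbar}{x^2}$ negative; hence I also need an upper bound on $x'$. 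Differentiating the defining quadratic $2x^2-(\beta_1\hbar-\beta_2)x-2\hbar=0$ implicitly gives $x'=\dfrac{\beta_1 x+2}{4x-(\beta_1\hbar-\beta_2)}=\dfrac{\beta_1 x+2}{2x+\,2\cdot\frac{2\hbar}{x}\big/?}$ — more cleanly, since $4x-(\beta_1\hbar-\beta_2)=\sqrt{(\beta_1\hbar-\beta_2)^2+8\hbar}=:\Delta>0$, we get $x'=(\beta_1 x+2)/\Delta$, and $x'\in(0,\beta_1/2]$ can be read off (as $\Delta\ge 2x$ and $\Delta\ge \beta_1\hbar-\beta_2 \ge \beta_1 x\cdot(\text{const})$). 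Plugging these explicit expressions in, $\Psi'(\hbar)\ge0$ becomes a rational inequality in $\hbar$, $x$, $\Delta$ and $\beta$, which after clearing (positive) denominators reduces to a polynomial inequality. I expect to then use $x(7)\le x\le \tfrac{\beta_1}{2}\hbar+1$ and $\beta<\sqrt2-1$ to verify it; the condition $\hbar\ge7$ should be exactly what makes the lower-order negative terms small enough.

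The main obstacle, I expect, is the algebra of showing $\Psi'\ge0$ cleanly: the implicit derivative $x'$ and the presence of the free parameter $\beta$ make the polynomial inequality somewhat heavy, and one must be careful that it holds uniformly for all $\beta\in[0,\sqrt2-1)$ (the endpoint $\beta=0$, where $x(\hbar)=\tfrac14(\hbar-1+\sqrt{(\hbar-1)^2+8\hbar})$, and the endpoint $\beta\to\sqrt2-1$ are the natural cases to sanity-check). An alternative, possibly cleaner route that avoids differentiating is to show directly that $\Psi(\hbar)-\Psi(7)\ge0$ by writing $\Psi(\hbar)=\frac{\hbar}{x}+\frac{x}{\hbar}-\frac{\beta}{\hbar}$ and bounding $\frac{\hbar}{x}$ from below and $\frac{x}{\hbar}$ from below using the leading-order asymptotics $x\approx\tfrac{\beta_1}{2}\hbar$ together with explicit error bounds from the square-root formula; this turns the claim into "a function of the form $c_1+c_2-\tfrac{\beta}{\hbar}$ with $c_1,c_2$ nearly constant is increasing," which is transparent. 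Either way, once $\Psi$ is shown nondecreasing on $[7,\infty)$, the conclusion $\Psi(\hbar)\ge\Psi(7)$ is immediate, and that is all the lemma asserts.
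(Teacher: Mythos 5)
Your high-level plan — treat $x$ as a function of $\hbar$ and show $\Psi$ is monotone nondecreasing on $[7,\infty)$ — is the same idea as the paper's, except the paper reparameterizes the other way around: it inverts the relation to write $\hbar=\frac{2x^2+\beta_2 x}{\beta_1 x+1}$, observes that $x$ is increasing in $\hbar$, expresses $\Psi$ as a function $\psi(x)$, and shows $d\psi/dx\ge0$. That choice of independent variable is not cosmetic; it lets the three terms $\frac{\hbar}{x}$, $\frac{x}{\hbar}$, $\frac{\beta}{\hbar}$ be written as clean rational functions of $x$ alone, so that $\psi'(x)$ decomposes into three explicit fractions whose signs can be argued.

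Two concrete problems with the proposal. First, an arithmetic error: the defining quadratic is $2x^2-(\beta_1\hbar-\beta_2)x-\hbar=0$, not $2x^2-(\beta_1\hbar-\beta_2)x-2\hbar=0$, so the implicit derivative is $x'=(\beta_1 x+1)/\Delta$, not $(\beta_1 x+2)/\Delta$. Any verification built on the wrong $x'$ would be verifying a false inequality. Second, and more importantly, you stop exactly where the work begins: ``after clearing denominators [this] reduces to a polynomial inequality... I expect to then use $x(7)\le x\le\ldots$ and $\beta<\sqrt2-1$ to verify it.'' That verification is the entire content of the lemma, and it is not routine. In the paper's parameterization, $\psi'(x)$ is a sum of three terms; one is manifestly positive, but the other two, $\frac{1-3\beta-2\beta^2}{(\beta_1 x+1)^2}+\frac{4\beta^2+5\beta-1}{(2x+\beta_2)^2}$, have numerators that change sign as $\beta$ ranges over $[0,\sqrt2-1)$. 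The paper handles this with a genuine case split at $\beta=\frac14(\sqrt{17}-3)$: below that threshold one uses $1-3\beta-2\beta^2\ge\max\{0,1-5\beta-4\beta^2\}$ and the fact that $(\beta_1 x+1)^2\le(2x+\beta_2)^2$; above it one needs the explicit lower bound $x\ge3.897$ (coming from $\hbar\ge7$) to bound $\frac{2x+\beta_2}{\beta_1 x+1}\le1.7$ and compare $1.7^2$ against the ratio $\frac{4\beta^2+5\beta-1}{2\beta^2+3\beta-1}>3$. None of this case structure, nor the crucial use of the $\hbar\ge7$ floor to pin down $x$, is anticipated in your sketch; as written the proposal is a correct statement of strategy but not a proof.
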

\begin{proof}Notice that
${\hbar}=\frac{2x^2+\beta_2x}{\beta_1x+1}$.  Thus $\Psi({\hbar})$
can be considered as the function of $x$, which we write as
$\psi(x)$.
\[\psi(x)=\frac{2x+\beta_2}{\beta_1x+1}
+\frac{\beta_1x-\beta_1\beta+1}{2x+\beta_2}
-\frac{\beta}{2x^2+\beta_2x}=\Psi({\hbar}).\]
It is easy to check that $x$ is monotonically increasing in ${\hbar}$;  in particular ${\hbar}\ge7$ along with $\beta\ge0$ implies $x\ge3.897$.

Moreover, to prove the lemma, we only need to show that $\psi(x)$ is monotonically increasing in $x$.  Observe that the last term in the derivative of the above expression
\[
\frac{d\psi}{dx}=\frac{1-3\beta-2\beta^2}{(\beta_1x+1)^2}
+\frac{4\beta^2+5\beta-1}{(2x+\beta_2)^2}
+\frac{\beta(4x+\beta_2)}{(2x^2+\beta_2x)^2}
\]
is positive. It suffices to verify
\be\label{goal}
 \frac{1-3\beta-2\beta^2}{(\beta_1x+1)^2}+\frac{4\beta^2+5\beta-1}{(2x+\beta_2)^2} \ge 0.
\ee

In case of $0\le\beta\le\frac14(\sqrt{17}-3)<0.2808$, we obtain
$1-3\beta-2\beta^2\ge\max\{0,1-5\beta-4\beta^2\}$, and (\ref{goal})
is true as $0<\frac{(\beta_1x+1)^2}{(2x+\beta_2)^2}\leq1$ for
$0\le\beta\le\sqrt2-1$.

In case of $\beta\in
(\frac14(\sqrt{17}-3),\sqrt2-1)\subset[0.28,\sqrt2-1)$, we have
$2\beta^2+3\beta-1>0$. It can be seen from $x\ge3.897$ that
$(1.7\beta-0.3)x>2\beta-0.7$, implying
$\frac{2x+\beta_2}{\beta_1x+1}\le {1.7}\,. $ On the other hand,
(\ref{goal}) follows from
\[
\frac{4\beta^2+5\beta-1}{2\beta^2+3\beta-1}>2+\frac{1-(\sqrt2-1)}{2(\sqrt2-1)^2+3(\sqrt2-1)-1}=3>1.7^2
\ge\frac{(2x+\beta_2)^2}{(\beta_1x+1)^2}.
\]
 The lemma is proved.
\end{proof}

\begin{lemma}\label{lemb}
For any constant $\beta\in[0,0.13)$, {if variables $\hbar$ and $x$ satisfy 
${\hbar}=\frac{(2x+1)^2+\beta_2(2x+1)+1}{\beta_1(2x+1)+2}\ge7$, then function $\Psi({x})\equiv\frac{{\hbar}}{{x}}+\frac{{x}}{{\hbar}}-\left(\frac{{\hbar}}{{x}({x}+1)}-\frac1{\hbar}\right)\frac{{x}+1}{2{x}+1}
-\frac{\beta}{\hbar}>2.36$.}
\end{lemma}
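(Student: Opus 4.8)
\textbf{Plan of proof for Lemma \ref{lemb}.}
The strategy is to treat $\Psi$ as a function of the single real variable $x$ (with $\hbar$ viewed as the stated function of $x$), bound $x$ from below using the hypothesis $\hbar\ge 7$, and then show that $\Psi$ is monotone increasing in $x$ on the relevant range, so that the infimum is attained at the smallest admissible value of $x$ (equivalently, at $\hbar=7$). First I would substitute $u=2x+1$ to simplify the constraint; the relation $\hbar=\frac{u^2+\beta_2 u+1}{\beta_1 u+2}$ can be inverted, and since $\hbar$ is increasing in $u$ (hence in $x$), the condition $\hbar\ge 7$ becomes $x\ge x_0(\beta)$ for some explicit threshold. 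Using $\beta\in[0,0.13)$ one gets a clean numerical lower bound on $x_0$ (roughly $x\ge 4$, as in the analogous Lemma \ref{lema}); I would record this as the working range.

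Next I would rewrite $\Psi$ entirely in terms of $x$ by plugging in $\hbar=\hbar(x)$ and the substitution $y=\frac{x+1}{2x+1}$ (which is exactly the value from Lemma \ref{lem:9}(i) at which the partial derivatives coincide, explaining why this case arises). After clearing denominators this is a rational function of $x$ and $\beta$. The core step is to show $d\Psi/dx>0$ for all $x$ in the working range and all $\beta\in[0,0.13)$. As in Lemma \ref{lema}, I expect the derivative to split into a sum of terms, one of which is manifestly positive (the term coming from the $-\beta/\hbar$ piece and from the part of $\frac{\hbar}{x}$ that dominates), leaving an inequality of the form $\frac{c_1(\beta)}{(\text{denom}_1)^2}+\frac{c_2(\beta)}{(\text{denom}_2)^2}\ge 0$ to verify, where $c_1,c_2$ are low-degree polynomials in $\beta$; for $\beta$ this small one of them is nonnegative outright and the other is controlled by bounding the ratio of the two (squared) denominators using $x\ge 4$. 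This reduces everything to checking a couple of explicit polynomial inequalities in $\beta$ on $[0,0.13)$.

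Finally, having established monotonicity, I would evaluate $\Psi$ at the left endpoint $\hbar=7$: solve for the corresponding $x$ (a root of a quadratic in $u=2x+1$ with coefficients depending on $\beta$), substitute, and check that the resulting expression in $\beta$ exceeds $2.36$ for all $\beta\in[0,0.13)$ — again a one-variable polynomial/rational inequality, handled by monotonicity in $\beta$ plus endpoint evaluation. The main obstacle I anticipate is the derivative computation: after the substitution $y=\frac{x+1}{2x+1}$ and $\hbar=\hbar(x)$, the algebra is heavy, and isolating the "obviously positive" part cleanly enough that the remaining two-term inequality is genuinely easy requires some care in how the terms are grouped; getting the numerical slack right (so that $\beta<0.13$ really suffices, rather than needing a smaller bound) is the delicate quantitative point. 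Everything else is bookkeeping of the same flavor as the proof of Lemma \ref{lema}.
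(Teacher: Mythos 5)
Your plan follows essentially the same route as the paper's proof: the substitution $u=2x+1$, inversion of the constraint $\hbar=\frac{u^2+\beta_2u+1}{\beta_1u+2}\ge 7$ to get a numerical lower bound on $u$ (the paper obtains $u>7.69$, i.e.\ $x>3.34$, slightly below your ``roughly $x\ge4$''), a monotonicity argument showing the resulting one-variable function $\psi(u)$ is increasing, and an endpoint evaluation (the paper plugs in $u=7.69$ and minimizes over $\hbar\ge7$ rather than solving exactly at $\hbar=7$, a cosmetic difference). The one substantive step you defer---positivity of $d\psi/du$, which the paper gets by grouping terms via $\sqrt2\,(u^2+\beta_2u+1)>u(2+\beta_1u)$ for $\beta<0.13$---is exactly the splitting you anticipate, so the proposal is correct in outline and matches the paper's argument.
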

\begin{proof}   Let $u=2{x}+1$. Then {${\hbar}=\frac{u^2+\beta_2u+1}{\beta_1u+2}\ge7$, and $u=\frac{\beta_1{\hbar}-\beta_2+\sqrt{(\beta_1{\hbar}-\beta_2)^2+8{\hbar}-4}}2$ is lower bounded by $ \frac{ {\hbar}-1 +\sqrt{({\hbar}-1)^2+8{\hbar}-4}}2\ge\frac{6+\sqrt{88}}2>7.69$, as ${\hbar}\ge 7$} and $\beta\ge0$.

It is routine to check that {$\Psi({x})  =\frac{2{\hbar}}{u}+\frac{u}{2{\hbar}}+\frac1{2{\hbar}u}-\frac{\beta}{\hbar}$, and it is a   function $\psi$ of $u$ with derivative $d\psi/du$ as follows:
\begin{align*}
\psi(u): =&\frac{\beta_1u^3+(2-2\beta\beta_1)u^2+(1-3\beta)u+2}{2u(u^2+\beta_2u+1)}+\frac{2(u^2+\beta_2u+1)}{u(2+\beta_1u)}
=\Psi({x}). \\
 \frac{d\psi}{du} =&\frac{(4\beta_1\beta+\beta_{-1})u^4+8\beta u^3+(4\beta_{-1}\beta_1-\beta_1)u^2-4\beta_2u-2}{2u^2(u^2+\beta_2u+1)^2}\\
&+\frac{2(1-3\beta-2\beta^2)u^2-4\beta_1u-4}{u^2(2+\beta_1u)^2}
\end{align*}
By $u\ge7.69$ and $\beta\in[0,0.13)$, it is easy to see that} the numerator of
the second term in the above expression of $d\psi/du$ is positive.   Since $\sqrt{2}(u^2+\beta_2u+1)>u(2+\beta_1u)$ holds
for any $\beta\in[0,0.13)$, we have
\begin{align*}
\frac{d\psi}{du} >&\frac{(4\beta_1\beta+\beta_{-1})u^4+8\beta u^3+(4\beta_{-1}\beta_1-\beta_1)u^2-4\beta_2u-2}{2u^2(u^2+\beta_2u+1)^2}
\\ & +\frac{2(1-3\beta-2\beta^2)u^4-4\beta_1u^3-4u^2}{2u^2(u^2+\beta_2u+1)^2} \\
 =&\frac{(1-\beta)u^4-4(1-\beta)u^3-(9+\beta-4\beta^2)u^2-4(2\beta+1)u-2}{2u^2(u^2+\beta_2u+1)^2} \\
 \geq&\frac{0.87u^4-4u^3-9.0624u^2-5.04u-2}{2u^2(u^2+\beta_2u+1)^2}
\end{align*}
The numerator is positive as $u>7.69$. Therefore $d\psi/du>0$.

{Using  $u>7.69$ and $\beta<0.13$, we obtain
$\Psi({x})=\psi(u)\ge\psi(7.69)\ge
\frac{2{\hbar}}{7.69}+\frac{7.69}{2{\hbar}}+\frac1{15.38{\hbar}}-\frac{0.13}{\hbar}=\frac{200{\hbar}}{769}+\frac{581367}{153800{\hbar}}$,
which increases in ${\hbar}$ for all ${\hbar}\ge7$.} Thus
$\Psi(x)\ge
\frac{200\times7}{769}+\frac{581367}{153800\times7}>2.36$.~
\end{proof}

\paragraph{\bf Case 2:} ${x}^*>1$ and ${y}^*=0$.\quad
{We begin by considering the following relaxed problem, which does not have a bound on $z$.}
\bea
\Omega_3:= \min\left\{   f({x},y,{z}) \,
\left|\,g({x},y,{z}) = 0,\;  {x}\ge1, y=0\right.\right\}.
\label{relax}
\eea
{Clearly $\Omega_3\le \Omega_2$.}
The KKT conditions applied to (\ref{relax}) assert that $\Omega_3$ is attained at some feasible solution $(x,z)$ of (\ref{relax}) for which there exist constants $\theta$ and $\eta$ such that
\begin{align*}
\nabla f(x,0,z)+\theta\nabla g(x,0,z)+\eta \nabla(-x+1) &=\bfm 0\\
\eta(-x+1)&=0
\end{align*}
It follows that  $\Omega_3$ is attained
either when ${x}=1$ or when ${x}>1\Rightarrow\eta=0\Rightarrow\frac{\partial f/\partial {x}}{\partial g/\partial {x}}=-\theta
= \frac{\partial f/\partial {z}}{\partial g/\partial {z}}$ holds at $(x,0,z)$. In the former case, we are done by Claim \ref{case1}. In the latter case, Claim \ref{lem:9}(iii) gives {$x=\nu$, and therefore $g({x},0,{z}) = g(\nu ,0,{z})=0$}
implies
 \[{z}= \left.\left(\frac{h-1-2\nu}{h} +\frac{1}{\nu}+\beta\right)\right/2=:z_2.\]
Notice that $\Omega _3=f(\nu,0,{z_2})=\frac{4\sqrt{2h^2-h} +1}{2h}+\frac{\beta-5}2$
increases in $h$ for $h\ge7$, and hence $\Omega_3\ge
\frac{4\sqrt{91}+1}{14}+\frac{\beta-5}2 $, which is greater than $
\frac{1+\beta}3$ if $\beta\ge\sqrt2-1$.  This together with
$\Omega_3\le\Omega_2$ verifies the following
 \be\label{sqrt2}
 \Omega_2>(1+\beta)/3\,\text{ if }\,\beta\ge\sqrt2-1.
 \ee

  We next turn back to  (\ref{o2}), and investigate its optimal solution $(x^*,0,z^*)$.
   \begin{claim}\label{case2}
If ${x}^*>1$ and ${y}^*=0$, then $\Omega_2\ge(1+\beta)/3$.
\end{claim}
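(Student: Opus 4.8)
\textbf{Proof plan for Claim \ref{case2}.}
The plan is to combine the relaxation bound (\ref{sqrt2}) with a direct analysis of the KKT conditions of the true problem (\ref{o2}). By (\ref{sqrt2}) we may assume $\beta<\sqrt2-1$; our task is then to show that an optimal solution $(x^*,0,z^*)$ of (\ref{o2}) with $x^*>1$ still satisfies $f(x^*,0,z^*)\ge(1+\beta)/3$. First I would write down the KKT conditions (\ref{kt})--(\ref{kt2}) with $y^*=0$, so that $g_3(x,y,z)=-y$ is the only one of the box constraints $g_2,g_3$ that can be tight, and $g_1$ (the constraint $x\ge1$) is \emph{not} tight since $x^*>1$, hence $\mu_1=0$. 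Distinguish according to whether the lower bound constraint $g_4$, i.e.\ $z^*\ge\beta/h$, is active.

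If $z^*>\beta/h$, then $\mu_4=0$, and the stationarity condition in the $z$-coordinate forces $\lambda=\frac12$ (since $\partial f/\partial z=-1$, $\partial g/\partial z=2$). Feeding this into the $x$-coordinate equation and using $\mu_1=0$ gives $\partial f/\partial x=-\frac12\,\partial g/\partial x$, i.e.\ $\frac{\partial f/\partial x}{\partial g/\partial x}=\frac{\partial f/\partial z}{\partial g/\partial z}$, which by Lemma \ref{lem:9}(iii) (applicable since $y^*=0\in\{0,1\}$) yields $x^*+y^*=\nu$, that is $x^*=\nu$. But then $(x^*,0,z^*)$ is exactly the candidate point $(\nu,0,z_2)$ analyzed for $\Omega_3$ in the discussion preceding (\ref{sqrt2}); we computed $f(\nu,0,z_2)=\frac{4\sqrt{2h^2-h}+1}{2h}+\frac{\beta-5}{2}$, which increases in $h$, so it is $\ge\frac{4\sqrt{91}+1}{14}+\frac{\beta-5}{2}$ for $h\ge7$, and a short estimate shows this exceeds $(1+\beta)/3$ precisely when $\beta$ is not too small --- but in fact, as already noted, it exceeds $(1+\beta)/3$ when $\beta\ge\sqrt2-1$; for $\beta<\sqrt2-1$ one checks directly that $\frac{4\sqrt{91}+1}{14}+\frac{\beta-5}2>\frac{1+\beta}3$ still holds, since the left side is increasing in $\beta$ and already at $\beta=0$ it evaluates to a number above $1/3$. (Here I would just plug in $\beta=0$: $\frac{4\sqrt{91}+1}{14}-\frac52\approx\frac{38.16+1}{14}-2.5\approx 0.30$... which needs a more careful numeric check, and is the place where the bound is tightest.)

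The remaining subcase is $z^*=\beta/h$, i.e.\ the lower bound on $z$ is tight. Then $z^*$ is pinned and the constraint $g(x^*,0,\beta/h)=0$ determines $x^*$ as a function of $h$ and $\beta$: solving the quadratic $\frac{2x}h-\frac1x=\frac{h-1}h+\beta-\frac{2\beta}{h}$ gives $x^*=\frac14\bigl(\beta_1 h-\beta_2+\sqrt{(\beta_1 h-\beta_2)^2+8h}\bigr)$ with the notation $\beta_1=\beta+1$, $\beta_2=2\beta+1$. Evaluating $f$ at this point and using $z^*=\beta/h$ gives $f(x^*,0,z^*)=\frac hx{}^*+\frac{x^*}h-\frac\beta h - 2 +\beta$, i.e.\ in the notation of Lemma \ref{lema}, $f(x^*,0,z^*)=\Psi(h)-2+\beta$ with $\Psi(h)=\frac h{x^*}+\frac{x^*}h-\frac\beta h$. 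Since $\beta<\sqrt2-1$ and $h\ge7$, Lemma \ref{lema} applies and gives $\Psi(h)\ge\Psi(7)$. Thus it remains to verify the single numerical inequality $\Psi(7)-2+\beta\ge(1+\beta)/3$, equivalently $\Psi(7)\ge\frac73+\frac{2\beta}3$; since $\Psi(7)$ is itself a concrete (increasing in $\beta$) expression in $\beta$ over $[0,\sqrt2-1)$, this reduces to checking the endpoint $\beta=0$, where $x^*=\frac14(7+\sqrt{49+56})=\frac14(7+\sqrt{105})$ and one evaluates $\Psi(7)=\frac7{x^*}+\frac{x^*}7$ numerically.

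\textbf{Main obstacle.} The structural part --- using the $z$-coordinate KKT equation to force $\lambda=\tfrac12$ and then Lemma \ref{lem:9}(iii) to pin $x^*=\nu$, or else $z^*=\beta/h$ and Lemma \ref{lema} --- is routine. The real work is the final numerical verification that the two candidate values, $\frac{4\sqrt{2h^2-h}+1}{2h}+\frac{\beta-5}2$ and $\Psi(h)-2+\beta$, both dominate $(1+\beta)/3$ uniformly for $h\ge7$ and $\beta\in[0,\sqrt2-1)$; monotonicity in $h$ (already established for the first, and via Lemma \ref{lema} for the second) reduces each to the worst case $h=7$, and monotonicity in $\beta$ reduces each to $\beta=0$, but the $h=7$, $\beta=0$ margin appears to be thin, so the estimates must be carried out with enough precision (this is exactly where Claim \ref{rm} and the careful constant-chasing of Lemmas \ref{lema}--\ref{lemb} earn their keep).
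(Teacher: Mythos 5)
Your decomposition into the subcases $z^*>\beta/h$ and $z^*=\beta/h$, and your use of the KKT conditions, Lemma \ref{lem:9}(iii) and Lemma \ref{lema}, track the paper's argument closely, but your treatment of the subcase $z^*>\beta/h$ has a genuine gap. There you try to conclude by bounding $f(\nu,0,z_2)=\frac{4\sqrt{2h^2-h}+1}{2h}+\frac{\beta-5}{2}$ from below by $\frac{1+\beta}{3}$ uniformly over $\beta\in[0,\sqrt2-1)$. This is false: at the worst case $h=7$, $\beta=0$ the value is $\frac{4\sqrt{91}+1}{14}-\frac52\approx 0.297<\frac13$, and the inequality fails for all $\beta\lesssim 0.22$ (your own parenthetical computation already signals this). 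The paper closes this subcase with a different and essential step: once the KKT conditions force $x^*=\nu$ and hence $z^*=z_2$, the standing hypothesis of the subcase, $z_2>\beta/h$, is itself an inequality in $h$ and $\beta$, which by Claim \ref{rm} forces $\beta>\sqrt2-1$; then (\ref{sqrt2}) finishes, and no comparison of $f(\nu,0,z_2)$ with $(1+\beta)/3$ is needed at all. Equivalently: for small $\beta$ the stationary point $(\nu,0,z_2)$ violates $z\ge\beta/h$ and therefore cannot be the optimum of (\ref{o2}), which is exactly why your direct bound cannot work there. You must add this feasibility argument; without it the case $x^*>1$, $y^*=0$, $z^*>\beta/h$ with small $\beta$ is simply not covered.

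In the subcase $z^*=\beta/h$ your route coincides with the paper's (solve $g(x^*,0,\beta/h)=0$ for $x^*$, apply Lemma \ref{lema} to reduce to $h=7$, then verify a one-variable inequality in $\beta$), but your proposed finish is flawed in detail: $\Psi(7)=\frac{7}{x^*(7)}+\frac{x^*(7)}{7}-\frac{\beta}{7}$ is decreasing, not increasing, in $\beta$ on $[0,\sqrt2-1)$ (since $x^*(7)<7$ grows with $\beta$), so checking only the endpoint $\beta=0$ does not suffice; the gap $\Psi(7)-2+\beta-\frac{1+\beta}{3}$ in fact has an interior minimum (roughly $0.01$, near $\beta\approx0.155$). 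The target inequality is nevertheless true: writing the bound explicitly as $\frac{15}{28}\sqrt{(6+5\beta)^2+56}-\frac{41}{28}\beta-\frac{67}{14}$, as the paper does, one verifies by elementary calculus that it exceeds $\frac{1+\beta}{3}$ for all $\beta\ge0$. So this second part is a fixable verification slip, whereas the first subcase is a structural omission of the paper's key idea.
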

\begin{proof}
If ${z}^*>\beta/h$, since ${x}^*>1$, the KKT conditions (\ref{kt})--(\ref{kt2}) imply that $\mu_1=\mu_4=0$ and $\frac{\partial f/\partial {x}}{\partial g/\partial {x}}
= -\lambda=\frac{\partial f/\partial {z}}{\partial g/\partial {z}}$ holds at $({x}^*,0,{z}^*)$. Using ${y}^*=0$ and Lemma \ref{lem:9}(iii), we obtain  $x^*=\nu$. In turn $g({x}^*,0,{z}^*)=0$ gives ${z}^*={z_2}$. So ${z_2}>\beta/h$, which reads
$1 +\beta-\frac{1+2\beta}{h}>\frac{\sqrt{2h^2-h}}{h}-\frac{2}{\sqrt{2h^2-h}}$. It follows from Lemma \ref{rm}  that  $\beta>\sqrt{2}-1$, and further from (\ref{sqrt2}) that $\Omega_2>(1+\beta)/3$.

If ${z}^*=\beta/h$, by (\ref{sqrt2}), we only need to consider the case where $\beta\in[0,\sqrt2-1)$. The  constraint {$g({x}^*,0,{z}^*)=\frac{2x^*}h-\frac1{x^*}+2{z}^*-\frac{h-1}h-\beta$ gives $$
x^*= \frac14\left((\beta+1) h-(2\beta+1)+\sqrt{((\beta+1) h-(2\beta+1))^{2}+8h}\right):= x^*(h).$$}
It follows from Lemma \ref{lema} that {$f({x}^*,0,{z}^*)=\frac{h}{x^*}+\frac{x^*}h-\frac{\beta}h-2+\beta\ge\frac{7}{x^*(7)}+\frac{x^*(7)}7+\frac{6\beta}7-2$.} This value is easily checked to be  $\frac{15}{28}\sqrt{(6+5\beta)^2+56}-\frac{41}{28}\beta-\frac{67}{14}$, which is smaller than $(1+\beta)/3$ for $\beta\ge0$.
\end{proof}

\paragraph{\bf Case 3:} ${x}^*>1$ and $0<y^*<1$.\quad
 In this  case, the KKT conditions   (\ref{kt})--(\ref{kt2})   imply $\mu_i=0$ for $1\le i\le 3$ and $\frac{\partial f/\partial {y} }{\partial g/\partial {y} }=-\lambda= \frac{\partial f/\partial {x}}{\partial g/\partial {x}}$ holds at  $({x}^*, {y}^*, {z}^*)$. In turn,
Lemma \ref{lem:9}(i) asserts ${y}^*=\frac{{x}^*+1}{2{x}^*+1}$, implying
\be
\Omega_2\ge\Omega_4:=\min\left\{f({x},{y},{z})\,
\left|\,g({x},{y},{z})=0,  {x}\ge1,{y}=\frac{{x}+1}{2{x}+1} \right.\right\}.
\label{o4}
\ee
 From the KKT conditions on the minimization (\ref{o4}), we deduce that $\Omega_4$ is attained at some feasible solution $(x,y,z)$ of (\ref{o4}) for which there exist constants $\theta_1,\theta_2,\eta_1,\eta_2$ such that
 \begin{align}
\label{kkt3}
\nabla f(x,y,z)+\theta_1\nabla g(x,y,z)+\theta_2\nabla(y-\frac{x+1}{2x+1})+\eta\nabla(-x+1) &=\bfm 0\\
\label{kkt3a}
\eta(-x+1)&=0
  \end{align}
  It follows that $\Omega_4$
 is attained  either when ${x}=1$ or when ${x}>1$, in which case $\eta=0$ by (\ref{kkt3a}).  In the former case, we are again done by Claim \ref{case1}. In the latter case, from (\ref{kkt3}) we find
\begin{align}
\label{kkt3-1}
\frac{\partial f}{\partial {x}}+\theta_1\frac{\partial g}{\partial {x}}&=-\frac{\theta_2}{(2x+1)^2}\\ \label{kkt3-2}
\frac{\partial f}{\partial {y}}+\theta_1\frac{\partial g}{\partial {y}}&=- {\theta_2} \\
\label{kkt3-3}
\frac{\partial f}{\partial {z}}+\theta_1\frac{\partial g}{\partial {z}}&=0
\end{align}
Since $y=\frac{x+1}{2x+1}$, Lemma \ref{lem:9}(i) asserts $\frac{\partial f}{\partial {x}}=\frac{\partial f}{\partial {y}}$ and $\frac{\partial g}{\partial {x}}=\frac{\partial g}{\partial {y}}$, which along with (\ref{kkt3-1}) and (\ref{kkt3-2}) enforce $\theta_2=0$. In turn from (\ref{kkt3-1}) and (\ref{kkt3-3}) we derive
$\frac{\partial f/\partial {x}}{\partial g/\partial {x}}=-\theta_1=  \frac{\partial f/\partial {z}}{\partial g/\partial {z}}$. By ${y}=\frac{{x}+1}{2{x}+1}$, Lemma \ref{lem:9}(iv) enforces ${x}=\chi$. Hence from $g({x},{y},{z})=0$ we obtain
\[{z} =
\frac12\left(1 +\beta-\frac{1+2{\chi} }h+\frac1{{\chi}}-
\left(\frac2h+\frac1{{\chi}({\chi} +1)} \right)\cdot\frac{\chi+1}{2\chi+1} \right)=:z_3
\]
It follows that $\Omega_4=f(\chi,
\frac{\chi+1}{2\chi+1},z_3)=\frac{4\sqrt{2h^2-h+1}
+1}{2h}+\frac{\beta-5}2$, which increases in $h$ for  $h\geq7$. So
$\Omega_4$ is lower bounded by $\frac{4\sqrt{92}+1}{14}
+\frac{\beta-5}2$, which is greater than $\frac{1+\beta}3$ if
$\beta>\frac{116-12\sqrt{92}}7>0.13$. Since $\Omega_2\ge\Omega_4$ in
either case, we have shown that
 \be\label{013}
 \Omega_2>(1+\beta)/3\,\text{ if }\,\beta\ge0.13.
 \ee

Next we again focus on the optimal solution $({x}^*,{y}^*,{z}^*)$ of (\ref{o2}).
 \begin{claim}\label{case3}
If ${x}^*>1$ and $0<y^*<1$, then $\Omega_2\ge(1+\beta)/3$.
\end{claim}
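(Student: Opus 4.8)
The plan is to follow the template of Claim~\ref{case2}, splitting according to whether the bound $z^*\ge\beta/h$ is tight at the optimal solution $(x^*,y^*,z^*)$ of~(\ref{o2}). Recall what the KKT conditions already give here: since $x^*>1$ and $0<y^*<1$, complementary slackness~(\ref{kt2}) forces $\mu_1=\mu_2=\mu_3=0$, so the $x$- and $y$-components of~(\ref{kt}) read $\partial f/\partial x+\lambda\,\partial g/\partial x=0$ and $\partial f/\partial y+\lambda\,\partial g/\partial y=0$; hence $\frac{\partial f/\partial x}{\partial g/\partial x}=-\lambda=\frac{\partial f/\partial y}{\partial g/\partial y}$, and Lemma~\ref{lem:9}(i) gives $y^*=\frac{x^*+1}{2x^*+1}$ (exactly the derivation carried out just before the claim).

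First I would handle the case $z^*>\beta/h$. Then $\mu_4=0$ as well, so the $z$-component of~(\ref{kt}) adds the equality $\frac{\partial f/\partial z}{\partial g/\partial z}=-\lambda=\frac{\partial f/\partial x}{\partial g/\partial x}$; together with $y^*=\frac{x^*+1}{2x^*+1}$, Lemma~\ref{lem:9}(iv) forces $x^*=\chi$, and then $g(\chi,\frac{\chi+1}{2\chi+1},z^*)=0$ determines $z^*=z_3$. A routine simplification (using $(2\chi+1)^2=2h^2-h+1$) puts $z_3$ in the form $\frac12\bigl(1+\beta-\tfrac{1+2\beta}{h}-\tfrac{2h^2-3h+2}{h\sqrt{2h^2-h+1}}\bigr)$, so $z^*=z_3>\beta/h$ is equivalent to $1+\beta-\tfrac{1+2\beta}{h}>\tfrac{2h^2-3h+2}{h\sqrt{2h^2-h+1}}$. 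Since the right-hand side is at least $\tfrac{\sqrt{2h^2-h}}{h}-\tfrac{2}{\sqrt{2h^2-h}}$ for all $h\ge7$, Lemma~\ref{rm} then yields $\beta>\sqrt2-1$, and~(\ref{sqrt2}) gives $\Omega_2>(1+\beta)/3$.

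It remains to treat $z^*=\beta/h$; here I would invoke~(\ref{013}) to reduce to $\beta\in[0,0.13)$. Substituting $y^*=\frac{x^*+1}{2x^*+1}$ and $z^*=\beta/h$ into the constraint $g(x^*,y^*,z^*)=0$ and clearing denominators (the terms $-h/x^*+h/(x^*(2x^*+1))$ collapse to $-2h/(2x^*+1)$) turns it into precisely $h=\frac{(2x^*+1)^2+\beta_2(2x^*+1)+1}{\beta_1(2x^*+1)+2}$, the hypothesis of Lemma~\ref{lemb}. That lemma then gives $\Psi(x^*)>2.36$; and since for this choice of $y^*$ and $z^*$ one has $f(x^*,y^*,z^*)=\Psi(x^*)-2+\beta$, we obtain $\Omega_2=f(x^*,y^*,z^*)>0.36+\beta\ge(1+\beta)/3$, completing the proof.

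There is no single deep step; the work is all in the bookkeeping. The two points to be careful about are: (a) the algebraic reduction of $g=0$ under $z^*=\beta/h$ must land exactly on the form demanded by Lemma~\ref{lemb}, and likewise the rewriting of $z_3$ must be done cleanly so the comparison with $\beta/h$ is transparent; and (b) verifying the elementary but rather tight inequality $\tfrac{2h^2-3h+2}{h\sqrt{2h^2-h+1}}\ge\tfrac{\sqrt{2h^2-h}}{h}-\tfrac{2}{\sqrt{2h^2-h}}$ for $h\ge7$, so that Lemma~\ref{rm} applies verbatim (alternatively one proves the corresponding statement directly, in the style of the proof of Lemma~\ref{rm}). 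Everything else is a direct appeal to Lemma~\ref{lem:9}, Lemma~\ref{rm}, Lemma~\ref{lemb}, and the already-established bounds~(\ref{sqrt2}) and~(\ref{013}).
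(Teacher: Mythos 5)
Your proposal is correct and follows the paper's own proof essentially verbatim: the same case split on whether $z^*=\beta/h$, the same use of the KKT conditions with Lemma~\ref{lem:9}(iv) to pin down $x^*=\chi$ and $z^*=z_3$ in the slack case, the same appeal to Claim~\ref{rm} and~(\ref{sqrt2}), and the same reduction via~(\ref{013}) and Lemma~\ref{lemb} in the tight case. Your rewriting of $z_3>\beta/h$ in the form $1+\beta-\tfrac{1+2\beta}{h}>\tfrac{2h^2-3h+2}{h\sqrt{2h^2-h+1}}$ is just an algebraic simplification of the paper's three-term right-hand side (indeed $\tfrac{w}{h}-\tfrac{2}{w}+\tfrac{1}{hw}=\tfrac{w^2-2h+1}{hw}$ with $w=\sqrt{2h^2-h+1}$), so the two are identical.
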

  \begin{proof}  If ${z}^*>\beta/h$, by  ${x}^*>1$ and the KKT conditions   (\ref{kt})--(\ref{kt2}),   we obtain $\mu_1=\mu_4=0$ and $\frac{\partial f/\partial {x}}{\partial g/\partial {x}}=-\lambda
= \frac{\partial f/\partial {z}}{\partial g/\partial {z}}$ at $({x}^*,{y}^*,{z}^*)$, which is equivalent to ${x}^*=\chi$ by ${y}^*=\frac{{x}^*+1}{2{x}^*+1}$ and Lemma \ref{lem:9}(iv). In turn we have ${z}^*=z_3$ by using $g(\chi, \frac{\chi+1}{2\chi+1},z^*)=0$. Now  $z_3>\beta/h$ reads
$1+\beta-\frac{1+2\beta}{h}>\frac{\sqrt{2h^2-h+1}}{h}-\frac{2}{\sqrt{2h^2-h+1}}+\frac{1}{h\sqrt{2h^2-h+1}}$. The right-hand side
of this inequality is larger than $\frac{\sqrt{2h^2-h}}{h}-\frac{2}{\sqrt{2h^2-h}}$. It follows from Claim \ref{rm} that   $\beta>\sqrt{2}-1>0.13$, and further from (\ref{013}) that   $\Omega_2>(1+\beta)/3$.

If ${z}^*= {\beta}/h$,
by (\ref{013}), it suffices to consider   $\beta\in[0,0.13)$. From $g(x^*,{y}^*,z^*)=g(x^*,\frac{{x}^*+1}{2{x}^*+1},\frac{\beta}h)=0$ we get
{\[h=\frac{(2x^*+1)^2+(2\beta+1)(2x^*+1)+1}{(\beta+1)(2x^*+1)+2}.\]}
Under this equation for $h\ge7$ and   $\beta\in[0,0.13)$, Lemma \ref{lemb} asserts $$\Omega_2=f({x}^*,{y}^*,{z}^*)=\frac{h}{{x}^*}+\frac{{x}^*}{h}-\left(\frac{h}{{x}^*({x}^*+1)}-\frac1h\right)\frac{{x}^*+1}{2{x}^*+1}
-\frac{\beta}h-2+\beta>0.36+\beta,$$ which is obviously greater than $\frac{1+\beta}3$.
\end{proof}

To sum up, we have shown the following result.
\begin{lemma}
\label{lem:firstratio3}
If $\cup_{i \in [h]}N_i= R$ and $h>2$, then $\ell^N(R)/M(\pi^*)\leq 3$.
\end{lemma}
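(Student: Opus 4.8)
The plan is to assemble the pieces built in this section. First I would apply Lemma~\ref{lem:singular} (repeatedly if necessary) to reduce to the nonsingular covering case, in which $h=k$ and every agent uses a different path in $\pi^N$ than in $\pi^*$; this is legitimate because the lemma preserves $\ell^N(R)$ and does not increase $M(\pi^*)$, so bounding $\ell^N(R)/M(\pi^*)$ there suffices. I would then recall the reduction already performed: subdivide each link so that every latency function equals $x$ or $1$, sum the $h$ Nash inequalities $\ell^N(N_i)\le\ell^N(Q_i)+||Q_i||_a$ together with the $h$ optimality inequalities $M(\pi^*)\ge\ell^*(Q_i)$, group links by their Nash load to introduce the variables $C_i$ and $B_i$, and arrive at the fractional program (\ref{eq:optim0})--(\ref{eq:optim3a}) for $\ell^N(R)/M(\pi^*)$. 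The degenerate case $\sum_i C_i=0$ already gives a ratio of at most $2$, so assume $\sum_i C_i>0$; then Lemma~\ref{lem:di} collapses the $D_i$ to at most two consecutive indices, which after normalization produces the five-variable problem (\ref{eq:optimx})--(\ref{eq:optimx3b}) (restated as (\ref{eq:optimx1})), and passing to the minimization of $f$ subject to the active constraint $g=0$ defines $\Omega_1$ (problem (\ref{o1}), $x\in\mathbb N$) and $\Omega_2$ (problem (\ref{o2}), $x$ relaxed to $x\ge1$). By Observation~\ref{om} it then remains only to prove $\Omega_1\ge(1+\beta)/3$ or $\Omega_2\ge(1+\beta)/3$ for each $h>2$.

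For this I would split on $h$. When $h\in\{3,4,6\}$ (and $h=5$ by an entirely analogous finite check), Lemma~\ref{smallh} handles it: since $\chi$ is irrational whereas an optimal $x^*$ lies in $[h-1]\cap\mathbb N$, Lemma~\ref{eo}(i)--(ii) forces $z^*=\beta/h$ or $y^*=1$ when $x^*<\chi$, and $y^*=0$ when $x^*>\chi$, leaving only the finitely many triples $(x^*,y^*,z^*)$ tabulated in Tables~\ref{table:1}--\ref{table:2}, all of which meet the bound. When $h\ge7$ I would fix an optimal solution $(x^*,y^*,z^*)$ of (\ref{o2}) with $y^*$ minimal, so that without loss of generality $y^*<1$, and note that it falls into exactly one of the three cases $x^*=1$; $x^*>1$ with $y^*=0$; $x^*>1$ with $0<y^*<1$, which are exactly Claims~\ref{case1}, \ref{case2}, and~\ref{case3} and each give $\Omega_2\ge(1+\beta)/3$. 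Combining with Observation~\ref{om} yields $\ell^N(R)/M(\pi^*)\le3$ in every case.

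The hard work lies not in this assembly but in the $h\ge7$ analysis that it invokes through Claims~\ref{case1}--\ref{case3}: one uses that an optimum of (\ref{o2}) must be a KKT point, translates the stationarity conditions through the ratio identities of Lemma~\ref{lem:9} to confine the optimizer to a handful of explicit forms ($x^*=\nu$; or $x^*=\chi$ with $y^*=\frac{x^*+1}{2x^*+1}$; or a boundary point with $y^*\in\{0,1\}$ or $z^*=\beta/h$), and then lower-bounds $f$ uniformly in $h$ and $\beta$ using the monotonicity arguments of Lemmas~\ref{lema} and~\ref{lemb} and Claim~\ref{rm}. The only genuine checks remaining at the assembly stage are (i) that Claims~\ref{case1}--\ref{case3} really do exhaust the shapes of an optimal solution of (\ref{o2}), which holds because $y^*$ can always be taken strictly below $1$, and (ii) that each small value $h\in\{3,4,5,6\}$ is covered by a Lemma~\ref{smallh}-type finite enumeration.
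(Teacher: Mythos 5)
Your assembly for $h\in\{3,4,6\}$ and $h\ge7$ follows the paper's route exactly (reduction to the nonsingular covering case, the fractional program, Lemma~\ref{lem:di}, Observation~\ref{om}, Tables~\ref{table:1}--\ref{table:2} for small $h$, and the KKT analysis of Claims~\ref{case1}--\ref{case3} for $h\ge7$). The genuine gap is your parenthetical claim that $h=5$ follows ``by an entirely analogous finite check.'' It does not: for $h=5$ the relaxed bound $\Omega_1\ge\frac{1+\beta}{3}$ is simply false for part of the admissible range of $\beta$, so no enumeration of the branches of (\ref{o1}) can close this case. Concretely, take $h=5$, $x=2$, $y=1$, $\beta=\frac19$, $z=\frac{\beta}{5}=\frac1{45}$. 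Then
\begin{align*}
g\left(2,1,\tfrac1{45}\right)&=\tfrac45-\tfrac12+\left(\tfrac25+\tfrac16\right)+\tfrac2{45}-\tfrac45-\tfrac19=0,\\
f\left(2,1,\tfrac1{45}\right)&=\tfrac52+\tfrac25-\left(\tfrac56-\tfrac15\right)-\tfrac1{45}-2+\tfrac19=\tfrac{16}{45}<\tfrac{10}{27}=\tfrac{1+\beta}{3},
\end{align*}
and this point is feasible for (\ref{o1}) (it satisfies $x\in\mathbb N$, $x\le h-1$, $y\in[0,1]$, $z\ge\beta/h$), so $\Omega_1<\frac{1+\beta}{3}$; since $\Omega_2\le\Omega_1$, the KKT route fails as well (indeed the lower bounds in Claims~\ref{case1}--\ref{case3} explicitly use $h\ge7$). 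In fact the failure persists on an interval of $\beta$ values around $\frac19$, not just at an isolated point.

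This is precisely why the paper treats $h=5$ separately in Section~\ref{adx5}: Lemma~\ref{lem:hequal5} does not go through the optimization relaxation at all, but argues structurally on the ring --- either every link carries at least two Nash players (a direct calculation then bounds the ratio), or the ring is split at a suitable node so that the Nash paths of players $2,\dots,5$ become line segments, and an analysis of the ``first'' and ``last'' players yields load bounds such as (\ref{useful}) from which $\ell^N(R)\le 3M(\pi^*)$ follows. To repair your proof you must either import this separate argument for $h=5$ (or some substitute, e.g.\ showing that the problematic values of $\beta$ cannot arise from genuine covering equilibria with $h=5$, which is itself a nontrivial structural claim), since the statement of Lemma~\ref{lem:firstratio3} covers all $h>2$ and your current plan leaves $h=5$ unproved.
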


\section{The Cases $h=1$ and $h=2$}\label{sec:h=12}
\begin{lemma}\label{lem:h1}
Let $\mathcal I$ be an SRR instance with $h=1$. Then $M(\pi^N)/M(\pi^*)\le2$.
\end{lemma}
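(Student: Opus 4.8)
The plan is to exploit the fact that when $h=1$, exactly one agent --- call it agent $1$ --- routes differently in the Nash equilibrium $\pi^N$ than in the optimal routing $\pi^*$, while every other agent uses the same path in both. First I would apply Lemma~\ref{lem:singular} (in fact Assumption~\ref{ass1}) to reduce to the nonsingular case, so that $k=h=1$: all the non-switching agents can be absorbed into the link latency functions, leaving a single-agent instance. In a single-agent instance, agent $1$ in $\pi^N$ uses some path $N_1$ and in $\pi^*$ uses the complementary path $Q_1 = R\setminus N_1$; since there is only one agent, $\pi^N(e)=1$ for $e\in N_1$ and $0$ otherwise, and symmetrically for $\pi^*$. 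Then $M(\pi^N)=\ell^N(N_1)=\|N_1\|_a+\|N_1\|_b$ and $M(\pi^*)=\ell^*(Q_1)=\|Q_1\|_a+\|Q_1\|_b$.

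The Nash condition states that agent $1$ cannot improve by switching from $N_1$ to $Q_1$, where after switching the load on each link of $Q_1$ becomes $1$. Thus
\[
\|N_1\|_a+\|N_1\|_b \;=\; \ell^N(N_1) \;\le\; \ell^N(Q_1) + \|Q_1\|_a \;=\; \|Q_1\|_b + \|Q_1\|_a,
\]
because in $\pi^N$ no link of $Q_1$ carries any load (the only agent is on $N_1$), so $\ell^N(Q_1)=\|Q_1\|_b$. Hence $M(\pi^N) = \|N_1\|_a+\|N_1\|_b \le \|Q_1\|_a + \|Q_1\|_b = M(\pi^*)$, which already gives the much stronger bound $M(\pi^N)/M(\pi^*) \le 1$, and in particular $\le 2$.

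I expect the only subtlety --- hardly an obstacle --- to be bookkeeping around the reduction: one must check that after invoking Lemma~\ref{lem:singular} repeatedly to remove the non-switching agents, the resulting instance still has $h=1$ and that the inequalities $\ell'^N(R)=\ell^N(R)$ and $M'(\opt')\le M(\pi^*)$ propagate correctly, so that a bound on the reduced instance yields the bound on the original. If one prefers to avoid the reduction entirely, the same computation goes through directly: the non-switching agents contribute the same amount to both sides of the Nash inequality for agent $1$ (they use identical paths in $\pi^N$ and $\pi^*$), so they cancel, and comparing $\ell^N(N_1)$ with $\ell^N(Q_1)+\|Q_1\|_a \le \ell^*(Q_1) \le M(\pi^*)$ --- where the middle inequality holds because switching agent $1$ onto $Q_1$ produces exactly the loads of $\pi^*$ on $Q_1$ --- gives $M(\pi^N)=\ell^N(N_1)\le M(\pi^*)$ directly.
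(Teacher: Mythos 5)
There is a genuine gap: you bound only the switching agent's latency and then identify it with $M(\pi^N)$. With $h=1$, the case that actually matters (and the only one Assumption~\ref{ass1} leaves besides the trivial $k=1$) is $k=2$, where a non-switching agent $2$ keeps its path $N_2=Q_2$ and may be the agent attaining $M(\pi^N)$ --- precisely the singular case. Your reduction via Lemma~\ref{lem:singular} does not dispose of this agent: the lemma preserves the ring latency $\ell^N(R)$ and guarantees $M'(\opt')\le M(\pi^*)$, but it says nothing about the maximum \emph{player} latency, and the agent you remove may be exactly the one achieving the maximum, so a bound for the reduced single-agent instance does not transfer to $M(\pi^N,\mathcal I)/M(\pi^*,\mathcal I)$. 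Concretely, in $\pi^N$ agent $1$ occupies $N_1$, so agent $2$'s latency is $\ell^N(Q_2)=\ell^*(Q_2)-\|Q_1\cap Q_2\|_a+\|N_1\cap Q_2\|_a$, which your argument never bounds; nothing you prove shows it is at most $M(\pi^*)$, so the claimed conclusion $M(\pi^N)/M(\pi^*)\le 1$ is unsupported. The same flaw appears in your ``direct'' variant at the step $M(\pi^N)=\ell^N(N_1)$. This is exactly why the paper's reduction strategy splits the work into bounding $M(\pi^N)/\ell^N(R)$ on the original instance and $\ell^N(R)/M(\pi^*)$ on the reduced one, rather than transferring $M(\pi^N)$ itself.

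What you do establish --- $\ell^N(N_1)\le\ell^N(Q_1)+\|Q_1\|_a=\ell^*(Q_1)\le M(\pi^*)$ --- is the first half of the paper's proof. The missing half controls the non-switching agent through the whole ring: since also $\ell^N(Q_1)\le\ell^*(Q_1)\le M(\pi^*)$, one gets $\ell^N(R)=\ell^N(N_1)+\ell^N(Q_1)\le 2M(\pi^*)$, and every player's Nash latency, in particular agent $2$'s, is at most $\ell^N(R)$. That whole-ring step is where the factor $2$ in the statement comes from; it is not slack that a sharper accounting of the single switching agent would remove.
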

\begin{proof}
Only the case $k=2$ is relevant.
By assumption, we have $N_1 = R\backslash Q_1$ and we have $N_2=Q_2$.
That is, we have $\pi^*=\{Q_1, Q_2\}$ and we have $\pi^N=\{N_1, Q_2\}$.
In particular, we have
$\ell^N(Q_1)+||Q_1||_a = \ell^*(Q_1)$. Hence,
$\ell^N(Q_1)+||Q_1||_a \le M(\pi^*)$.
Since $\pi^N$ is a Nash equilibrium, it holds that
$\ell^N(N_1) \leq \ell^N(Q_1)+||Q_1||_a \leq M(\pi^*)$.
We thus get
\begin{equation}
\label{eq:h=1}
M(\pi^N)
\leq    \ell^N(R)
=           \ell^N(N_1)+\ell^N(Q_1)
\leq    2M(\pi^*)\,,
\end{equation}
as desired.
\end{proof}

\label{app:h=2}
\begin{lemma}
\label{lem:uncovered2}
Let $\mathcal{I}$ be an SRR instance with $h=2$ and $\cup_{i=1,2} N_i\ne R$.
Then $M(\pi^N)/M(\pi^*)\le2$.
\end{lemma}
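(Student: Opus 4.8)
The plan is to exploit the same structural picture used in Lemma \ref{lem:ringnotcovered}, specialized to $h=2$. Since $N_1$ and $N_2$ do not cover the ring but, by Lemma \ref{lem:twopaths}, are not link-disjoint, there is a common arc $P\subseteq N_1\cap N_2$, and the union $N_1\cup N_2$ is itself a proper sub-path of $R$. Let $C:=R\setminus(N_1\cup N_2)$ be the nonempty ``uncovered'' arc; every link $e\in C$ is traversed in $\pi^*$ by both switching agents, so $\pi^*(e)=2$ there, while both optimal paths $Q_1,Q_2$ must run through $C$ (the only way for agent $i$ to avoid link $e\in C$ is to use $N_i$, but $e\notin N_i$). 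First I would fix $j\in\{1,2\}$ with $P\subseteq N_j$ (either works since $P\subseteq N_1\cap N_2$), and split $Q_j=Y\cup Z$ where $Y=\{e\in Q_j:\pi^N(e)\ge1\}$ and $Z=Q_j\setminus Y\supseteq C$, exactly as in the proof of Lemma \ref{lem:ringnotcovered} with $h=2$.

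Next I would reuse the two key inequalities from that proof. On one hand, the covering bound on $N_p\cup N_q\setminus P$ gives $\pi^N(g)\le 1$ off the core, hence $\ell^N(Q_j)\le \|Y\|_a+\|Y\|_b+\|Z\|_b$, and combining with $\ell^N(R)=\ell^N(Q_j)+\ell^N(N_j)\le 2\ell^N(Q_j)+\|Y\|_a+\|Z\|_a$ yields
\[
\ell^N(R)\le 3\|Y\|_a+2\|Y\|_b+\|Z\|_a+2\|Z\|_b .
\]
On the other hand, since every link of $Z$ carries both agents in $\pi^*$ while links of $Y$ carry at least one, $\ell^*(Q_j)\ge \|Y\|_a+\|Y\|_b+2\|Z\|_a+\|Z\|_b$, so $M(\pi^*)\ge \ell^*(Q_j)\ge \|Y\|_a+\|Y\|_b+2\|Z\|_a+\|Z\|_b$. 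Taking the worst case of the ratio of these two linear forms (it is maximized when $\|Z\|_a=\|Z\|_b=\|Y\|_b=0$) gives $\ell^N(R)/M(\pi^*)\le 3$, which is not yet good enough; the point of doing $h=2$ separately is that we must get all the way down to $2$, so I would instead bound $M(\pi^N)$ directly rather than passing through $\ell^N(R)$.

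To get the sharp factor $2$ I would argue as follows. Let $Q\in\pi^N$ attain $M(\pi^N)=\ell^N(Q)$; this is either a switching agent's Nash path $N_1$ or $N_2$ (nonsingular case, by Lemma \ref{lem:singular} we may assume $k=h=2$) — in the singular subcase $h=2,k=3$ the extra agent uses the same path in $\pi^N$ and $\pi^*$, and Lemma \ref{lem:singular} reduces us to $k=2$ anyway. So say $Q=N_i$. The Nash condition gives $\ell^N(N_i)\le \ell^N(R\setminus N_i)+\|R\setminus N_i\|_a=\ell^N(Q_i)+\|Q_i\|_a$. Now $R\setminus N_i$ contains the uncovered arc $C$, on which $\pi^N$ puts $0$ players and $\pi^*$ puts $2$; and $\ell^N(Q_i)+\|Q_i\|_a\le \ell^*(Q_i)$ would hold were it not for the $b$-terms on $C$, but in fact $\ell^N(Q_i)+\|Q_i\|_a=\sum_{e\in Q_i}(\pi^N(e)+1)a_e+b_e\le \sum_{e\in Q_i}\pi^*(e)a_e+b_e=\ell^*(Q_i)\le M(\pi^*)$ provided $\pi^N(e)+1\le\pi^*(e)$ for every $e\in Q_i$. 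For $e\in C\subseteq Q_i$ this reads $1\le 2$, true; for $e\in Q_i\setminus C$ one has $\pi^*(e)\ge 1$ and I need $\pi^N(e)=0$, i.e. $Q_i\setminus C$ carries no Nash player — this is where the non-covering structure must be pushed: since $Q_i=R\setminus N_i$ and the only links with a Nash player outside $N_i$ lie in $N_{3-i}\setminus N_i$, I would show $N_{3-i}\setminus N_i\subseteq C$ fails in general, so instead I bound the contribution of those links by $\|N_{3-i}\setminus N_i\|_a\le\|R\setminus N_i\|_a$ and fold it in, arriving at $\ell^N(N_i)\le 2\ell^N(R\setminus N_i)+\big(\text{correction}\big)$; closing the argument then amounts to checking the correction term is dominated by $M(\pi^*)$ via the $\pi^*(e)\ge 2$ bound on $C$. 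I expect \textbf{this last bookkeeping step — controlling exactly which links of $R\setminus N_i$ carry a Nash player, and matching each against its $\pi^*$-multiplicity — to be the main obstacle}; the rest is a direct imitation of Lemmas \ref{lem:ringnotcovered} and \ref{lem:h1}. A clean alternative, which I would try first, is to observe that $N_1\cup N_2\subsetneq R$ forces $N_1$ and $N_2$ to overlap in the single arc $P$ and have their ``private parts'' disjoint, so WLOG $M(\pi^N)=\ell^N(N_1)$ with $\pi^N(e)\le 2$ only on $P$ and $\le 1$ elsewhere, and then run the $h=1$-style two-line estimate $M(\pi^N)=\ell^N(N_1)\le \ell^N(Q_1)+\|Q_1\|_a=\ell^*(Q_1)-\sum_{e\in C}b_e\le M(\pi^*)$ together with a symmetric bound, concluding $M(\pi^N)\le 2M(\pi^*)$ after noting $\ell^N(P)$ is counted only once.
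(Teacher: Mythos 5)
The decisive gap is your treatment of the singular subcase $h=2$, $k=3$. You claim that Lemma~\ref{lem:singular} ``reduces us to $k=2$ anyway'', but that reduction only preserves the ring latency $\ell^N(R)$ and the bound $M'(\opt')\le M(\pi^*)$; it does \emph{not} preserve $M(\pi^N)$. In the singular case the maximum $M(\pi^N)=\ell^N(N_3)$ is attained only by the non-switching agent $3$, so after removing that agent the induced Nash routing has maximum latency $\max_{i\le 2}\ell^N(N_i)<M(\pi^N)$, and a bound of $2$ for the reduced instance says nothing about $M(\pi^N)/M(\pi^*)$ in the original one. This is precisely why Section~\ref{sec:relationkh} permits the singular reduction only when the numerator is $\ell^N(R)$, and why the paper's proof of this lemma spends most of its effort on the $k=3$ case: it rewrites $Q_1=(N_2\backslash(N_1\cap N_2))\cup(Q_1\cap Q_2)$ and $Q_2=(N_1\backslash(N_1\cap N_2))\cup(Q_1\cap Q_2)$, sums the two Nash inequalities to get $2\ell^N(N_1\cap N_2)\le 2\ell^N(Q_1\cap Q_2)+||Q_1||_a+||Q_2||_a$, and combines this with $M(\pi^*)\ge\ell^N(N_3)-2||N_3\cap N_1\cap N_2||_a$ and $\ell^*(Q_i)=\ell^N(Q_i)+2||Q_1\cap Q_2||_a$ to reach $\ell^N(N_3)\le 2M(\pi^*)$. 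Your proposal offers no substitute for this argument, and going through $\ell^N(R)/M(\pi^*)\le 3$ cannot rescue it, since with Lemma~\ref{lem:uncoveredalpha} that only yields $13/6>2$.

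The nonsingular case, which is the easy half, is also not actually closed in your write-up. The link-by-link comparison $\pi^N(e)+1\le\pi^*(e)$ on $Q_i$ fails exactly on $Q_i\cap N_{3-i}$, where $\pi^N(e)+1=2$ but $\pi^*(e)=1$; you flag the resulting ``correction term'' as an unresolved obstacle, and your ``clean alternative'' rests on the false identity $\ell^N(Q_1)+||Q_1||_a=\ell^*(Q_1)-\sum_{e\in C}b_e$ (the same links of $Q_1\cap N_2$ contribute $2a_e$ on the left but only $a_e$ to $\ell^*(Q_1)$), which would even give the absurd conclusion $M(\pi^N)\le M(\pi^*)$. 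The correct and short route is the paper's: take WLOG $\ell^N(N_1)\le\ell^N(N_2)$, note that on $Q_2$ one has $\pi^N(e)\le 1$ and $\pi^*(e)\ge 1$, hence $\ell^N(Q_2)+||Q_2||_a\le 2\ell^*(Q_2)\le 2M(\pi^*)$, and conclude $M(\pi^N)=\ell^N(N_2)\le 2M(\pi^*)$ from the Nash condition --- a factor-$2$ comparison per link rather than the factor-$1$ comparison you attempt.
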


\begin{proof}
We first consider the case that $\mathcal{I}$ is nonsingular.
Then $k=h=2$ by definition.
Assume without loss of generality that $\ell^N(N_1) \leq \ell^N(N_2)$.
Since $\pi^N$ is a Nash equilibrium, it holds that
$\ell^N(N_2)\leq \ell^N(Q_2)+||Q_2||_a$, which, by the fact that $k=h=2$, is at most $2\ell^*(Q_2)$.
Therefore,
$M(\pi^N) = \ell^N(N_2) \leq 2\ell^*(Q_2) \leq 2 M(\pi^*)\,.$

Let us now consider the case that $\mathcal{I}$ is singular.
That is, we have $k=3$ and $M(\pi^N)=\ell^N(N_3)$.
First note that we can rewrite
$$
Q_1
= R\backslash N_1
= (N_2 \cup Q_2)\backslash N_1
= (N_2 \backslash N_1) \cup (Q_2\backslash N_1)
= (N_2\backslash (N_1 \cap N_2)) \cup (Q_1 \cap Q_2)
$$
and, similarly, we have $Q_2=(N_1\backslash (N_1 \cap N_2)) \cup (Q_1 \cap Q_2)$.
Using this and the fact that $\pi^N$ is a Nash routing, we obtain the following two inequalities.
\begin{align*}
\ell^N(N_1)&\leq\ell^N(Q_1)+||Q_1||_a=\ell^N(N_2)-\ell^N(N_1\cap N_2)+\ell^N(Q_1\cap Q_2)+||Q_1||_a\\
\ell^N(N_2)&\leq\ell^N(Q_2)+||Q_2||_a=\ell^N(N_1)-\ell^N(N_1\cap N_2)+\ell^N(Q_1\cap Q_2)+||Q_2||_a
\end{align*}
From this we get
$2\ell^N(N_1\cap N_2)\leq2\ell^N(Q_1\cap Q_2)+||Q_1||_a+||Q_2||_a.$
Furthermore, when comparing $\ell^*(Q_i)$ and $\ell^N(Q_i)$ for $i=1,2$, we can ignore
player 3, because it contributes the same to both values. Hence
\begin{align*}
\ell^*(Q_1)
& = \;\ell^*(Q_1 \cap N_2) + \ell^*(Q_1 \cap Q_2)\\
&=\; \ell^N(Q_1 \cap N_2) + \ell^N(Q_1 \cap Q_2) + 2 ||Q_1\cap Q_2||_a\\
& =\; \ell^N(Q_1) + 2 ||Q_1\cap Q_2||_a,
\end{align*}
and, similarly, $\ell^*(Q_2)=\ell^N(Q_2) + 2 ||Q_1 \cap Q_2||_a$ holds.
From this we conclude
\begin{align}
 M(\pi^*)
& \geq \ell^*(Q_1)
=\ell^N(Q_1)+2||Q_1\cap Q_2||_a \nonumber \\
& =\ell^N(Q_1\cap Q_2)+\ell^N(Q_1\backslash Q_2)+2||Q_1\cap Q_2||_a \label{optn1}\\
M(\pi^*)
& \geq \ell^*(Q_2)
=\ell^N(Q_2)+2||Q_1\cap Q_2||_a \nonumber \\
&=\ell^N(Q_1\cap Q_2)+\ell^N(Q_2\backslash Q_1)+2||Q_1\cap Q_2||_a\label{optn2}
\end{align}
and
$
M(\pi^*) \ge \ell^*(N_3) \geq \ell^N(N_3)-2||N_3\cap N_1\cap N_2||_a. 
$
Notice that
$\ell^N(Q_1\backslash Q_2) \geq ||Q_1\backslash Q_2||_a$,
$\ell^N(Q_2\backslash Q_1) \geq ||Q_2\backslash Q_1||_a$
and
$2||N_3\cap N_1\cap N_2||_a\leq\ell^N(N_1\cap N_2)$.
We may thus conclude $M(\pi^N)
 = \ell^N(N_3)$ is upper bounded by
\begin{align*}
 & \;M(\pi^*) + 2 ||N_3\cap N_1\cap N_2||_a \\
 \leq&\; M(\pi^*) + \ell^N(N_1 \cap N_2)\\
 \leq & \;M(\pi^*) + \ell^N(Q_1 \cap Q_2) + \frac{1}{2} ||Q_1||_a + \frac{1}{2} ||Q_2||_a \\
 \leq & \;2 M(\pi^*) - \frac{1}{2} \ell^N(Q_1\backslash Q_2) - 2 ||Q_1 \cap Q_2||_a
                                    -   \frac{1}{2} \ell^N(Q_2\backslash Q_1)
                                    + \frac{||Q_1||_a + ||Q_2||_a}{2}\\
 \leq &\;2 M(\pi^*) - \frac{1}{2} ||Q_1 \backslash Q_2||_a - 2||Q_1 \cap Q_2||_a
                                    - \frac{1}{2} ||Q_2 \backslash Q_1||_a +  \frac{||Q_1||_a + ||Q_2||_a}{2}\\
  \leq & \;2 M(\pi^*). \hspace{100mm}
\end{align*}

\end{proof}

\begin{lemma}
\label{lem:firstratio2}
If $\cup_{i \in [h]}N_i= R$ and $h\le 2$, then $\ell^N(R)/M(\pi^*)\leq 3$.
\end{lemma}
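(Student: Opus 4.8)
The plan is to reduce to a two-player instance carrying a covering Nash routing, and then to bound that instance's optimum by simply inspecting all of its feasible routings. Since $\cup_{i\in[h]}N_i=R$ forces $h\ge1$, only $h\in\{1,2\}$ occur. For $h=1$ nothing new is needed: the argument in the proof of Lemma~\ref{lem:h1} already yields $\ell^N(R)\le 2M(\pi^*)\le 3M(\pi^*)$. So assume $h=2$. If $\pi^N$ is nonsingular then $k=h=2$ and $\pi^N=\{N_1,N_2\}$ with $N_1\cup N_2=R$. If $\pi^N$ is singular then $k=3$ with a unique non-switching agent $q$; applying Lemma~\ref{lem:singular} to $q$ produces a two-player instance $\mathcal I'$ whose restricted Nash routing is again $\{N_1,N_2\}$ (still covering), and with $\ell'^N(R)=\ell^N(R)$ and $M'(\opt')\le M(\pi^*)$. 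In both cases it suffices to establish the self-contained claim: \emph{for every two-player SRR instance admitting a Nash routing $\rho=\{N_1,N_2\}$ with $N_1\cup N_2=R$, we have $\ell(R,\rho)\le 3\,\opt$}, where $\opt$ is that instance's optimal maximum latency; applying it to $\mathcal I$ (nonsingular) or to $\mathcal I'$ (singular) then gives $\ell^N(R)\le 3M(\pi^*)$.

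For the claim, write $X:=R\setminus N_1$ and $Y:=R\setminus N_2$, the two alternative paths. Covering gives $X\subseteq N_2$, $Y\subseteq N_1$, and $X\cap Y=R\setminus(N_1\cup N_2)=\emptyset$. In $\rho$ the edges of $X$ carry load exactly $1$ (only player $2$), so $\ell(R,\rho)=\ell(N_1,\rho)+\ell(X,\rho)=\ell(N_1,\rho)+\|X\|_a+\|X\|_b$. If player $1$ deviated to $X$, its edges would carry load $2$, so the Nash property gives $\ell(N_1,\rho)\le 2\|X\|_a+\|X\|_b$. Combining, $\ell(R,\rho)\le 3\|X\|_a+2\|X\|_b$, and symmetrically $\ell(R,\rho)\le 3\|Y\|_a+2\|Y\|_b$.

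Now go through the at most four feasible routings $(P_1,P_2)$, $P_i\in\{N_i,R\setminus N_i\}$, and check that each has maximum player latency at least $\ell(R,\rho)/3$; this proves $\opt\ge\ell(R,\rho)/3$. For $(N_1,N_2)=\rho$, the two player latencies $\ell(N_1,\rho),\ell(N_2,\rho)$ sum to $\ell(R,\rho)+\ell(N_1\cap N_2,\rho)\ge\ell(R,\rho)$, so the larger of them is $\ge\ell(R,\rho)/2$. For $(N_1,R\setminus N_2)$: since $R\setminus N_2=Y\subseteq N_1$, the edges of $Y$ carry load $2$, so player $2$'s latency is $2\|Y\|_a+\|Y\|_b$, and $3(2\|Y\|_a+\|Y\|_b)\ge 3\|Y\|_a+2\|Y\|_b\ge\ell(R,\rho)$; the routing $(R\setminus N_1,N_2)$ is symmetric. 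For $(R\setminus N_1,R\setminus N_2)=(X,Y)$: $X,Y$ are disjoint, so player $1$'s latency is $\|X\|_a+\|X\|_b$, and $3(\|X\|_a+\|X\|_b)\ge 3\|X\|_a+2\|X\|_b\ge\ell(R,\rho)$. Since these exhaust all routings, the claim follows.

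I do not expect a genuine obstacle: each of the four estimates is a one-line consequence of one Nash inequality plus nonnegativity of the latency coefficients. The only things to state with care are that the covering hypothesis is exactly what makes $X\subseteq N_2$ and $X\cap Y=\emptyset$, so that the deviation latencies and the routing $(X,Y)$ behave as written, and that Lemma~\ref{lem:singular}, applied to the (unique) non-switching agent in the singular subcase, delivers precisely the three properties $N_1\cup N_2=R$, $\ell'^N(R)=\ell^N(R)$ and $M'(\opt')\le M(\pi^*)$ that the reduction uses.
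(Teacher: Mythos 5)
Your proof is correct, and it differs from the paper's in a meaningful way. The paper, after the same reduction via Lemma~\ref{lem:singular} to the case $k=h=2$, argues against the specific complementary optimal routing $\{Q_1,Q_2\}$: using $Q_2\subseteq N_1$ it writes $\ell^N(N_1)=\ell^N(Q_2)+\ell^N(N_1\cap N_2)$, applies player~1's Nash inequality and the ordering $\ell^N(Q_1)\le\ell^N(Q_2)$ to get $\ell^N(N_1\cap N_2)\le \|Q_1\|_a\le M(\pi^*)$, and then uses the identity $\ell^N(Q_i)=\ell^*(Q_i)$ (valid because $Q_1\cap Q_2=\emptyset$ under covering) to conclude $\ell^N(R)\le\ell^*(Q_1)+\ell^*(Q_2)+M(\pi^*)\le 3M(\pi^*)$. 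You instead prove a self-contained claim about any two-player instance with a covering Nash routing: from the two Nash inequalities you derive $\ell(R,\rho)\le 3\|X\|_a+2\|X\|_b$ and $\ell(R,\rho)\le 3\|Y\|_a+2\|Y\|_b$, and then lower-bound the optimum by exhaustively checking all four feasible routings. Your checks are all sound (in particular the load computations use exactly the covering facts $X\subseteq N_2$, $Y\subseteq N_1$, $X\cap Y=\emptyset$), and the enumeration is complete since each player has only two paths on a ring. What your route buys is robustness: you never need to know what the optimal routing of the (possibly reduced) instance looks like — in particular you do not need that both players route on their complements in the optimum, which is the structural fact the paper's identity $\ell^N(Q_i)=\ell^*(Q_i)$ rests on — so the singular case goes through by simply citing the three conclusions of Lemma~\ref{lem:singular}. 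What the paper's route buys is brevity: one Nash inequality plus one identity, with no case enumeration. Both yield the same bound of $3$.
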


\begin{proof}
When $h=1$, inequalities (\ref{eq:h=1}) imply the conclusion. By Lemma \ref{lem:singular}, it remains to consider  $k=h=2$ and $N_1 \cup N_2 = R$.
Suppose without loss of generality that $\ell^N(Q_{1})\le \ell^N(Q_{2})$.
Note that $Q_2 \subseteq N_1$ and thus $N_1= Q_2 \cup (N_1 \cap N_2)$. This yields
$\ell^N( Q_2)+\ell^N(N_1\cap N_2) = \ell^N(N_1) \le \ell^N(Q_1)+||Q_1||_a$, where the latter inequality stems from the fact that player $1$ does not want to deviate in $\pi^N$.
Together with the assumption $\ell^N(Q_1)\le \ell^N(Q_2)$ we thus have
  $\ell^N(N_1\cap N_2) \leq \ell^N(Q_1)-\ell^N(Q_2) + ||Q_1||_a \leq ||Q_1||_a \leq M(\pi^*)$. It follows from $N_i=R\backslash Q_i$, $i=1,2$ that
\begin{align*}\ell^N(R)
 &=\ell^N(Q_2)+\ell^N(Q_1) + \ell^N(N_1\cap N_2)\le\ell^N(Q_2)+\ell^N(Q_1)+M(\pi^*).\end{align*}
Since $\ell^N(Q_i)=\ell^*(Q_i)$ for $i=1,2$, we obtain
$\ell^N(R) \leq\ell^*(Q_1)+\ell^*(Q_2)+M(\pi^*)   \leq 3 M(\pi^*)$ as desired.
\end{proof}

\section{Covering Equilibria with $h=5$}\label{adx5}

For the special case of $h=5$, we upper bound $\ell^N(R)/M(\pi^*)$
directly by using structural properties of the Nash equilibrium.

\begin{lemma}
\label{lem:hequal5}
If $\cup_{i \in [h]}N_i= R$ and $h=5$, then $\frac{\ell^N(R)}{M(\pi^*)}\le3$.
\end{lemma}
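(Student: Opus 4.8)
The plan is to prove this case directly from the circular-arc geometry of the Nash routing, rather than through the optimization framework of Section~\ref{sec:covering}: summing the five optimal inequalities and using only their average bounds $M(\pi^*)$ from below by $\tfrac15\sum_{i}\ell^*(Q_i)$, and for $h=5$ this relaxation is too weak---it admits ratios slightly above $3$---so a sharper, combinatorial use of $M(\pi^*)\ge\max_i\ell^*(Q_i)$ is needed. First I would invoke Lemma~\ref{lem:singular} to reduce to the nonsingular case $k=h=5$. Then each of the five switching agents satisfies $\ell^N(N_i)\le\ell^N(Q_i)+\|Q_i\|_a$, equivalently $\ell^N(R)\le 2\ell^N(Q_i)+\|Q_i\|_a$; moreover $\pi^*(e)=5-\pi^N(e)$ for every link $e$, and since every link of $Q_i$ carries agent $i$ in $\pi^*$ we have the handy bound $\|Q_i\|_a\le\ell^*(Q_i)\le M(\pi^*)$.

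Next I would extract the structure guaranteed by Lemma~\ref{lem:twopaths}. The $N_i$ are arcs of the ring with $\bigcup_{i\in[5]}N_i=R$, and no two of them are disjoint; hence a minimal subcover $\mathcal F'\subseteq\{N_1,\dots,N_5\}$ has size at most $3$, because if $|\mathcal F'|\ge4$ then the private points witnessing minimality would force two arcs of $\mathcal F'$ to be disjoint, contradicting Lemma~\ref{lem:twopaths}. (Size $1$, i.e.\ an arc equal to $R$, is degenerate and falls under the small-$h$ lemmas.) This gives two cases: case~(A), where two Nash arcs already cover the ring, $N_p\cup N_q=R$; and case~(B), where no two do but three of them, say $N_a\cup N_b\cup N_c=R$, cover minimally. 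In case~(B) one moreover gets $Q_a\cup Q_b\cup Q_c=R$, since their common intersection equals $R\setminus\bigcup_iN_i=\emptyset$ while pairwise intersections are nonempty; consequently every link lies in exactly one or two of $N_a,N_b,N_c$.

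In case~(B) the natural step is to add only the three Nash inequalities for $a,b,c$ and the three optimal bounds $M(\pi^*)\ge\ell^*(Q_j)$, $j\in\{a,b,c\}$. Writing $w(e)\in\{1,2\}$ for the number of $Q_a,Q_b,Q_c$ containing $e$ and using $\pi^*(e)=5-\pi^N(e)$, this yields $3\ell^N(R)\le\sum_e w(e)\,[a_e(2\pi^N(e)+1)+2b_e]$ and $3M(\pi^*)\ge\sum_e w(e)\,[a_e(5-\pi^N(e))+b_e]$, so $\ell^N(R)\le3M(\pi^*)$ would follow from the pointwise inequality $a_e(5\pi^N(e)-14)\le b_e$. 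This holds for every link with $\pi^N(e)\le2$, so the bound goes through on all links carried by at most two Nash agents. I expect the main obstacle to be exactly the links with $\pi^N(e)\in\{3,4\}$, where agents $4$ and/or $5$ pile on: here the naive comparison fails and one must pay for the excess by also invoking the Nash inequalities of agents $4$ and $5$ (note $\pi^N(e)=4$ forces both of them onto $e$ and forces $w(e)=1$), or by choosing the covering triple $\{a,b,c\}$ to minimize the set of overloaded links. Disposing of these links turns case~(B) into a finite but delicate sub-case analysis driven by how the two remaining arcs $N_4,N_5$ overlap $N_a,N_b,N_c$.

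Case~(A) should be easier. From $Q_p\subseteq N_q$, $Q_q\subseteq N_p$, and $Q_p\cap Q_q=\emptyset$ (as $N_p\cup N_q=R$), inclusion--exclusion turns the two Nash inequalities for $p$ and $q$ into $\ell^N(R)\le\ell^N(Q_p\cup Q_q)+\tfrac12(\|Q_p\|_a+\|Q_q\|_a)\le\ell^N(Q_p\cup Q_q)+M(\pi^*)$, so it remains to show $\ell^N(Q_p\cup Q_q)\le2M(\pi^*)$. For this I would again use that each link of $Q_p\cup Q_q$ is used by at least one (and typically several) agents in $\pi^*$; and if some further agent's Nash path avoids $Q_p\cup Q_q$, then $Q_p\cup Q_q\subseteq Q_j$ for that agent $j$ and one routes the estimate through $j$'s inequalities. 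Throughout, the degenerate sub-configurations ($s_i=t_i$, an arc equal to $R$, or fewer than five distinct switching arcs after the reduction) are absorbed by the lemmas for small $h$ proved earlier.
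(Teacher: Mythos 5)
There is a genuine gap, and it sits exactly where the real difficulty of the $h=5$ case lies. Your reduction to the nonsingular case, the observation $\pi^*(e)=5-\pi^N(e)$, and the reduction via Lemma~\ref{lem:twopaths} to a minimal subcover of size $2$ or $3$ are fine (though your justification that $N_a\cap N_b\cap N_c=\emptyset$ is garbled --- this set is $R\setminus(Q_a\cup Q_b\cup Q_c)$, not $R\setminus\bigcup_i N_i$; the conclusion is salvageable from minimality via private links, but not by the equality you wrote). The problem is that in case~(B) your pointwise comparison $a_e(5\pi^N(e)-14)\le b_e$ fails precisely on links with $\pi^N(e)\in\{3,4\}$, and at that point you only announce that one "must pay for the excess'' by invoking the inequalities of the remaining two agents or by a clever choice of the covering triple, followed by "a finite but delicate sub-case analysis'' that is never carried out. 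Case~(A) has the same hole: the remaining claim $\ell^N(Q_p\cup Q_q)\le 2M(\pi^*)$ amounts to $\sum_{e\in Q_p\cup Q_q}a_e\bigl(2\pi^N(e)-5\bigr)\le 0$, which again fails pointwise on links with Nash congestion $3$ or $4$ and is not established in aggregate. Since nothing prevents such links (a link of $Q_p$ can carry up to four Nash paths while carrying a single optimal path), the bound of $3$ is simply not proved for either case; what you have is a correct setup plus an accurate diagnosis of the obstruction.

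For comparison, the paper's proof is built around taming exactly these high-congestion links. It first disposes of the case $\pi^N(e)\ge 2$ for all $e$ by the summed Nash constraint (\ref{eq:eibound}) together with (\ref{eq:lnrmf}); otherwise it picks a link $e_1$ with $\pi^N(e_1)=1$, cuts the ring at a suitable node $v$, and uses the first/last structure of the four remaining arcs (the sets $F$ and $L$) to select three players $1,2,3$ for which the congestion-$3$ links of each $Q_r$ are confined to the double intersection $N_s\cap N_t$ of the other two, while no link of any $Q_r$ has congestion $4$ outside that region. This localization yields inequality (\ref{useful}), and an inclusion--exclusion computation over the three Nash inequalities then gives $\ell^N(R)\le\sum_{i=1}^3\ell^*(Q_i)\le 3M(\pi^*)$. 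Some structural argument of this type --- controlling \emph{where} the overloaded links can lie relative to the chosen complements, not just that they exist --- is the missing ingredient your proposal would need to supply to close cases (A) and (B).
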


\begin{proof}
Again, by Lemma \ref{lem:singular}, we only need to consider the case where $\pi^N$ is nonsingular.
If $\pi^N(e)\ge2$ for all $e\in E$, then $A_1=B_1=0$ in   (\ref{eq:eibound}) and (\ref{eq:lnrmf}). Collecting terms in (\ref{eq:eibound}) gives $A_3+11A_4+25A_5+B_3+3B_4+5B_5\le5A_2+B_2$, which is equivalent to $5(\sum_{i=2}^5iA_i+B_i)\le25A_2+12A_3-13A_4-50A_5 +8B_2+2B_3-4B_4-10B_5$. It follows from (\ref{eq:lnrmf}) that
\begin{align*}
\frac{\ell^N(R)}{M(\pi^*)}&\le\frac{5(\sum_{i=2}^5iA_i+B_i)}{\sum_{i=2}^5((5-i)^2A_i+(5-i)B_i)}\\
& \le \frac{25A_2+12A_3 +8B_2+2B_3 }{9A_2+4A_3+A_4+4B_1+3B_2+2B_3+B_4}\leq 3.
\end{align*}

Therefore, we may assume without loss of generality that there exists a link $e_1\in N_1$ with $\pi^N(e_1)=1$. Note that this implies $e_1\notin \cup_{i=2}^{5}N_i$ and, thus, $\cup_{i=2}^{5}N_i \ne R$.
Starting from link $e_1$, let $v$ be the clockwise first node
where the Nash path $N_i$ of another player $i\in\{2,3,4,5\}$  starts.  For the analysis,  let us temporarily split the ring at node $v$, and put the nodes
on a line from left to right, starting and ending with $v$.
Then for each player in $\{2,3,4,5\}$, its Nash path is one line segment
by $\cup_{i=2}^{5}N_i\ne R$ and definition of $v$. See Figure \ref{fg:open} for an illustration.

Let $F \subseteq \{2,3,4,5\}$ consist of  two players with the leftmost left endpoints,
and $L \subseteq \{2,3,4,5\}$ consist of two agents with the rightmost right endpoints.
(Going from left to right, $F$ are two of the \emph{first} players that start, and $L$ are two of the
\emph{last} players that finish their Nash paths.)

\begin{figure}[!t]
\centerline{\includegraphics[scale=1]{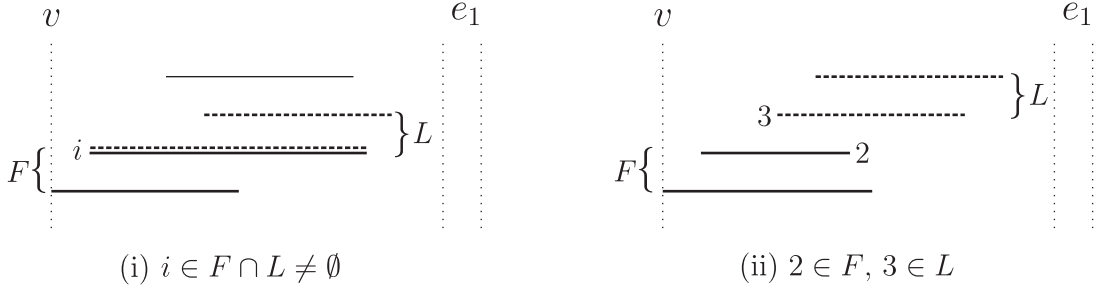}}
\caption{ \label{fg:open} Splitting $R$ at node $v$, where the Nash path $N_1$ containing $e_1$ is not depicted. }
\end{figure}

If there exists a player $i$ that is in both $F$ and $L$---formally
if $i\in F\cap L\not=\emptyset$ (see Figure \ref{fg:open}(i) for an
illustration), then the definitions of $F$ and $L$ guarantee that
both to the left and to the right of the path $N_i$ of $i$ in
$\pi^N$,  any link can only be used (in $\pi^N$) by   at most one
player $j\in\{2,3,4,5\}$ and possibly by the first player. It
follows that $\pi^N(e)\le2$ and hence $3\le\pi^*(e)$ for all $e\in
Q_i$. In particular we have $\pi^N(e)+1 \leq \pi^*(e)$ for all $e\in
Q_i$. Since $\pi^N$ is a Nash equilibrium, we conclude that
$\ell^N(N_i)\le\ell^N(Q_i)+||Q_i||_a\le\ell^*(Q_i)\le M(\pi^*)$,
giving $\ell^N(R)=\ell^N(N_{i})+\ell^N(Q_{i})\leq 2M(\pi^*)$ as
desired.

Therefore, let us consider the case $F\cap L=\emptyset$. Without loss of generality, let player 2 be a player in $F$ with the rightmost right endpoint,
and let player 3 be a player in $L$ with the leftmost left endpoint (an illustration is given by Figure \ref{fg:open}(ii)).
Then in $\pi^N$, any link to the right of $N_2$ can only be used by players in $L\cup\{1\}$, i.e., it can be used by at most three players, and any link to the left of $N_2$ can
only be used by players in $(F\backslash\{2\})\cup\{1\}$, i.e., it can be used by at most two players.
Thus $\pi^N(e)\le3$ for every $e\in  Q_2 $. Analogously we have $\pi^N(e)\le3$ for every $e\in  Q_3 $.
Moreover, we see that $\{e\in Q_2|\pi^N(e)=3\}\subseteq  N_3 \cap   N_1 $ and
$\{e\in  Q_3 |\pi^N(e)=3\}\subseteq  N_2 \cap   N_1 $. On the other hand, from the selections of player 2 from $F$, and player 3 from $L$, it is easy to see that $\{e\in  Q_1 |\pi^N(e)\ge3\}\subseteq N_2\cap N_3$.
This implies the following useful inequality, valid for all assignments $\{r,s,t\}=\{1,2,3\}$:
\begin{align}
\label{useful}
\ell^N(Q_r\backslash (N_s\cap N_t))+||Q_r\backslash (N_s\cap N_t)||_a
\leq \ell^*(Q_r\backslash (N_s\cap N_t))\,.
\end{align}

Let $S=\{(1,2,3),(2,3,1),(3,1,2)\}$.
Adding the Nash inequalities for the paths $N_i$, $i=1,2,3$, and their alternatives gives
\begin{align*}
  \sum_{i=1}^3\ell^N(N_i) \le& \sum_{i=1}^3(\ell^N(Q_i)+||Q_i||_a) \\
    =& \sum_{(r,s,t)\in S}\left[\ell^N(Q_r\cap N_s\cap N_t)+\ell^N(Q_r\backslash( N_s\cap N_t))\right]\\
    &+\sum_{(r,s,t)\in S}\left[||Q_r\cap N_s\cap N_t||_a+||Q_r\backslash( N_s\cap N_t)||_a\right]
\end{align*}
It follows from (\ref{useful}) and $||Q_1\cap N_2\cap N_3||_a\le\ell^*(Q_1\cap N_2\cap N_3)$ that
\begin{align*}
&\qquad  \sum_{i=1}^3\ell^N(N_i) \\
&\le  \sum_{(r,s,t)\in S}\left(\ell^N(Q_r\cap N_s\cap N_t)+\ell^*(Q_r\cap N_s\cap N_t)+\ell^*(Q_r\backslash( N_s\cap N_t))\right)\\
   &=  \sum_{(r,s,t)\in S} \ell^N(Q_r\cap N_s\cap N_t)+\sum_{i=1}^3\ell^*(Q_i)\\
    &=  \sum_{(r,s,t)\in S} \left(\ell^N( N_s\cap N_t)-\ell^N( N_r\cap N_s\cap N_t)\right)+\sum_{i=1}^3\ell^*(Q_i)\\
    &= \sum_{1\le i<j\le3}\ell^N(N_i\cap N_j)  -3\ell^N(N_1\cap N_2\cap N_3)+\sum_{i=1}^3\ell^*(Q_i)\,,
\end{align*}
thus implying
\[ \sum_{i=1}^3\ell^N(N_i)-\sum_{1\le i<j\le3}\ell^N(N_i\cap N_j)  +\ell^N(N_1\cap N_2\cap N_3)\le \sum_{i=1}^3\ell^*(Q_i)\]
Notice that the left-hand side of the above inequality equals $\ell^N(R)$ and its the right-hand side is at most $3M(\pi^*)$.  The result follows.
\end{proof}

\section{Concluding Remarks}
We have shown that the PoA of network congestion game is two, when
the network is a ring and the link latencies are linear.
It is left open whether the PoA is exactly $2^d$
for polynomial latency functions of degree $d$. Another challenging open question
is what happens in more complicated network topologies. It is interesting to see if our proof technique can be extended to the more general class of games where each player can choose between a set of resources and its complement.

\subsubsection*{Acknowledgments.}
This work would not have been started without Xujin Chen and Benjamin Doerr
having been invited to the Sino-German Frontiers of Science Symposium
(Qingdao, 2010) organized by the Humboldt Foundation. A follow-up visit
of Xujin Chen at the MPI in 2010 was also funded by the Humboldt Foundation
via a CONNECT grant. We thank the Humboldt Foundation for providing
both means of support.

This work was also supported in part by NNSF of China under Grant No. 11222109, 11021161 and 10928102,
by 973 Project of China under Grant No. 2011CB80800, by CAS under Grant No. kjcx-yw-s7.

Rob van Stee would like to thank \'Eva Tardos for interesting discussions.

Carola Winzen is a recipient of the Google Europe Fellowship in Randomized Algorithms. This research is supported in part by this Google Fellowship.

\bibliographystyle{amsalpha}
\bibliography{references}
\end{document}